\newcommand{\FA}{\mathbf{A}}   
\newcommand{\FB}{\mathbf{B}}   
\newcommand{\Fu}{\mathbf{u}}     
\newcommand{\Fx}{\mathbf{x}}     
\newcommand{\Fv}{\mathbf{v}}     
\newcommand{\Ff}{\mathbf{f}}     
\newcommand{\FN}{\mathbf{N}}     
\newcommand{\Ft}{\mathbf{t}}     
\newcommand{\Fs}{\mathbf{s}}     
\newcommand{\Fz}{\mathbf{z}}     
\newcommand{\BI}{\mathbb{I}}     
\newcommand{\BT}{\mathbb{T}}
\newcommand{\BM}{\mathbb{M}}
\newcommand{\BA}{\mathbb{A}}
\newcommand{\BB}{\mathbb{B}}
\newcommand{\BR}{\mathds{R}}  
\newcommand{\Chi}{\mathds{1}}
\newcommand{\ES}{\mathcal{W}} 
\newcommand*\TRANS{{\mathpalette\doTRANS\empty}}
\newcommand*\doTRANS[2]{\raisebox{\depth}{$\m@th#1\intercal$}}
\newcommand\MATRIX[1]{\begin{bmatrix} #1 \end{bmatrix}}
\theoremstyle{plain}
\newtheorem{theorem}{Theorem}
\newtheorem{proposition}{Proposition}
\newtheorem{lemma}{Lemma}
\theoremstyle{plain}
\newtheorem{assumption}{Assumption}
\theoremstyle{plain}
\newtheorem{definition}{Definition}
\theoremstyle{definition}
\newtheorem{remark}{Remark}
\theoremstyle{plain}
\newtheorem{strategy}{Strategy}
\begin{document}

\title[Linear Quadratic Graphon Field Games]{Linear Quadratic Graphon Field Games}

\author[S.~Gao, R.~ Foguen~Tchuendom and P.E.~Caines]{Shuang Gao$^\ast$, Rinel~Foguen~Tchuendom$^\ast$ and Peter~E.~Caines$^\ast$\blfootnote{In honor of Professor Tyrone Duncan on the occasion of his 80th birthday. \\
\indent This work is supported  in part by NSERC Grant 2019-05336 (Canada), U.S. ARL and ARO Grant W911NF1910110, and USAF Grant  FA9550-19-1-0138.\\
\indent The authors would like to thank Rabih Salhab and Shujun Liu for helpful suggestions and discussions.}}

\begin{abstract}
 Linear quadratic graphon field games (LQ-GFGs)  are defined to be linear quadratic games  which involve a large number of agents that are weakly coupled via a weighted undirected graph  on which each node represents an agent. The links of the graph correspond to couplings between the agents' dynamics, as well as between  the  individual cost functions, which each agent attempts to minimize. We formulate limit LQ-GFG problems based on the assumption that these graphs lie in a sequence which converges to a limit graphon. First, under a finite-rank assumption on the limit graphon, the existence and uniqueness of solutions to the formulated limit LQ-GFG problem is established. Second, based upon the solutions to the limit LQ-GFG problem,   $\varepsilon$-Nash equilibria  are constructed for the corresponding game problems with a very large but finite number of players. 
This result is then generalized to the case with random initial conditions. It is to be noted that LQ-GFG problems are distinct  from the class of graphon mean field game (GMFG) problems where  a population is hypothesized to be associated with each node of the graph  \cite{PeterMinyiCDC18GMFG,PeterMinyiCDC19GMFG}.
\end{abstract}

\maketitle

\section{Introduction}

Strategic decision problems over very large-scale networks arise in applications such as 5G communication, large-scale social networks, stock market networks, advertising networks, electrical networks and so on. However decision and control problems for such systems require tractable solutions that are of low computational complexity.

 When networks are complete and uniform, the couplings between agents appear in their dynamics and performance functions as mean field terms. In those cases where the population is large, Mean Field Game theory (\cite{HMC06,HCM07,lasry2006jeux1,lasry2006jeux2}) may then be applied  in order to analyse the possible Nash equilibria of the overall system. On the other hand, for a large class of non-uniform networks progress has been made in various directions. Such work includes, for example, mean field games with localities \cite{huang2010nce},  quantilized mean field games (\cite{crisan2014conditional, Tembine, FTR-RM-PC}) and  {Graphon Mean Field Game} theory \cite{PeterMinyiCDC18GMFG,PeterMinyiCDC19GMFG}.

Graphon theory  provides an important framework for the study of very large graphs, convergent sequences of dense graphs and for the construction and analysis of their limit objects (\cite{borgs2008convergent,borgs2012convergent,lovasz2012large}).
 The theory has been used in the analysis of dynamical models such as the heat equation and  coupled oscillators  (\cite{medvedev2014nonlinear,medvedev2014nonlinear2,chiba2019mean}), network centrality \cite{avella2018centrality},  random walks over large dense graphs \cite{petit2019random}, the 
Graphon Control  of dynamical systems coupled over very large-scale networks (\cite{ShuangPeterCDC17,ShuangPeterTAC18,ShuangPeterCDC18,ShuangPeterCDC19W2,ShuangPeterCDC19W1}), and 
static  and dynamic game problems on graphons \cite{parise2018graphon,carmona2019stochastic,PeterMinyiCDC18GMFG,PeterMinyiCDC19GMFG}.

The recently developed Graphon Control theory \cite{ShuangPeterCDC17,ShuangPeterTAC18,ShuangPeterCDC18,ShuangPeterCDC19W2,ShuangPeterCDC19W1}  employs the graphon model  to represent control systems on arbitrary-sized networks. This enables the study of control problems for very large-scale network-coupled dynamical systems and generates low-complexity approximate control solutions to  otherwise intractable problems. The solutions are either centralized solutions \cite{ShuangPeterTAC18} or collaborative solutions \cite{ShuangPeterCDC19W2}. This current work studies the approximate solutions in a  competitive situation.

 {Graphon Mean Field Game} (GMFG) theory was proposed and developed in \cite{PeterMinyiCDC18GMFG,PeterMinyiCDC19GMFG}
wherein a large number of weakly coupled competitive agents are distributed over a large non-uniform graph, and consequently each agent is associated with a nodal mean field.
Within this framework network wide Nash equilibria and $\varepsilon$-Nash results have been established in both  non-linear and linear quadratic cases.   

Mean field games on networks have  appeared in \cite{huang2010nce,gueant2015existence, delarue2017mean}. In \cite{gueant2015existence} the graph is the state space of the mean-field game problem representing physical constraints on the state space. 
While in \cite{delarue2017mean} linear-quadratic mean-field games over Erd\"os-R\'enyi graphs are studied where the associated asymptotic game is a classical mean field game.
These formulations are different from the current work
in  their assumptions concerning their finite and asymptotic features.
We also note that, similar to \cite{delarue2017mean}, in the current work each node represents an agent and this is different from \cite{PeterMinyiCDC18GMFG, PeterMinyiCDC19GMFG} where each node is associated with a population of agents.
However, it is worth mentioning that when the underlying graphons in the current paper are taken to be step function graphons, the problems on networks with nodal populations can be equivalently formulated. 
This work is related to the work in \cite{huang2010nce} where mean-field game problems with non-homogeneous dense network  weightings in the running costs are studied. But it differs from  \cite{huang2010nce} in both the problem formulation and the solution method. A very recent work \cite{lacker2020case} studies mean field games on sparse graphs by exploring transitive properties in graph structures. 

In the current work, we explicitly solve a class of LQ-GFGs with deterministic dynamics and with random initial conditions by exploiting the spectral decomposition of the underlying graphon limit. Furthermore, the corresponding $\varepsilon$-Nash results are established.

\subsection*{Notation}
$\BR$ represents the space of all real numbers. 
$\|\cdot\|_\infty$ denotes the standard infinity norm for matrices and vectors, that is, for any matrix $A \in \BR^{n\times n}$, $\|A\|_\infty \triangleq \max_i\sum_j|a_{ij}|$ and for any vector $v \in \BR^n$, $\|v\|_\infty \triangleq \max_i |v_i|$.  $L^2[0,1]$ denotes the standard Lebesgue space over $[0,1]\subset \BR$ under the $\|\cdot\|_2$-norm defined by $\|\Fv\|_2=(\int_0^1 \Fv(\alpha)^2 d\alpha)^{1/2}$.
 {For any $P \subset [0,1]$, $\mathds{1}_{P}\in L^2[0,1]$ represents the piece-wise constant function with $1$ in $P$ and $0$ elsewhere.}
We use the upper bound big $O$ notation in this paper, that is, for two functions $f$ and $g$ defined on some subsets of real number, $f=O(g)$ means that there exists a positive real number $M$ and a number $x_0$ such that $|f(x)|\leq M g(x)$ for all $x\geq x_0$.

\section{Graphons and Graphon Dynamical Systems}\label{sec:graphon-dynamical-system}

 Graphs can be considered as models for network couplings.  A graph
 $G=(V,E)$ is specified by a node set $V=\{1,...,N\}$ and an edge set $E\subset V\times V$. The corresponding adjacency matrix $A=[a_{ij}]$ is defined as follows: $a_{ij}=1$ if $(i,j) \in E$ otherwise $a_{ij} =0$. A graph is undirected if its edge pair is unordered. 
 For a weighted undirected graph, $a_{ij}$ in its adjacency matrix is given by the weight  between nodes $i$ and $j$. 
 Furthermore an adjacency matrix can be represented as a pixel diagram on the unit square $[0,1]^2 \subset \BR^2$, which corresponds to a graphon step function \cite{lovasz2012large}.

 Graphons are formally defined as symmetric Lebesgue measurable functions $\FA: [0,1]^2\rightarrow [0,1]$. The space of graphons endowed with the \emph{cut metric} \cite{lovasz2012large} allows us to define the convergence of graph sequences. 
 In this paper, we consider symmetric Lebesgue measurable functions $\FA:[0,1]^2 \rightarrow [-c, c]$ with $c>0$, the space of which is denoted by $\ES_c$. The space $\ES_c$ is compact under the cut metric after identifying points of cut distance zero  \cite{lovasz2012large}.
 %
%
 \begin{figure}[htb]
    \centering
    \includegraphics[height=1.8cm]{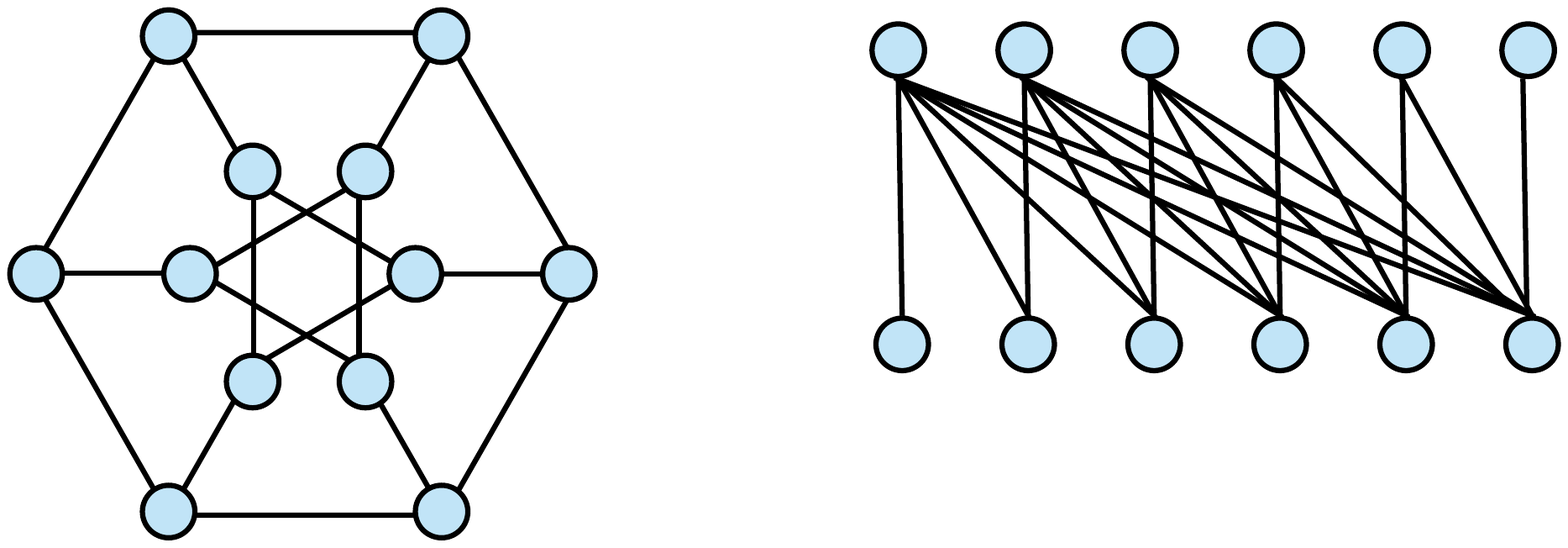}\qquad
    \includegraphics[height=1.8cm]{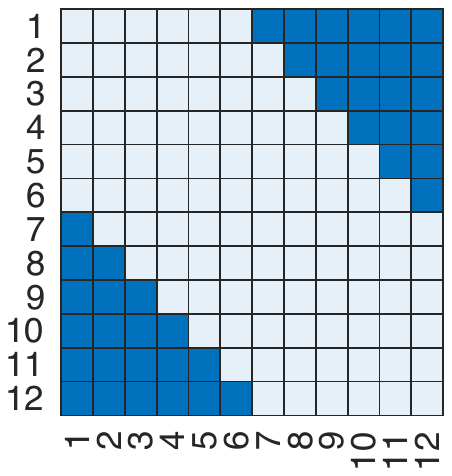}\qquad
    \includegraphics[height =1.8cm]{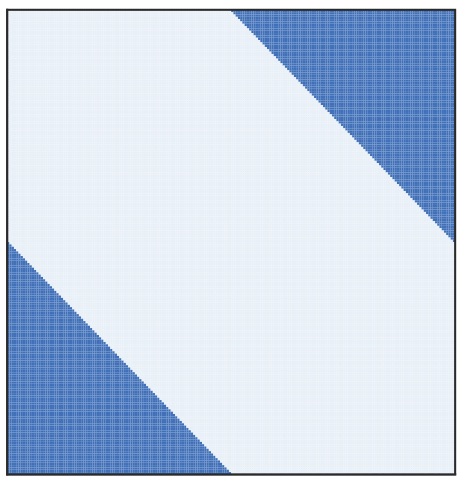}
    \caption{{Half graphs and its limit graphon \cite{lovasz2012large}}}
    \label{fig: converge-in-pixel-pictures}
\end{figure}
A graphon $\FA \in \ES_c$ defines a self-adjoint bounded linear operator from $L^2[0,1]$ to $L^2[0,1]$ as follows:
$$[\FA \Fv](\gamma) = \int_{[0,1]} 
 \FA(\gamma, \eta)\Fv(\eta)d\eta, \quad \forall \gamma \in [0,1],$$
 where $\Fv,  \FA \Fv \in L^2[0,1]$. 

Let 
$L^2([0,T];L^2[0,1])$  
denote the Banach space of equivalence classes of strongly measurable (in the B\"ochner sense \cite[p.103]{showalter2013monotone}) mappings $[0,T] \rightarrow L^2[0,1]$ that are integrable with norm
$$
\begin{aligned}
&\| \Ff \|_{L^2([0,T];L^2[0,1])} =\left(\int_0^T \|\Ff(s)\|^2_2 ds\right)^{{1/2}}
\end{aligned}
$$
The space of continuous functions from $[0,T]$ to a Hilbert space $\mathcal{H}$ is denoted by $C([0,T];\mathcal{H})$.

An infinite dimensional time-dependent graphon linear control system $(\BA_t; \BB_t)$ is formulated as follows: 
\begin{equation} \label{equ: infinite-system-model}
  \begin{aligned}
  &\dot{\Fx}_\Ft= \BA_t \Fx_\Ft + \BB_t \Fu_\Ft,  \quad \mathbf{x_0} \in L^2[0,1],~ t\in[0,T],   \\
  \end{aligned}
\end{equation}
where $\BA_t = \alpha_t\BI +\FA, \BB_t= \beta_t \BI + \FB$ with $\alpha_{(\cdot)}, \beta_{(\cdot)} \in C([0,T]; \BR)$,  $\FA,\FB \in \ES_c$,  $\mathbf{x_t} \in L^2[0,1]$ is the system state at time $t$, and $\mathbf{u}_{(\cdot)} \in L^2([0,T]; L^2[0,1])$ is the control input function over time.
Verifying all the conditions in \cite[Part II, Proposition 3.4, 3.6]{bensoussan2007representation}, we obtain that the system \eqref{equ: infinite-system-model} is well defined and has the unique mild (and strong) solution in $C([0,T];L^2[0,1])$ 
given by 
\begin{equation}
    \Fx_\Ft = \Phi(t,0)\Fx_0 + \int_0^t \Phi(t,\tau) \BB_\tau \Fu_\tau d\tau
\end{equation}
where the evolution operator is given by $$\Phi(t,\tau)\triangleq \text{exp}\left(\int_\tau^t \alpha_s ds \right) \text{exp}((t-\tau)\FA).$$

{The time-varying graphon dynamical system formulation in \eqref{equ: infinite-system-model} will be employed to support the well-posedness of the forward backward equations in \eqref{eq:fixed-point} in Section \ref{sec:solution-deterministic}. }

Readers are referred to \cite{lovasz2012large} for a thorough exposition of graphon theory and to \cite{ShuangPeterTAC18} for a definitive exposition of graphon control systems and their relation to finite network control systems.

\section{Deterministic Linear Quadratic Graphon Field Games}
In this section we introduce the deterministic linear quadratic graphon field game model on a weighted undirected graph,  demonstrate the applications of the spectral decomposition to solve the limit problem, and then establish the $\varepsilon$-Nash property for finite graphon field game problems.

\subsection{Finite Population and Finite Graph Problems}
Consider the following {(time-invariant)} linear  quadratic graphon field game problem on a weighted undirected graph with the dynamics for the $i$th agent given by
\begin{equation}
  \begin{aligned}
  \label{finite-game-dynamics}
   &\dot{x}^i_t= \alpha x_t^i+\beta u_t^i + \eta\frac1N\sum_{j=1}^{N}a_{ij}x^j_t,\\
  & t \in [0,T],~~ \alpha, \beta \in \BR, ~~  x_t^i, u_t^i \in \BR, ~~  i\in\{1,...,N\}, 
  \end{aligned}
 \end{equation} 
 where   $A_N\triangleq[a_{ij}]$ represents the adjacency matrix of the underlying weighted undirected graph and $\{x_0^i\}_{i=1}^N$ are  initial conditions. The  objective of the $i${th} agent is to minimize its cost given by
\begin{equation}
\label{finite-game-cost}
  J^i(u^i, u^{-i}) = \frac12 \int_0^T \Big[\big(x^i_t - \frac1N\sum_{j=1}^{N}a_{ij}x^j_t\big)^2 + r(u_t^i)^2 \Big]dt, \quad 
  \text{with}~ r>0,
\end{equation}
where $u^i \in \mathcal{U} \triangleq L^2([0,T];\BR)$ for all $i\in\{1,...,N\}$.

We next define the \emph{$N$-uniform partition} $\{P_1, \ldots,P_N\}$ of $[0,1]$ as $P_1= [0,\frac1N]$ and $P_k=(\frac{k-1}{N},\frac{k}{N}]$ for $2\leq k \leq N$.  
The step function graphon $\FA^\FN \in \ES_c$ that corresponds to $A_N$ is given by 
\begin{equation}
    \FA^\FN(\vartheta,\varphi) = \sum_{i=1}^{N} \sum_{j=1}^{N} \Chi_{_{P_i}}(\vartheta)\Chi_{_{P_j}}(\varphi)a_{ij},  \quad (\vartheta,\varphi) \in [0,1]^2.
\end{equation}
%
Let the piece-wise constant function $\Fx_t^\FN \in L^2{[0,1]}$ corresponding to $x_t \triangleq [x_0^1, x_0^2, \ldots, x_0^N]^\TRANS \in \BR^N$ be given by 
\begin{equation}\label{eq:pwc}
	\Fx_t^\FN (\vartheta) =\sum_{i=1}^N {\Chi}_{_{P_i}}(\vartheta) x_t^i, \quad \forall \vartheta \in [0,1].
\end{equation}
Similarly, $\Fu^\FN\triangleq \{\Fu^\FN_{t}, t\in [0,T]\}$ can be defined.
Based on the construction procedure, each agent on the finite network is associated with a partition element in the $N$-uniform partition of $[0,1]$.

{
\begin{definition}
Define $\Fz_t^\gamma \triangleq \int_{[0,1]}\FA(\gamma, \alpha) \Fx_t(\alpha) d\alpha$ as the Local Graphon Field for agent $\gamma \in [0,1]$ of a graphon dynamical system at time $t\in[0,T]$, where $\FA$ is the underlying graphon and $(\Fx_t^\gamma)_{\gamma \in [0,1]}$ is the state of the system at time $t \in [0,T]$.
	 The collection of local graphon fields $(\Fz^\gamma_t)_{ \gamma \in[0,1]}$ is then defined as the {Graphon Field} of the system at time $t\in[0,T]$. 
\end{definition}
}


Clearly in an $N$-agent problem, the graphon field is given by the piece-wise constant function $\Fz_t^\FN$ that corresponds to the $N$-dimensional vector $z_t \triangleq \frac1N A_N x_t$ following \eqref{eq:pwc}, for all $t\in [0,T]$, and the local graphon field  affecting the agent indexed by $i$ is given by $\Fz^{\FN \gamma}_t = \int_{[0,1]}\FA^\FN(
\gamma, \alpha) \Fx_t^\FN (\alpha)d\alpha$, with  $\gamma\in P_i \subset [0,1]$ and ${t\in [0,T]}$.

\subsection{Solutions to the Limit Problems}
\label{sec:solution-deterministic}

 Letting the network cardinality go to  infinity, the limiting game may be formulated. Since in the limit the effect of an individual agent on the graphon field becomes negligible, the resulting  minimization problems may be treated as independent linear quadratic tracking problems.

For the limit problem to be well defined, we need the following assumption. 
\begin{assumption}
 \label{ass:limit-assumptions}
~ 
There exist $\FA \in \ES_c$ and $\Fx_0 \in L^2[0,1]$ such that
\[
\textup{(a)} ~~ \lim_{N\rightarrow\infty}\|\FA^\FN -\FA\|_\textup{op} =0, \qquad \quad \textup{(b)}~~ \lim_{N\rightarrow\infty} \|\Fx_0^\FN - \Fx_0\|_2 =0,
\]
where $\|\cdot\|_{\textup{op}}$ denotes the operator norm.
\end{assumption}
The convergence for the underlying graphons can also be defined in the cut norm $\|\cdot\|_{\Box}$ (see e.g. \cite{lovasz2012large}) since the following relation \cite{janson2010graphons,parise2018graphon} holds: 
\begin{equation}
    \|\FA\|_{\Box} \leq \|\FA\|_\textup{op} \leq \sqrt{8\|\FA\|_{\Box}}, \quad \forall \FA \in \ES_1.
\end{equation}
%
%

The dynamics in equation \eqref{finite-game-dynamics} for all agents can be represented by an infinite dimensional system equation as in \eqref{equ: infinite-system-model} where the graph is represented by the corresponding step function graphon and the state and the control are respectively represented by piece-wise constant functions $\Fx^\FN_t$ and $\Fu^\FN_t$ in  $L^2[0,1]$  (see \cite{ShuangPeterTAC18}). Under Assumption \ref{ass:limit-assumptions} and the assumption that $\Fu^\FN$ converges to a limit control $\Fu$ in $C([0,T]; L^2[0,1])$ under the uniform norm, this representation permits a well-defined limit equation  where the convergence of trajectories is in the space $C([0,T]; L^2[0,1])$ under the uniform norm following a slight extension of the result in  \cite[Theorem 7]{ShuangPeterTAC18}.  
Then for almost all $\gamma \in [0,1]$, we can and shall write the evolution of the $\gamma $ component system as in \eqref{limit-game-dynamics} below where  the family of local mean fields satisfies \eqref{gmfg-consistency-condition} below.

 A \emph{Nash equilibrium} for the infinite population LQ Graphon Field Game associated with a limit of the system \eqref{finite-game-dynamics} and individual performance functions \eqref{finite-game-cost}  is characterized as follows:
\begin{enumerate}[(\bf C1)]
\item \emph{Best Response}\\
For a given local graphon field trajectory $(\Fz_t^\gamma )_{t\in [0,T]}$, let the best response  $\Fu^{\gamma,*}$  for the  agent indexed by $\gamma  \in[0,1]$  be given 
by the solution to the linear tracking problem with the controlled linear dynamics
\begin{equation}
\label{limit-game-dynamics}
  \begin{aligned}
   &\dot{\Fx}^\gamma_t= \alpha \Fx_t^\gamma+ \beta \Fu_t^\gamma + \eta \Fz^\gamma_t  ,\quad  \Fx_0^\gamma= x_0^\gamma, \\
  & t \in [0,T],\quad \alpha, \beta \in \BR,  \quad \Fx^\gamma_t, \Fu^\gamma_t \in \BR,
  \end{aligned}
 \end{equation} 
for almost all $\gamma \in [0,1]$, with the performance  function
\begin{equation}
\label{limit-game-cost}
  J(\Fu^\gamma, \Fz^{\gamma}) =\frac12 \int_0^T \Big((\Fx^\gamma_t - \Fz^\gamma_t)^2 + r(\Fu_t^\gamma)^2 \Big) dt,
\end{equation}
where  
$\Fu^{\gamma,*} = \arg \inf_{\Fu^\gamma \in \mathcal{U}} J(\Fu^{\gamma}, \Fz^{\gamma}).$
%
\item \emph{Consistency or Equilibrium Condition} \\
Given state trajectories $( \Fx^{\gamma,*,\Fz}_t)_{t\in [0,T]}$ under the  best response control $\Fu^{\gamma, *}$,  the local graphon field trajectories $(\Fz^\gamma_t)_{t\in [0,T]}$   satisfy the consistency conditions: 
\begin{equation}
\label{gmfg-consistency-condition}
  \forall t \in [0,T], \quad   \Fz_t^\gamma = [\FA \Fx^{*,\Fz}_t]^\gamma
\end{equation}
for almost every $\gamma \in  [0,1]$ where $\Fx^{*,\Fz}_t \triangleq ( \Fx^{\gamma,*,\Fz}_t)_{\gamma \in [0,1]}$. 

\end{enumerate}

 We note that a related notion of a local mean field appears in \cite{PeterMinyiCDC18GMFG, PeterMinyiCDC19GMFG}.


  A procedure to find an equilibrium solution to the infinite population infinite graphon problem presented in \textbf{(C1)} and \textbf{(C2)} above is as follows. 
\subsubsection*{Best Response}

The best response solution  to the above linear quadratic tracking problem  is given by  (see e.g. \cite{lewis2012optimal}) 
  \begin{align}
      \Fu_t^\gamma &= -\frac{\beta}{r} \pi_t \Fx_t^\gamma+ \frac{\beta}{r} \Fs_t^\gamma,\label{eq:control}\\
      -\dot{\pi}_t &= 2\alpha \pi_t -\frac{\beta^2}{r}\pi^2_t + 1, \quad \pi_T = 0, \label{eq:P-evo}\\
      -\dot{\Fs}_t^\gamma & = \Big(\alpha - \frac{\beta^2}{r} \pi_t \Big) \Fs_t^\gamma +  (1- \eta \pi_t)\Fz_t^\gamma,\quad  \Fs_T^\gamma = 0, \label{eq:s-evo}
  \end{align}
  for almost all $\gamma \in [0,1]$.
The stars in the notations   are being dropped henceforth. 
%
 Consequently the closed loop state equation for  the agent indexed by $\gamma$ is given by
\begin{equation}
        \dot{\Fx}_t^\gamma = \Big( \alpha - \frac{\beta^2}{r} \pi_t \Big)\Fx^\gamma_t  + \eta \Fz^\gamma_t + \frac{\beta^2}{r} \Fs^\gamma_t , \hspace{2mm} \Fx_0^\gamma = x_0^\gamma, \label{eq:x-evo} \\
\end{equation}
  for almost all $\gamma \in [0,1]$.

\subsubsection*{Consistency or Equilibrium Condition}

In the space  $L^2([0,T],L^2[0,1])$,  we search for the \emph{Graphon Field} $\Fz$ and the off-set term $\Fs$ that ensure the consistency condition holds.
We now invoke {Assumption \ref{ass:limit-assumptions}} and observe that an application of $\FA$ on each side of \eqref{eq:x-evo} yields  that the consistency condition \eqref{gmfg-consistency-condition} is equivalent to the existence of a unique solution to 
the following infinite dimensional ordinary differential equations over $[0,T]$,
\begin{equation}\label{eq:fixed-point}
 \begin{aligned}
    \dot{\Fz}_t &= \Big(\alpha-\frac{\beta^2}{r}\pi_t\Big)\Fz_t+ \eta \FA \Fz_t +  \frac{\beta^2}{r} \FA \Fs_t, \quad  \Fz_0= \FA \Fx_0,\\
    -\dot{\Fs}_t &=\Big(\alpha-\frac{\beta^2}{r}\pi_t\Big)\Fs_t + (1-\eta \pi_t)\Fz_t, \quad \Fs_T = 0.
 \end{aligned} 
\end{equation}
Each of the two equations above given the solution to the other is well defined following {the well-posedness of \eqref{equ: infinite-system-model}. }
%
 Assumption \ref{ass:limit-assumptions} ensures that the solution to each of the equations in \eqref{eq:fixed-point} is the limit of solutions  in the  $L^2([0,T],L^2[0,1])$ sense to the sequences of equations corresponding to the finite network problems, when all solutions exist.  Based on Section \ref{sec:graphon-dynamical-system}, if the solutions $\Fz$ and $\Fs$ exist in $L^2([0,T];L^2[0,1])$, they must also lie in $C([0,T]; L^2[0,1])$.

Next, we derive the sufficient conditions for the existence and uniqueness of the solutions to the joint forward backward problem in \eqref{eq:fixed-point}.   
\begin{assumption}\label{ass:spectral-graphon}
  The graphon limit admits a finite spectral representation $\FA = \sum_{\ell=1}^d  \lambda_\ell \Ff_\ell \Ff_\ell^\TRANS$, where $\{\lambda_\ell\}_{\ell=1}^d$ is the set of all the non-zero eigenvalues and $\{\Ff_\ell\}_{\ell=1}^d$ is corresponding set of orthonormal eigenfunctions.
\end{assumption}
{The graphon $\FA(x,y)=1$ for all $x,y \in [0,1]$ is a trivial low-rank graphon example which corresponds to the standard uniform mean field coupling, and its rank is just $1$. In general, as an operator any graphon is compact and eigenvalues accumulate at zero \cite{lovasz2012large}. Thus the above assumption corresponds to a reasonable approximation. }

 Under this assumption,  
$ \Fz_t^\gamma= [\FA \Fx_t]^\gamma  = \sum_{\ell=1}^{d}\lambda_\ell \langle\Ff_\ell , \Fx_t \rangle\Ff_\ell(\gamma),$
in the $L^2[0,1]$ sense and from \eqref{eq:fixed-point}
in the eigendirection $\Ff_\ell$, $\ell \in \{1,...,d\}$, we obtain
\begin{equation}\label{eq:fixed-point-eigen}
  \begin{aligned}
    \dot{z}^\ell_t &= \Big(\alpha-\frac{\beta^2}{r}\pi_t + \eta \lambda_\ell \Big)z^\ell_t + \frac{\beta^2}{r} \lambda_\ell s^\ell_t,  \quad 
     z_0^\ell = \langle \Fz_0, \Ff_\ell\rangle = \lambda_\ell \langle \Fx_0, \Ff_\ell\rangle ,\\
    \dot{s}^\ell_t&=-\Big(\alpha-\frac{\beta^2}{r}\pi_t\Big)s^\ell_t - (1-\eta \pi_t)z^\ell_t,  \quad  s^\ell_T = 0,
  \end{aligned}
\end{equation}
over the interval $[0,T]$, where $z^\ell_t = \langle \Fz_t, \Ff_\ell\rangle$ and  $ s^\ell_t = \langle \Fs_t, \Ff_\ell\rangle$.

From \eqref{eq:fixed-point}, it may be verified that the projections of $\Fz_t$ and $\Fs_t$ into the subspace $\mathds{S}^\perp$ (that is, the complementary subspace orthogonal to $\mathds{S}\triangleq\text{span}\{\Ff_1,...,\Ff_d\}$) are zero for all $t\in[0,T]$. Therefore, for all $t\in[0,T]$,
\begin{equation}
  \Fz_t = \sum_{\ell=1}^d z_t^\ell \Ff_\ell \qquad \text{and} \qquad \Fs_t = \sum_{\ell=1}^d s_t^\ell \Ff_\ell .
\end{equation}

Following \cite{huang2012social,bensoussan2016linear,salhab2016collective}, we associate the solvability of problems \eqref{eq:fixed-point-eigen} to the solvability of Riccati equations.

\begin{assumption}\label{ass:Riccati-Sol-Existence}
For any $\lambda_\ell, ~\ell  \in \{1,\ldots,d\}$, there exists a solution to the Riccati equation
\begin{align}\label{equ:Riccati-Equation} 
    - \dot{\Pi}_t^\ell &=   \Big[2\big(\alpha-\frac{\beta^2}{r}\pi_t\big)+\eta \lambda_\ell\Big] \Pi_t^\ell + \frac{\beta^2}{r} \lambda_\ell (\Pi_t^\ell)^2 +(1-\eta \pi_t), 
    ~\Pi_T^\ell &= 0 ,  
\end{align}
over the interval $[0,T]$, where  $\pi_{(\cdot)}$ is the solution to the Riccati equation in \eqref{eq:P-evo}.
\end{assumption}
Note that finite escape time may appear for the solutions to the Riccati equation above  depending on the parameters and the time horizon, in particular when  $\lambda_\ell >0$.  

Under {Assumption \ref{ass:Riccati-Sol-Existence}}, for each $\lambda_\ell$, the solution to the Riccati equation \eqref{equ:Riccati-Equation} is unique due to the smoothness of the right-hand side with respect to $\Pi^\ell_t$ (see \cite{salhab2016collective}).
Let $q_t^\ell\triangleq \Pi_t^\ell z^\ell_t$.   Then
\begin{equation}  
  \frac{d\big({q}_t^\ell - {s}_t^\ell\big)}{dt} =  - \left[\Big(\alpha-\frac{\beta^2}{r}\pi_t\Big) +\frac{\beta^2}{r}\lambda_\ell\Pi_t^\ell \right](q_t^\ell- {s}_t^\ell)
\end{equation}
with the terminal condition $(q^\ell_T - s^\ell_T) =0$. Solving this ordinary differential equation (ODE) allows us to conclude that $s_t^\ell\triangleq\Pi_t^\ell z_t^\ell$ for all $t\in[0,T]$ (see also \cite{salhab2016collective}). 
Replacing $s_t^\ell$ in the forward equation of \eqref{eq:fixed-point-eigen} by $\Pi_t^\ell z_t^\ell$, we obtain
\begin{equation}
      \dot{z}^\ell_t = \left[\alpha + \frac{\beta^2}{r} (\Pi_t^\ell \lambda_\ell - \pi_t) + \eta \lambda_\ell \right]z^\ell_t,  \quad  z_0^\ell  = \lambda_\ell \langle \Fx_0, \Ff_\ell\rangle.\\
\end{equation}

An alternative approach to establish the sufficient condition for the existence of a unique solution to the problem \eqref{eq:fixed-point} is given below where the counterpart of {Assumption \ref{ass:Riccati-Sol-Existence}} is a contraction condition in \eqref{eq:all-contraction}. The relation between the two sufficient conditions shall be analyzed in future work.

\begin{proposition}[Appendix \ref{sec:PropositionProof}]
\label{prop:one}
  Under {Assumption \ref{ass:spectral-graphon}}, the two-point boundary value problem \eqref{eq:fixed-point} has a unique solution if the following condition holds for all $\ell \in \{1,\ldots,d\}$: 
  \begin{equation}\label{eq:all-contraction}
  \int_0^T \frac{\beta^2}{r}\frac{ |\lambda_\ell|}{B_\ell(\tau)} \int_\tau^T |(1-\eta \pi_s)|B_\ell(s)ds  d\tau <1
\end{equation}
where $B_\ell(t)= \textup{exp}\left[{\int_0^t\Big(\alpha-\frac{\beta^2}{r}\pi_\tau+\eta\lambda_\ell\Big)d\tau-\int_t^T\Big(\alpha-\frac{\beta^2}{r}\pi_\tau\Big)d\tau}\right]$
 and $\pi_{(\cdot)}$ is the solution to the Riccati equation in \eqref{eq:P-evo}. 
\end{proposition}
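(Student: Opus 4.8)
The plan is to use the finite spectral representation of Assumption~\ref{ass:spectral-graphon} to decouple the infinite-dimensional two-point boundary value problem~\eqref{eq:fixed-point} into $d$ scalar two-point boundary value problems, one per eigendirection, namely the systems~\eqref{eq:fixed-point-eigen}, and then to rewrite each of those as an affine fixed-point equation on the Banach space $C([0,T];\BR)$ whose linear part is a contraction exactly when the condition~\eqref{eq:all-contraction} holds.

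First I would record the reduction. Writing $\Fz_t = \sum_{\ell=1}^d z_t^\ell \Ff_\ell + \Fz_t^\perp$ and $\Fs_t = \sum_{\ell=1}^d s_t^\ell \Ff_\ell + \Fs_t^\perp$ with $\Fz_t^\perp,\Fs_t^\perp\in\mathds{S}^\perp$: since $\FA$ is self-adjoint with range $\mathds{S}$, projecting~\eqref{eq:fixed-point} onto $\mathds{S}^\perp$ leaves a homogeneous linear forward ODE for $\Fz^\perp$ with $\Fz^\perp_0=0$ and a homogeneous linear backward ODE for $\Fs^\perp$ with $\Fs^\perp_T=0$, forcing $\Fz^\perp\equiv\Fs^\perp\equiv 0$; projecting onto each $\Ff_\ell$ and using $\langle\Ff_\ell,\FA\Fv\rangle=\lambda_\ell\langle\Ff_\ell,\Fv\rangle$ reproduces~\eqref{eq:fixed-point-eigen}, and the $d$ systems so obtained are mutually decoupled. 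I would also note that $\pi_{(\cdot)}$ solving~\eqref{eq:P-evo} is the standard scalar LQ Riccati solution, hence continuous and bounded on $[0,T]$; consequently all coefficients in~\eqref{eq:fixed-point-eigen} are continuous, $B_\ell$ is continuous and bounded between two positive constants, and every integral in the statement is finite.

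Next, for fixed $\ell$, I would remove the diagonal terms of~\eqref{eq:fixed-point-eigen} by the substitutions $z_t^\ell=\exp\!\big(\int_0^t(\alpha-\frac{\beta^2}{r}\pi_\tau+\eta\lambda_\ell)\,d\tau\big)w_t$ and $s_t^\ell=\exp\!\big(\int_t^T(\alpha-\frac{\beta^2}{r}\pi_\tau)\,d\tau\big)v_t$. A direct differentiation turns~\eqref{eq:fixed-point-eigen} into $\dot w_t=\frac{\beta^2\lambda_\ell}{r\,B_\ell(t)}v_t$ with $w_0=\lambda_\ell\langle\Fx_0,\Ff_\ell\rangle$, and $\dot v_t=-(1-\eta\pi_t)B_\ell(t)\,w_t$ with $v_T=0$, the two exponents recombining into $B_\ell(t)$ and $1/B_\ell(t)$ precisely by the definition of $B_\ell$. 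Integrating the $v$-equation backward from $T$ and inserting into the integrated $w$-equation yields the scalar integral equation $w=w_0\,\boldone+\mathcal{K}_\ell w$ on $C([0,T];\BR)$, where $(\mathcal{K}_\ell w)_t=\int_0^t\frac{\beta^2\lambda_\ell}{r\,B_\ell(\tau)}\int_\tau^T(1-\eta\pi_s)B_\ell(s)\,w_s\,ds\,d\tau$; this identity is equivalent to the $(w,v)$ boundary value problem and hence, after undoing the substitutions, to~\eqref{eq:fixed-point-eigen}.

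Finally, since $\big|(\mathcal{K}_\ell w)_t\big|\le \|w\|_\infty\int_0^T\frac{\beta^2|\lambda_\ell|}{r\,B_\ell(\tau)}\int_\tau^T|1-\eta\pi_s|B_\ell(s)\,ds\,d\tau$ for every $t$ (using $B_\ell>0$ and that $\int_0^t$ of a nonnegative function is largest at $t=T$, with $\|\cdot\|_\infty$ the sup norm on $[0,T]$), the linear operator $\mathcal{K}_\ell$ has operator norm at most the left-hand side of~\eqref{eq:all-contraction}, which is $<1$; so $w\mapsto w_0\boldone+\mathcal{K}_\ell w$ is a contraction on $C([0,T];\BR)$, its unique fixed point is automatically $C^1$, and reversing the substitutions gives the unique $(z^\ell,s^\ell)$ solving~\eqref{eq:fixed-point-eigen}. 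Assembling over $\ell$ together with $\Fz^\perp\equiv\Fs^\perp\equiv 0$ produces the unique solution $(\Fz,\Fs)$ of~\eqref{eq:fixed-point}, which lies in $C([0,T];L^2[0,1])$ as a finite combination of continuous scalar trajectories with fixed eigenfunctions. I expect the only delicate point to be the bookkeeping in the substitution step: the two integrating factors must be chosen so that their exponents collapse to exactly $B_\ell$ and $1/B_\ell$ in the right places, which is what makes the contraction constant literally the left-hand side of~\eqref{eq:all-contraction} rather than a cruder bound.
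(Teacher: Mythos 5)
Your proposal is correct and follows essentially the same route as the paper's proof: the same change of variables removing the diagonal terms (your $(w,v)$ are the paper's $(\tilde z^\ell,\tilde s^\ell)$), the same composition of the forward and backward integral operators into an affine map on $C([0,T];\BR)$, and the same sup-norm contraction estimate whose constant is exactly the left-hand side of \eqref{eq:all-contraction}. The only extra content is that you spell out the spectral reduction of \eqref{eq:fixed-point} to \eqref{eq:fixed-point-eigen} and the vanishing of the $\mathds{S}^\perp$ components, which the paper establishes in the main text before the proposition.
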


 The eigenvalues represent the amplitude of the network influences, which relate directly to the sufficient condition \eqref{eq:all-contraction} for the existence of a unique solution to the fixed-point equation \eqref{eq:fixed-point}.

\begin{theorem}[\textbf{Finite-Rank LQ-GFG Equations for Limit Problems}]
{
Under {Assumptions \ref{ass:limit-assumptions}, \ref{ass:spectral-graphon}} $\&$ {\ref{ass:Riccati-Sol-Existence}}, the equilibrium solution to the limit graphon field game problem is explicitly given by 
\begin{equation}\label{equ:mfg-solution}
      \Fu_t^\gamma = -\frac{\beta}{r} \pi_t \Fx_t^\gamma+ \frac{\beta}{r} \sum_{\ell=1}^d s_t^\ell \Ff_\ell(\gamma), \quad \text{for almost all}~ \gamma \in [0,1],\\
\end{equation}
where for all $\ell \in\{1,...,d\}$, $t \in [0,T]$, 
\begin{align}
s_t^\ell &= \Pi_t^\ell z_t^\ell, \label{eq:s-ell}\\
    -\dot{\pi}_t &= 2\alpha \pi_t -\frac{\beta^2}{r}\pi^2_t + 1, \quad \pi_T = 0,\label{eq:riccati-1}\\
    - \dot{\Pi}_t^\ell &=  \Big[2\big(\alpha-\frac{\beta^2}{r}\pi_t\big)-\eta \lambda_\ell\Big] \Pi_t^\ell + \frac{\beta^2}{r} \lambda_\ell (\Pi_t^\ell)^2  +(1-\eta \pi_t), \quad \Pi_T^\ell  = 0,\label{eq:riccati-2}  \\
    \dot{z}^\ell_t &= \Big[\alpha + \frac{\beta^2}{r} (\Pi_t^\ell \lambda_\ell - \pi_t) +\eta \lambda_\ell \Big]z^\ell_t,\quad z_0^\ell  = \lambda_\ell \langle \Fx_0, \Ff_\ell\rangle.  \label{eq:eigen-mean-state}
\end{align}
}
\end{theorem}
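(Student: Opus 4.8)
The plan is to assemble the equilibrium characterization in three stages, following the route already set up in \textbf{(C1)}--\textbf{(C2)} and the spectral reduction that precedes the statement: (i) solve the scalar best-response tracking problem to obtain the control law and the forward-backward pair in $(\Fx^\gamma,\Fs^\gamma)$; (ii) impose the consistency condition, apply $\FA$ to the closed loop, and reduce the resulting infinite-dimensional two-point boundary value problem to finitely many scalar ODEs using Assumption~\ref{ass:spectral-graphon}; (iii) decouple each scalar forward-backward pair through the Riccati variable $\Pi^\ell$ and invoke Assumption~\ref{ass:Riccati-Sol-Existence} for global solvability. Assumption~\ref{ass:limit-assumptions} is used throughout to ensure the limit data $\FA$ and $\Fx_0$ (hence $\Fz_0=\FA\Fx_0$) are well defined and that the constructed trajectories live in $C([0,T];L^2[0,1])$, as guaranteed by the well-posedness of \eqref{equ: infinite-system-model} established in Section~\ref{sec:graphon-dynamical-system}.

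For step~(i): fixing a local graphon field trajectory $(\Fz^\gamma_t)_{t\in[0,T]}$, the cost \eqref{limit-game-cost} with dynamics \eqref{limit-game-dynamics} is a standard finite-horizon scalar linear-quadratic tracking problem, so the classical theory (e.g.\ \cite{lewis2012optimal}) yields the optimal feedback \eqref{eq:control}, the Riccati equation \eqref{eq:P-evo} for $\pi_t$, and the affine offset ODE \eqref{eq:s-evo} for $\Fs^\gamma_t$; equation \eqref{eq:P-evo} has a solution on all of $[0,T]$ because it is the ordinary LQR Riccati equation with non-negative coefficients. Substituting \eqref{eq:control} into \eqref{limit-game-dynamics} gives the closed-loop state equation \eqref{eq:x-evo}, and \eqref{equ:mfg-solution} is then precisely \eqref{eq:control} once the identification $\Fs^\gamma_t=\sum_{\ell=1}^d s_t^\ell\Ff_\ell(\gamma)$ has been made.

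For steps~(ii)--(iii): applying the operator $\FA$ to both sides of \eqref{eq:x-evo} and substituting the consistency requirement $\Fz_t=\FA\Fx_t$ turns \textbf{(C2)} into the forward-backward system \eqref{eq:fixed-point}, each equation of which is well posed given the other by the argument of Section~\ref{sec:graphon-dynamical-system}. Under Assumption~\ref{ass:spectral-graphon}, one reads off \eqref{eq:fixed-point} that the $\mathds{S}^\perp$-components of $\Fz_t,\Fs_t$ vanish for all $t$, writes $\Fz_t=\sum_{\ell=1}^d z_t^\ell\Ff_\ell$ and $\Fs_t=\sum_{\ell=1}^d s_t^\ell\Ff_\ell$, and uses $\FA\Ff_\ell=\lambda_\ell\Ff_\ell$ to reduce \eqref{eq:fixed-point} to the $d$ decoupled scalar pairs \eqref{eq:fixed-point-eigen} with data $z_0^\ell=\lambda_\ell\langle\Fx_0,\Ff_\ell\rangle$. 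Setting $q_t^\ell=\Pi_t^\ell z_t^\ell$, where $\Pi^\ell$ solves the Riccati equation \eqref{equ:Riccati-Equation} -- which exists on all of $[0,T]$ precisely by Assumption~\ref{ass:Riccati-Sol-Existence} and is unique by smoothness of its right-hand side -- a direct differentiation using \eqref{eq:fixed-point-eigen} and \eqref{equ:Riccati-Equation} shows that $q_t^\ell-s_t^\ell$ satisfies a homogeneous linear ODE with zero terminal value, hence $s_t^\ell=\Pi_t^\ell z_t^\ell$ on $[0,T]$, which is \eqref{eq:s-ell}. Substituting this back into the forward equation of \eqref{eq:fixed-point-eigen} produces the closed linear ODE \eqref{eq:eigen-mean-state} for $z_t^\ell$, which has a unique solution on $[0,T]$; this determines $s^\ell$, $\Fs_t$, $\Fz_t$, and a substitution back into \eqref{eq:fixed-point} verifies that the pair $(\Fz,\Fs)$ solves it, so that \textbf{(C1)}--\textbf{(C2)} hold and \eqref{equ:mfg-solution}--\eqref{eq:eigen-mean-state} follow. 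Uniqueness holds because any solution of \eqref{eq:fixed-point} projects onto a solution of \eqref{eq:fixed-point-eigen}, for which the decoupling argument forces $s^\ell=\Pi^\ell z^\ell$ and hence the unique $z^\ell$ constructed above.

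The main obstacle I anticipate is not any single computation but the careful bookkeeping around well-posedness and the logical equivalences: checking that solving the reduced eigen-system \eqref{eq:fixed-point-eigen} is genuinely equivalent in both directions to the consistency condition \eqref{gmfg-consistency-condition}; that Assumption~\ref{ass:Riccati-Sol-Existence} is exactly what precludes finite escape time for $\Pi^\ell$, which can otherwise occur when $\lambda_\ell>0$; and that every trajectory constructed in the spectral coordinates reassembles to an element of $C([0,T];L^2[0,1])$, which is where Assumption~\ref{ass:limit-assumptions} and the results of Section~\ref{sec:graphon-dynamical-system} are invoked.
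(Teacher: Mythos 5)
Your proposal is correct and follows essentially the same route as the paper's own derivation: the scalar LQ-tracking best response \eqref{eq:control}--\eqref{eq:s-evo}, application of $\FA$ and the consistency condition to obtain \eqref{eq:fixed-point}, spectral reduction to \eqref{eq:fixed-point-eigen} with vanishing $\mathds{S}^\perp$-components under Assumption \ref{ass:spectral-graphon}, and the decoupling $s_t^\ell=\Pi_t^\ell z_t^\ell$ via the homogeneous terminal-value ODE for $q_t^\ell-s_t^\ell$ under Assumption \ref{ass:Riccati-Sol-Existence}, followed by substitution to get \eqref{eq:eigen-mean-state}. One minor remark: you work (correctly) with the Riccati equation \eqref{equ:Riccati-Equation}, whose coefficient is $+\eta\lambda_\ell$ as the derivation requires; the $-\eta\lambda_\ell$ appearing in \eqref{eq:riccati-2} of the theorem statement is an inconsistency in the paper's statement, not a flaw in your argument.
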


Sequentially solving \eqref{eq:riccati-1}, \eqref{eq:riccati-2} and \eqref{eq:eigen-mean-state} yields the offest term as in \eqref{eq:s-ell}. Thus this procedure provides an explicit hierarchical  decoupling (from $\pi_{(\cdot)}$ to $\Pi^\ell$ to $z^\ell$) of the joint equations in \eqref{eq:fixed-point-eigen}. 

\begin{remark}
The initial condition for $z_0^\ell$ depends on the labeling of the network, since it is given by $z_0^\ell  = \lambda_\ell \langle \Fx_0, \Ff_\ell\rangle.$  This means that $s^\ell$ and hence the best response  depend on the labeling. 
Therefore, the labeling should be fixed in the first step to generate the best response law. 
\end{remark}
\begin{remark}
Although the rank of the underlying graphon limit is assumed to be finite, the limit graphon field game problem still involves an infinite number of agents.  

It it worth mentioning that  any finite graph can be represented by a step function graphon (which is a special case of finite-rank graphons)  and hence any finite agent problem  can be reformulated in an infinite dimensional space based on graphons and $L^2[0,1]$ functions. However, the exact solution cannot be given by the Finite-Rank Graphon Field Game Equations  \eqref{equ:mfg-solution}, \eqref{eq:s-ell},  \eqref{eq:riccati-1}, \eqref{eq:riccati-2} and \eqref{eq:eigen-mean-state} (with a simple replacement of the graphon limit by a step function), since in this case each individual is no longer negligible to the evolution of the graphon field. One needs to differentiate a limit graphon which happens to be a step function and a finite network step function based on the number of agents in the game problem.
\end{remark}

\subsection{$\varepsilon$-Nash Property for Finite  Problems}
In this section, $\varepsilon$-Nash equilibrium is constructed from the limit LQ-GFG solution for the corresponding large (but finite) population games.

\begin{definition} An $N$-tuple of strategies $(u^{1},..., u^{N})$ generates an $\varepsilon$-Nash equilibrium ($\varepsilon>0$) if  the following holds
 $$J({u}^{i} , u^{- i}) \leq \inf_{v^i\in \mathcal{U}}J(v^i, u^{- i})+ \varepsilon$$ 
 for each $i \in \{1,...,N\}$,
 where $\mathcal{U} \triangleq L^2([0,T], \BR)$,  $J({u}^{ i},  u^{- i})$ denotes the cost for agent $i$ when all the agents employ the strategies in $(u^{ 1},..., u^{ N})$, and $J(v^i, u^{- i})$ denotes the cost for agent $i$ when it deviates unilaterally by taking response law $v^i$.
\end{definition}

Let $\Fx_0^\FN$ be the piece-wise constant function with the $N$-uniform partition of $[0,1]$ corresponding to $x_0 = [x_0^1, \ldots,x_0^N]^\TRANS$. 

\begin{strategy}[Finite Problem Strategies: Deterministic Case] \label{strategy-deterministic}
 Let the $N$-tuple $({u}^{o 1},..., u^{oN})$ of strategies be
  constructed as follows:
%
for any agent $i \in\{1,...,N\}$,
\begin{equation}
\label{eq:Nash-Control-Case12}
\begin{aligned}
  {u}^{oi}_t 
 &=- \frac{\beta}{r} \pi_t x_t^{oi}+ \frac{\beta}{r} \bar \Fs_t^i \\
   \bar \Fs_t^i & \triangleq  \frac1{\mu{(P_i)}}\int_{P_i}  \Fs^\gamma d\gamma = \sum_{\ell=1}^d s_t^\ell \frac1{\mu{(P_i)}}\int_{P_i}  \Ff_\ell(\gamma) d\gamma,
    \end{aligned}
    \end{equation} 
 where $\pi_{(\cdot)} $ and $\{s_t^\ell\}_{\ell=1}^d$ are generated from the limit LQ-GFG solutions \eqref{eq:s-ell}, \eqref{eq:riccati-1}, \eqref{eq:riccati-2} and \eqref{eq:eigen-mean-state},  the initial conditions for \eqref{eq:eigen-mean-state} are given by $z^\ell_0 = \lambda_0\langle \Fx_0^{\FN}, \Ff_\ell\rangle$, $\ell \in \{1,\ldots,d\}$, $x_t^{oi}$ denotes the state of agent $i$ at time $t$, and $\mu{(P_i)}$ denotes the size of $P_i$ (which is $1/N$ for the case with the $N$-uniform partition).
 \end{strategy}

We now present sufficient conditions under which the $N$-tuple of strategies $( {u}^{o i})_{i= 1}^N$ indeed generates an $\varepsilon$-Nash equilibrium, for the corresponding large (but finite) population games.

\begin{assumption} \label{ass:a_ii=0}
For all $i, j \in \{1,2,\ldots\}$,  $a_{ii} =0$, and there exists $c> 0$ such that $|a_{ij}|\leq c$.

\end{assumption}

Let $\FA^\FN$ be the corresponding step function of the $N\times N$ adjacency matrix $A_N=[a_{ij}]$ of the underlying graph, $1\leq i,j\leq N.$  
\begin{assumption}\label{ass:net-convegence-revised}
The sequence $\{\FA^\FN\}_{N=1}^\infty$ and its limit graphon $\FA$ satisfy
\begin{equation}
  \max_{i\in 1,\ldots,N} \frac{1}{\mu(P_i)} \left\|(\FA-\FA^\FN)\mathds{1}_{P_i} \right\|_2 \rightarrow 0,\quad \text{ as } N\rightarrow \infty,
\end{equation}
where $\{P_1,\ldots P_N\}$ forms an $N$-uniform partition of $[0,1]$.
\end{assumption} 
 We now present  the $\varepsilon$-Nash property  for the finite problem which is established in Appendix \ref{sec: proof-deterministic-LQ-GFG} following the procedure in \cite{huang2005nash}. 

\begin{theorem}[Appendix \ref{sec: proof-deterministic-LQ-GFG}]
\label{thm:deterministic-LQ-GFG} 
Under {Assumptions \ref{ass:limit-assumptions}, \ref{ass:spectral-graphon}, \ref{ass:Riccati-Sol-Existence} 
$\&$  \ref{ass:a_ii=0}}, 
the following holds for any agent $i \in \{1,..., N\}$ 
\begin{equation}
\begin{aligned}
  &J(u^{oi}, u^{-oi}) - \inf_{u^i\in \mathcal{U}}J(u^{i}, u^{-oi}) 
  =  \max  \left\{O(E_N),O(E_N^2)\right\}
\end{aligned}
\end{equation}
where $$E_N\triangleq\max_{i\in \{1,\ldots,N\}} \frac{1}{\mu(P_i)} \left\|(\FA-\FA^\FN)\mathds{1}_{P_i} \right\|_2;$$
furthermore, if {Assumption \ref{ass:net-convegence-revised}} also holds, then for any $\varepsilon>0$ there exists $N_0$ such that for any $N> N_0$ the following holds
\begin{equation}
  J(u^{oi}, u^{-oi}) \leq  \inf_{u^\in \mathcal{U}}J(u^{i}, u^{-oi}) + \varepsilon,
\end{equation}
that is $(u^{o1},..., u^{oN})$ generates an $\varepsilon$-Nash equilibrium for $N>N_0$.
\end{theorem}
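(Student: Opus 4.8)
The plan is to follow the classical $\varepsilon$-Nash scheme of \cite{huang2005nash}, comparing the finite closed-loop system under Strategy \ref{strategy-deterministic} with the limit equilibrium \eqref{equ:mfg-solution}--\eqref{eq:eigen-mean-state}, the latter taken with the \emph{finite} initial data $z_0^\ell=\lambda_\ell\langle\Fx_0^\FN,\Ff_\ell\rangle$ prescribed in Strategy \ref{strategy-deterministic}; this eliminates any dependence on $\|\Fx_0^\FN-\Fx_0\|_2$ and leaves $E_N$ as the sole small parameter. Write $\Fs_t=\sum_{\ell=1}^d s_t^\ell\Ff_\ell$, $\Fz_t=\FA\Fx_t=\sum_{\ell=1}^d z_t^\ell\Ff_\ell$, let $(\Fx_t^\gamma)$ be the corresponding limit state, and $\bar{\Fx}_t^i\triangleq\tfrac1{\mu(P_i)}\int_{P_i}\Fx_t^\gamma\,d\gamma$ its block average. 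The argument has three parts: (i) \emph{trajectory tracking}, $\max_{1\le i\le N}|x_t^{oi}-\bar{\Fx}_t^i|=O(E_N)$ uniformly on $[0,T]$; (ii) \emph{cost of the prescribed strategy}, $J(u^{oi},u^{-oi})$ differs from the limit equilibrium cost $\tfrac1{\mu(P_i)}\int_{P_i}J(\Fu^\gamma,\Fz^\gamma)\,d\gamma$ by $\max\{O(E_N),O(E_N^2)\}$; (iii) \emph{no profitable deviation}, any unilateral deviation of agent $i$ can beat the limit best-response cost by at most $O(E_N)$, since agent $i$ enters every other agent's data only through weights $a_{ij}/N=O(1/N)$. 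Because $\inf_{v^i}J(v^i,u^{-oi})\le J(u^{oi},u^{-oi})$ trivially, chaining (ii) and (iii) gives $0\le J(u^{oi},u^{-oi})-\inf_{v^i\in\mathcal U}J(v^i,u^{-oi})=\max\{O(E_N),O(E_N^2)\}$ uniformly in $i$; since $E_N\to0$ under Assumption \ref{ass:net-convegence-revised}, this excess falls below any prescribed $\varepsilon>0$ once $N>N_0$, which is the asserted $\varepsilon$-Nash property.

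\smallskip
\noindent\textbf{Step (i).} Under Strategy \ref{strategy-deterministic} the closed loop reads, in $L^2[0,1]$, $\tfrac{d}{dt}\hat{\Fx}^\FN_t=(\alpha-\tfrac{\beta^2}{r}\pi_t)\hat{\Fx}^\FN_t+\eta\,\FA^\FN\hat{\Fx}^\FN_t+\tfrac{\beta^2}{r}\hat{\Fs}^\FN_t$, where $\hat{\Fx}^\FN_t,\hat{\Fs}^\FN_t$ are the piece-wise constant functions with values $x_t^{oi},\bar{\Fs}_t^i$ on $P_i$. Subtracting the limit closed-loop equation \eqref{eq:x-evo} integrated against $\mathds{1}_{P_i}/\mu(P_i)$ yields, for each $i$, an integral inequality for $|x_t^{oi}-\bar{\Fx}_t^i|$ whose inhomogeneous part is the block average over $P_i$ of the graphon-field mismatch $\eta(\FA-\FA^\FN)\Fx_t$ and of the offset mismatch $\tfrac{\beta^2}{r}(\Fs_t-\Fs_t^{\textup{pwc}})$. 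Now $\pi_{(\cdot)},\Pi^\ell_{(\cdot)},z^\ell_{(\cdot)},s^\ell_{(\cdot)}$ are bounded on $[0,T]$ by Assumption \ref{ass:Riccati-Sol-Existence} (no finite escape on the horizon) and $\|\FA^\FN\|_{\textup{op}}\le c$ by Assumption \ref{ass:a_ii=0}; and since $\Fx_t,\Fs_t$ lie in $\text{span}\{\Ff_1,\dots,\Ff_d\}$ with $\Ff_\ell=\lambda_\ell^{-1}\FA\Ff_\ell$ and hence $\|\Ff_\ell\|_\infty\le c/|\lambda_\ell|$, both block-averaged mismatches reduce to a bounded multiple of $E_N$, uniformly in $i$ and $t$ (the factor $1/\mu(P_i)=N$ in the definition of $E_N$ is exactly what makes this bound uniform over $i$). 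Gronwall's inequality then gives the claim of (i), and likewise $\sup_t\|\FA^\FN\hat{\Fx}^\FN_t-\FA\Fx_t\|_2=O(E_N)$ for the graphon field.

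\smallskip
\noindent\textbf{Steps (ii)--(iii).} For (ii), insert (i) into $J(u^{oi},u^{-oi})=\tfrac12\int_0^T[(x_t^{oi}-\tfrac1N\sum_j a_{ij}x_t^{oj})^2+r(u_t^{oi})^2]\,dt$ and expand about $(\Fx_t^\gamma-\Fz_t^\gamma)^2+r(\Fu_t^\gamma)^2$ averaged over $\gamma\in P_i$: the cross terms contribute $O(E_N)$ and the squared terms $O(E_N^2)$, giving the stated difference. For (iii), let agent $i$ deviate to $v^i\in\mathcal U$; we may assume $J(v^i,u^{-oi})\le J(u^{oi},u^{-oi})$, which through the term $\tfrac r2\int_0^T(v_t^i)^2\,dt$ bounds $\|v^i\|_{L^2([0,T];\BR)}$ by a constant independent of $N$. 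With the other agents keeping their feedback laws and agent $i$ entering each of their dynamics only through $a_{ij}/N$, a Gronwall estimate (using the a priori bound on $v^i$) shows that all other states, hence the graphon field faced by agent $i$, move by $O(1/N)$ relative to the on-strategy configuration; thus agent $i$ faces a linear--quadratic tracking problem whose reference is $O(E_N)+O(1/N)$-close to $(\Fz_t^\gamma)$, and continuity of the optimal tracking cost in the reference gives $\inf_{v^i\in\mathcal U}J(v^i,u^{-oi})\ge\tfrac1{\mu(P_i)}\int_{P_i}J(\Fu^{\gamma,*},\Fz^\gamma)\,d\gamma-\max\{O(E_N),O(1/N)\}$. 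Since under the standing assumptions the $O(1/N)$ self-influence term is dominated by $O(E_N)$, combining with (ii) yields the estimate of the theorem uniformly in $i$, and the $\varepsilon$-Nash conclusion follows from $E_N\to0$ (Assumption \ref{ass:net-convegence-revised}).

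\smallskip
\noindent\textbf{Main obstacle.} The crux is not the Gronwall bookkeeping but showing that \emph{every} discrepancy above --- the graphon-field mismatch $(\FA-\FA^\FN)\Fx_t$, the piece-wise constant offset mismatch $\Fs_t-\Fs_t^{\textup{pwc}}$, and the block-wise gap between agent $i$'s running cost and its continuum counterpart --- is dominated, \emph{uniformly in $i\in\{1,\dots,N\}$ and $t\in[0,T]$}, by the single scalar $E_N$ rather than by a weaker (global $L^2$) quantity. The lever is the finite-rank structure of Assumption \ref{ass:spectral-graphon}: since $\Fx_t,\Fz_t,\Fs_t$ lie in the $d$-dimensional span of $\{\Ff_\ell\}$, with $\Ff_\ell=\lambda_\ell^{-1}\FA\Ff_\ell$ and $\|\Ff_\ell\|_\infty\le c/|\lambda_\ell|$, each such error can be rewritten as $(\FA-\FA^\FN)$ applied to a bounded combination of partition indicators, plus a residual controlled by $\|\FA^\FN-\FA\|_{\textup{op}}\to0$, and therefore reduces to a constant times $E_N$, the constant depending only on the bounded Riccati and ODE data on $[0,T]$. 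A secondary, standard point is that admissible deviations $v^i$ lie only in $L^2$; this is dealt with by restricting to deviations that do not increase the cost, which is what supplies the uniform bound on $\|v^i\|_{L^2}$ used in Step (iii).
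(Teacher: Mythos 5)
Your overall scheme is the right one (it is the Huang-type argument the paper also follows), and your Step (i) is essentially sound: with the limit system started from $\Fx_0^\FN$, the offset mismatch vanishes because $\bar\Fs^i$ is the exact block average of $\Fs$, and the field mismatch reduces to $\frac{1}{\mu(P_i)}\langle\mathds{1}_{P_i},(\FA-\FA^\FN)\Fx_t\rangle\le E_N\|\Fx_t\|_2$ plus a Gronwall term, as in the paper's estimate of $\Delta^N_t$. The genuine gap is in your choice of benchmark in Steps (ii)--(iii). You compare both costs with the block-averaged limit equilibrium cost $\frac1{\mu(P_i)}\int_{P_i}J(\Fu^{\gamma,*},\Fz^\gamma)d\gamma$. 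For the chain you need the lower bound $\inf_{v^i}J(v^i,u^{-oi})\ge \frac1{\mu(P_i)}\int_{P_i}J(\Fu^{\gamma,*},\Fz^\gamma)d\gamma - O(E_N)$, but what the deviating agent actually faces is (up to $O(E_N)$) the tracking problem with the \emph{averaged} reference $\bar\Fz^i_t=\frac1{\mu(P_i)}\int_{P_i}\Fz^\gamma_t d\gamma$, and since the optimal tracking cost is convex in the reference, Jensen gives $\inf_u J(u,\bar\Fz^i)\le\frac1{\mu(P_i)}\int_{P_i}\inf_u J(u,\Fz^\gamma)d\gamma$ with a deficit of the order of the within-cell dispersion of $\Fz^\gamma$ (equivalently of the eigenfunctions $\Ff_\ell$) over $P_i$. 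Nothing in Assumptions \ref{ass:limit-assumptions}--\ref{ass:a_ii=0}, nor $E_N\to0$, controls this dispersion uniformly in $i$: the $\Ff_\ell$ are merely bounded measurable ($\|\Ff_\ell\|_\infty\le c/|\lambda_\ell|$), and a rank-one graphon $\FA=\lambda\Ff\Ff^\TRANS$ with a wildly oscillating bounded $\Ff$ and $A_N$ its block average has $\|\FA-\FA^\FN\|_{\textup{op}}\to0$ and $E_N\to0$ while the normalized within-cell variance of $\Ff$ on some cell stays bounded away from zero. So your "main lever" (finite rank reduces every discrepancy to $E_N$) controls block averages of $(\FA-\FA^\FN)$-errors but not within-block fluctuations of the limit objects, and Step (iii) (and the two-sided claim in (ii)) does not follow. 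The paper avoids this entirely by never comparing to per-$\gamma$ limit costs: it uses the cost against the averaged reference, $J(\cdot,\bar\Fz^i)$, as the common intermediate for both the prescribed strategy and the deviation (so the Jensen/within-cell terms cancel identically), splits the remainder into $I_2,I_3,I_2',I_3'$, and proves via Lemma \ref{lem:opt-mfg-cost-comp} that $u^{oi}$ is $O(E_N)$-optimal for the auxiliary tracking problem \eqref{eq:complete-info-lqt}.

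A secondary point: your dismissal of the $O(1/N)$ term ("dominated by $O(E_N)$ under the standing assumptions") is unsupported; $E_N$ can decay faster than $1/N$, or even vanish identically (e.g.\ when $\FA$ is a step graphon sampled exactly), so as written your argument only yields $\max\{O(E_N),O(E_N^2),O(1/N)\}$. That still gives the $\varepsilon$-Nash conclusion under Assumption \ref{ass:net-convegence-revised}, but not the stated rate. The paper does not incur this term because its deviation analysis (via Assumption \ref{ass:a_ii=0}, the identification $I_2'=I_2$, and Lemma \ref{lem:opt-mfg-cost-comp} with the couplings $\frac1N\sum_{j\in\mathcal{N}_i}a_{ij}x^{oj}$ held fixed) never re-solves the other agents' dynamics under the deviation; if you insist on tracking that feedback effect, you must either show it is $O(E_N)$ or accept the extra $O(1/N)$ in the first display.
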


\section{Random Initial Conditions}

An  $N$-agent game problem with network interactions is formulated as follows:
\begin{equation}
  \begin{aligned}
  \label{eq:finite-game-dynamics-stochastic-init}
   &\dot{x}^i_t= \alpha x_t^i+\beta u_t^i + \eta\frac1N\sum_{j=1}^{N}a_{ij}x^j_t, ~ t \in [0,T],~ \alpha, \beta \in \BR, ~ i \in \{1,...,N\} 
  \end{aligned}
 \end{equation} 
 where $x_0^i \sim \text{N}(\mu, \sigma^2)$ 
and $\{x_0^i\}_{i=1}^N$ are independent. 
 %
 The objective of the $i${th} agent  is the minimization of the performance function given by 
\begin{equation}
\label{eq:finite-game-cost-stochastic-init}
  J^i(u^i, u^{-i}) =\frac12 \mathds{E}\int_0^T \Big[\big(x^i_t - \frac1N\sum_{j=1}^{N}a_{ij}x^j_t\big)^2 + r(u_t^i)^2 \Big]dt
\end{equation}
where $r>0$ and $[a_{ij}]$ is the adjacency matrix of the underlying weighted undirected graph.

\subsection{Limit Function for Initial Conditions}

\begin{definition}
Consider a sequence of Gaussian random variables $\{x_0^1, x_0^2, ...\}$. Let $\Fx_0^\FN$ be the stochastic {piece-wise constant} function corresponding to the vector $(x_0^1, x_0^2, \ldots, x_0^N)^\TRANS$.
 A function $\Fx_0 \in L^2[0,1]$ is the limit function of the sequence $\{\Fx_0^\FN\}_{N=1}^\infty$ if 
$$
\forall \Fv \in L^2[0,1], \quad \langle \Fx_0, \Fv\rangle = \lim_{N\rightarrow \infty} \langle \Fx_0^\FN, \Fv\rangle
$$
in the mean square sense,
which we denote by  $\Fx_0 \triangleq \lim_{N\rightarrow \infty} \Fx_0^\FN.$
\end{definition}

For any basis function $\Ff_\ell$ in an orthonormal base system for $L^2[0,1]$, we obtain
$$
\lim_{N\rightarrow \infty}\langle \Fx_0^\FN, \Ff_\ell\rangle =  \lim_{N\rightarrow \infty} \frac1{N} \sum_{i=1}^N \bar v_\ell(i)x_0^i
$$
where 
$$
\begin{aligned}
\bar v_\ell & \triangleq \Big[\frac{1}{\mu(P_1)}\langle \mathds{1}_{P_1}, \Ff_\ell\rangle, ..., \frac{1}{\mu(P_N)}\langle \mathds{1}_{P_N}, \Ff_\ell\rangle\Big]^\TRANS\\
& = \Big[\frac{1}{\mu(P_1)}\langle \mathds{1}_{P_1}, \bar \Ff_\ell\rangle, ..., \frac{1}{\mu(P_N)}\langle \mathds{1}_{P_N}, \bar \Ff_\ell\rangle\Big]^\TRANS
\end{aligned}
$$  
with $\bar \Ff_\ell$ as the stepfunction approximation of $\Ff_\ell$ based on $N$-uniform partitions of $[0,1]$.
By the contraction property in \cite[Proposition 3]{ShuangPeterTAC18}, we obtain
$
\|\bar \Ff_\ell\|_2\leq  \| \Ff\|_{2} = 1.
$
Therefore
$\|\bar v_\ell\|_2 \triangleq \sqrt{\bar v_\ell^\TRANS \bar v_\ell}= \sqrt{N} \|\bar \Ff_\ell\|_2 \leq \sqrt{N}.$

Let 
$
S_N \triangleq  \langle \Fx_0^\FN, \Ff_\ell\rangle = \langle \Fx_0^\FN, \bar \Ff_\ell\rangle=  \frac1{N} \sum_{i=1}^N \bar v_\ell(i)x_0^i .
$
Clearly, $$S_N \sim \text{N}\left( \mu \frac1{N} \sum_{i=1}^N \bar v_\ell(i) ,  \sigma^2\frac{1}{N^2} \sum_{i=1}^N (\bar v_\ell(i))^2\right).$$
 The expectation satisfies
\begin{equation}
    \lim_{N\rightarrow \infty} \mathds{E} \big[ S_N \big]  = \lim_{N\rightarrow \infty} \frac1{N} \sum_{i=1}^N \bar f_\ell(i) \mu = \langle \mu \mathds{1},\Ff\rangle
\end{equation}
and the variance  satisfies $\text{var}(S_N) = O\Big(1/N\Big)$.
Therefore,
$
\lim_{N\rightarrow \infty}\langle \Fx_0^\FN, \Ff_\ell\rangle =  \langle \mu \mathds{1}, \Ff_\ell\rangle, 
$
in the mean square sense, that is, 
$
\lim_{N\rightarrow \infty} \Fx_0^\FN   \triangleq \mu \mathds{1}.
$ 

A natural choice for the limit of the initial condition is $\mu \mathds{1}$. 
Recall that $ \bar \Ff_\ell$ is the approximation of $\Ff_\ell$ based on $N$-uniform partition of $[0,1]$ with $N$ as the size of the finite population game.  Hence it is obvious that $ \langle \mu \mathds{1},  \Ff_\ell\rangle= \langle \mu \mathds{1}, \bar \Ff_\ell\rangle$. 
\subsection{The $\varepsilon$-Nash Property}
Based on the solution to the limit LQ-GFG problem with random initial conditions, the following strategy can be constructed. 
\begin{strategy}[Finite Problem Strategies: Random Initial Conditions]\label{str:randinit}

Let the $N$-tuple $({u}^{o 1},..., u^{oN})$ of strategies be  
  constructed as follows:
   \begin{equation}\label{eq:rdinit-Nash-Control-Case12}
      \begin{aligned}
    {u}^{oi}_t &=- \frac{\beta}{r} \pi_t x_t^{oi}+ \frac{\beta}{r} \bar \Fs^i_t\\
       \bar \Fs_t^i & \triangleq  \frac1{\mu{(P_i)}}\int_{P_i}  \Fs^\gamma d\gamma = \sum_{\ell=1}^d s_t^\ell \frac1{\mu{(P_i)}}\int_{P_i}  \Ff_\ell(\gamma) d\gamma,
    \end{aligned}
    \end{equation}
 %
where $s_t^\ell = \Pi_t^\ell z_t^\ell$,
\begin{align}
    -\dot{\pi}_t &= 2\alpha \pi_t -\frac{\beta^2}{r}\pi^2_t + 1, \quad \pi_T = 0, \quad  t\in[0,T],\label{eq:rdinit-riccati-1}\\
    - \dot{\Pi}_t^\ell &=  \left[2\Big(\alpha-\frac{\beta^2}{r}\pi_t\Big)-\eta \lambda_\ell\right] \Pi_t^\ell + \frac{\beta^2}{r} \lambda_\ell (\Pi_t^\ell)^2  +(1-\eta \pi_t), \quad
    \Pi_T^\ell  = 0, \label{eq:rdinit-riccati-2} \\
    \dot{z}^\ell_t &= \left[\alpha + \frac{\beta^2}{r} (\Pi_t^\ell \lambda_\ell - \pi_t) +\eta \lambda_\ell \right]z^\ell_t,  \quad z_0^\ell  = \lambda_\ell \langle \mu \mathds{1},  \Ff_\ell\rangle,\label{eq:rdinit-eigen-mean-state}
\end{align}
and $\mu{(P_i)}$ denotes the size of $P_i$ (which is $1/N$ for the case with the $N$-uniform partition).

%


 \end{strategy}
 
 Compared to the deterministic case, the only difference in the strategies is the choice of the initial conditions. 
In this case, each agent only needs to take into account the expectation of the initial conditions in \eqref{eq:rdinit-eigen-mean-state} to compute the offset terms $\{s^\ell\}_{\ell=1}^d$. 
Thus this provides a decentralized solution.

\begin{theorem}[Appendix \ref{sec:proof-rinit}]
\label{thm:stochasticinit-LQ-GFG}
Under {Assumptions  \ref{ass:limit-assumptions}(a), \ref{ass:spectral-graphon},  \ref{ass:Riccati-Sol-Existence} 
$\&$ \ref{ass:a_ii=0}},  the following holds for any agent $i$ 
\begin{equation}
\begin{aligned}
  &J(u^{oi}, u^{-oi}) - \inf_{u_i\in \mathcal{U}} J(u^{i}, u^{-oi}) 
  = \max\left\{O\Big(\frac{1}{\sqrt{N}} \Big), O\Big(E_N \Big), O\Big(E_N^2 \Big)\right\} 
\end{aligned}
\end{equation}
based on  \textup{\bf Strategy \ref{str:randinit}},
where $$E_N\triangleq\max_{i\in \{1,\ldots,N\}} \frac{1}{\mu(P_i)} \left\|(\FA-\FA^\FN)\mathds{1}_{P_i} \right\|_2;$$
furthermore, if {Assumption \ref{ass:net-convegence-revised}} also holds, then for any $\varepsilon>0$ there exists $N_0$ such that for any $N> N_0$ the following holds
\begin{equation}
  J(u^{oi}, u^{-oi}) \leq  \inf_{u^\in \mathcal{U}}J(u^{i}, u^{-oi}) + \varepsilon,
\end{equation}
that is $(u^{o1},..., u^{oN})$ generates an $\varepsilon$-Nash equilibrium for $N>N_0$.
\end{theorem}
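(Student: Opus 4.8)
The plan is to follow the structure of the proof of Theorem~\ref{thm:deterministic-LQ-GFG} in Appendix~\ref{sec: proof-deterministic-LQ-GFG} (itself patterned on the $\varepsilon$-Nash arguments of \cite{huang2005nash}), adding one new error source, of order $O(1/\sqrt N)$, coming from the fluctuation of the empirical initial condition around its mean $\mu\mathds{1}$. Write $\hat z_0^\ell \triangleq \lambda_\ell\langle \Fx_0^\FN,\Ff_\ell\rangle$ for the genuine finite-population projection and $z_0^\ell = \lambda_\ell\langle\mu\mathds{1},\Ff_\ell\rangle$ for the quantity actually used in \textbf{Strategy~\ref{str:randinit}}. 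From the computation preceding the theorem, $\mathds{E}\,\big|\langle\Fx_0^\FN,\Ff_\ell\rangle-\langle\mu\mathds{1},\Ff_\ell\rangle\big|^2=\text{var}(S_N)=O(1/N)$, so in mean square these differ by $O(1/\sqrt N)$; since the backward Riccati variables $\pi_{(\cdot)}$ and $\Pi^\ell_{(\cdot)}$ in \eqref{eq:rdinit-riccati-1}--\eqref{eq:rdinit-riccati-2} do not depend on the initial data and the forward equations \eqref{eq:rdinit-eigen-mean-state} are linear with coefficients bounded on the closed interval $[0,T]$, this $O(1/\sqrt N)$ discrepancy propagates linearly (by Gr\"onwall) to $z^\ell_t$, to $s^\ell_t=\Pi^\ell_t z^\ell_t$, and hence to the offsets $\bar\Fs^i_t$, uniformly in $t\in[0,T]$.

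First I would establish a trajectory-comparison estimate: under \textbf{Strategy~\ref{str:randinit}}, the closed-loop per-agent state $x^{oi}_t$ and the limit state $\Fx^{\gamma,*}_t$ solving \eqref{eq:x-evo}--\eqref{eq:fixed-point} with $\gamma\in P_i$ satisfy $\sup_{t\in[0,T]}\big(\mathds{E}\,|x^{oi}_t-\Fx^{\gamma,*}_t|^2\big)^{1/2}=O(E_N)+O(1/\sqrt N)$, together with the analogous bound for the piece-wise-constant aggregate state; this follows from a Gr\"onwall argument on the difference dynamics, in which the $O(E_N)$ part arises exactly as in the deterministic case (replacing $\FA$ by $\FA^\FN$ and the per-element averaging error of $\Ff_\ell$, both controlled by $E_N$, with Assumption~\ref{ass:a_ii=0} also accounting for $a_{ii}=0$), while the $O(1/\sqrt N)$ part is the initial-data fluctuation above. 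Because $J^i$ is quadratic in the state, expanding $J(u^{oi},u^{-oi})$ around the limit optimal cost $J(\Fu^{\gamma,*},\Fz^{\gamma})$ (the value of the tracking problem \eqref{limit-game-dynamics}--\eqref{limit-game-cost} at the equilibrium local graphon field) and inserting this estimate yields
$$\big|J(u^{oi},u^{-oi})-J(\Fu^{\gamma,*},\Fz^{\gamma})\big|=\max\big\{O(1/\sqrt N),\,O(E_N),\,O(E_N^2)\big\},$$
the squared trajectory errors ($O(E_N^2)$ and $O(1/N)$) being absorbed into the stated maximum.

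Next I would treat a unilateral deviation $u^i\in\mathcal{U}$ of agent $i$ against $u^{-oi}$. The key point, as in the deterministic proof, is that $u^i$ enters the other agents' dynamics only through the $O(1/N)$ coupling coefficient $\eta a_{ji}/N$, so any deviation perturbs $(x^j_t)_{j\ne i}$, and hence the graphon field felt by agent $i$, by $O(1/N)$ in the relevant norm; restricting without loss of generality to deviations whose cost does not exceed that of $u^{oi}$ bounds $\mathds{E}\int_0^T r(u^i_t)^2\,dt$ and makes this $O(1/N)$ perturbation uniform over the admissible deviations. Consequently agent $i$ faces, up to an $O(1/N)$ perturbation of its tracking signal, the limit linear-quadratic tracking problem \eqref{limit-game-dynamics}--\eqref{limit-game-cost} with local field $\Fz^{\gamma}$, whose optimal value is $J(\Fu^{\gamma,*},\Fz^{\gamma})$ and whose minimizer is precisely the feedback law of \textbf{Strategy~\ref{str:randinit}}. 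This gives $\inf_{u^i\in\mathcal{U}}J(u^i,u^{-oi})\ge J(\Fu^{\gamma,*},\Fz^{\gamma})-\max\{O(1/\sqrt N),O(E_N),O(E_N^2)\}$; combining with the upper bound from the previous step proves the first assertion. If moreover Assumption~\ref{ass:net-convegence-revised} holds then $E_N\to0$, and since $1/\sqrt N\to0$ as well, the right-hand side tends to $0$, so for any $\varepsilon>0$ there is an $N_0$ with the $\varepsilon$-Nash inequality for all $N>N_0$.

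The main obstacle I expect is the rigorous joint propagation of the stochastic $O(1/\sqrt N)$ estimate and the deterministic $O(E_N)$ estimate through dynamics that may exhibit finite escape: Assumption~\ref{ass:Riccati-Sol-Existence} only guarantees existence of $\pi_{(\cdot)}$ and $\Pi^\ell_{(\cdot)}$ on $[0,T]$, so one must use their continuity (hence boundedness) on the closed interval to obtain Gr\"onwall constants uniform in $N$, and keep all estimates in expectation since the cost is $\mathds{E}\int_0^T(\cdot)\,dt$. A secondary difficulty is making the deviation-perturbation bound genuinely uniform over all admissible $u^i$, which is handled by the a priori cost restriction noted above; one also verifies, using Assumption~\ref{ass:a_ii=0}, that the absence of the self-term in the tracking penalty affects only the lower-order errors already accounted for.
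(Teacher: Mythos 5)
Your overall architecture (compare both the strategy cost and the best deviation cost to a limit tracking value, control the deviation's feedback on others via the $O(1/N)$ coupling plus an a priori cost restriction) is the standard $\varepsilon$-Nash template and is close in spirit to the paper, but as written it has a genuine gap at its central step. You propose the trajectory estimate $\sup_{t\in[0,T]}\bigl(\mathds{E}\,|x^{oi}_t-\Fx^{\gamma,*}_t|^2\bigr)^{1/2}=O(E_N)+O(1/\sqrt N)$ with $\Fx^{\gamma,*}$ the limit state of \eqref{eq:x-evo}--\eqref{eq:fixed-point}, and then expand $J(u^{oi},u^{-oi})$ around the limit optimal cost $J(\Fu^{\gamma,*},\Fz^{\gamma})$. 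In the random-initial-condition setting the limit initial function is $\mu\mathds{1}$, so $\Fx^{\gamma,*}_0=\mu$ while $x^{oi}_0\sim \textup{N}(\mu,\sigma^2)$; hence $\mathds{E}\,|x^{oi}_0-\Fx^{\gamma,*}_0|^2=\sigma^2$, which does not vanish with $N$. The individual state never converges to the deterministic limit trajectory, and the gap between $J(u^{oi},u^{-oi})$ and a limit cost computed from the deterministic trajectory is $O(\sigma^2)=O(1)$, not $\max\{O(1/\sqrt N),O(E_N),O(E_N^2)\}$. Your only stochastic ingredient is the $O(1/\sqrt N)$ fluctuation of the projections $\langle\Fx_0^\FN,\Ff_\ell\rangle$, which handles the graphon-field level but not the agent's own non-vanishing initial randomness; without an explicit device that makes the own-noise contribution cancel between the two costs being compared, the two one-sided estimates you combine are each off by $O(1)$ and the conclusion does not follow. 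The paper avoids this precisely by never comparing to the deterministic limit value: it decomposes $J(u^{oi},u^{-oi})$ and $J(u^{i},u^{-oi})$ around the intermediate tracking costs $J(\cdot,\bar\Fz^i)$, bounds the cross terms $I_2,I_3,I_2',I_3'$ through a second-moment analysis of $\Delta^N_t=\bar\Fz_t-\Fz^{o\FN}_t$ (where the initial-condition variance enters only through the $1/N$-averaged empirical field, giving $O(1/N)$ in mean square), and compares $J(u^{oi},\bar\Fz^i)$ with $\inf_{u^i}J(u^{i},\bar\Fz^i)$ via the auxiliary tracking problem of Lemma \ref{lem:opt-mfg-cost-comp-randinit}, which retains the agent's own random initial condition and the frozen empirical coupling, so the $\sigma^2$ terms cancel identically in the difference.

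A secondary, more repairable issue: your estimates are stated pointwise at a fixed $\gamma\in P_i$, but Strategy \ref{str:randinit} uses the cell-averaged offsets $\bar\Fs^i_t$, and nothing in Assumptions \ref{ass:limit-assumptions}(a), \ref{ass:spectral-graphon}, \ref{ass:a_ii=0} (nor the quantity $E_N$, which only measures cell-averaged differences $\mu(P_i)^{-1}\|(\FA-\FA^\FN)\mathds{1}_{P_i}\|_2$) controls the within-cell variation of the eigenfunctions $\Ff_\ell$, hence of $\Fz^\gamma_t$ or $\Fs^\gamma_t$ over $P_i$. The comparison must therefore be carried out against the cell averages $\bar\Fz^i,\bar\Fs^i$, as the paper does; and the ``minimizer is precisely the feedback law of Strategy \ref{str:randinit}'' claim should likewise be replaced by a quantitative comparison of the strategy with the optimal tracking control of the auxiliary problem (the paper's $\bar\Fs^i_t-s^{*i}_t$ and $x^{oi}_t-x^{*i}_t$ estimates), rather than an exact optimality statement.
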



\section{Numerical Examples}

\begin{figure}[htb] 
\centering
    \subfloat[State]{\includegraphics[width=4cm,trim = {0.1cm 0 0.1cm 0}, clip]{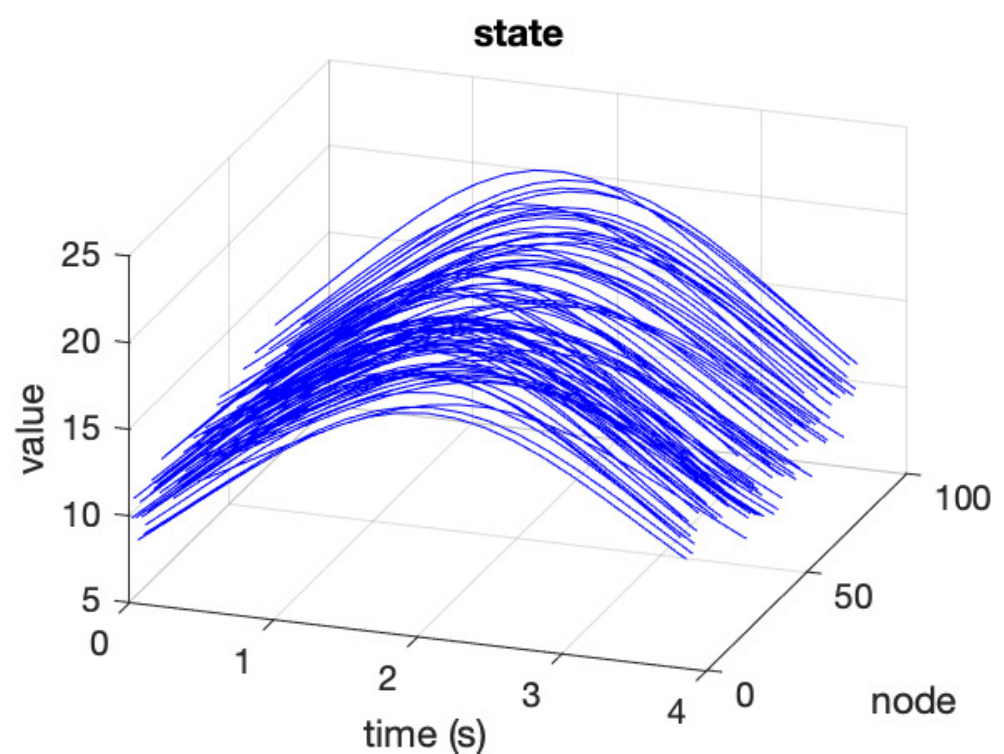}}~~
    \subfloat[GFG best response]{\includegraphics[width=4cm,trim = {0.1cm 0 0.1cm 0}, clip]{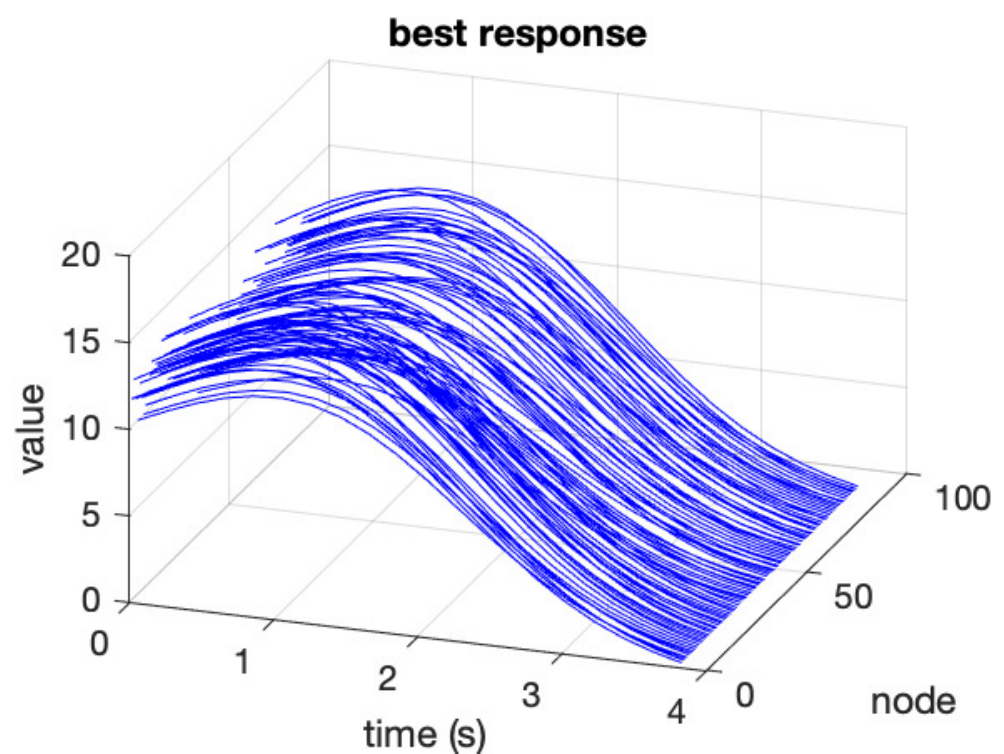}}~~
    \subfloat[Offset process $s$]{\includegraphics[width=4cm,trim = {0.1cm 0 0.1cm 0}, clip]{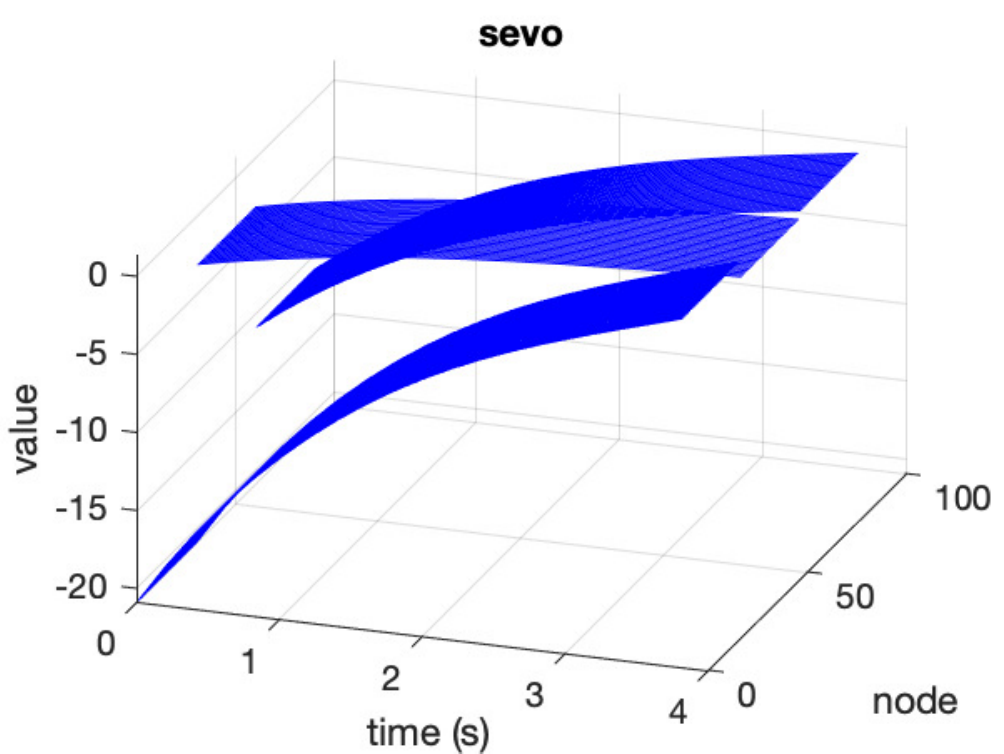}}\\
    \subfloat[Graphon field $z$]{\includegraphics[width=4cm,trim = {0.1cm 0 0.1cm 0}, clip]{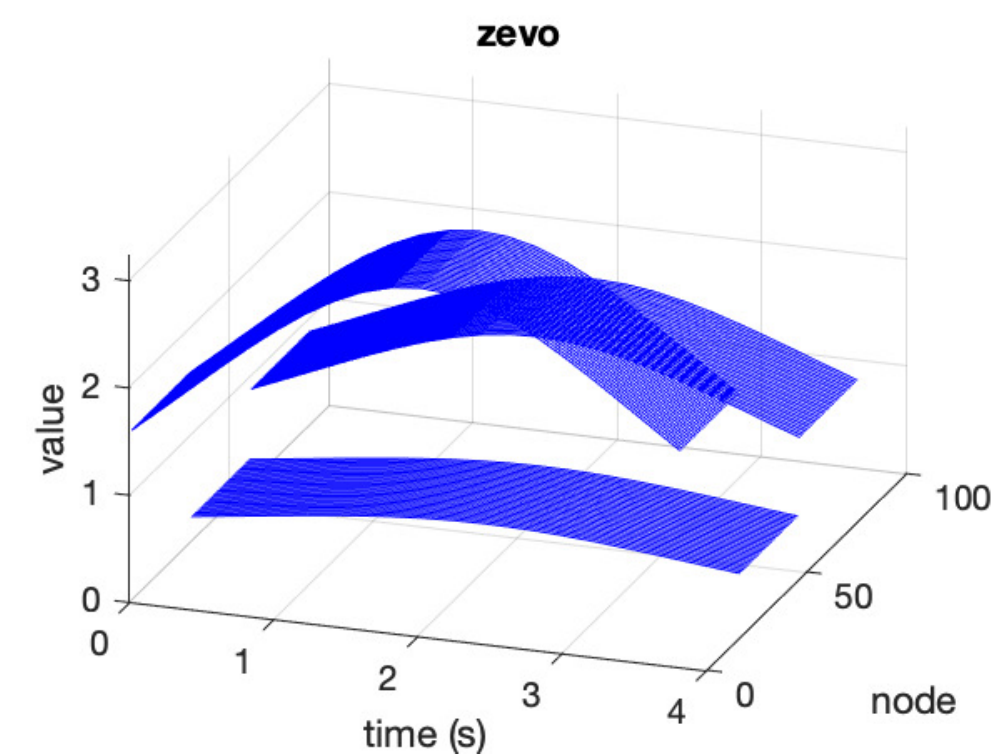}}~~
    \subfloat[Empirical graphon field]{\includegraphics[width=4cm,trim = {0.1cm 0 0.1cm 0}, clip]{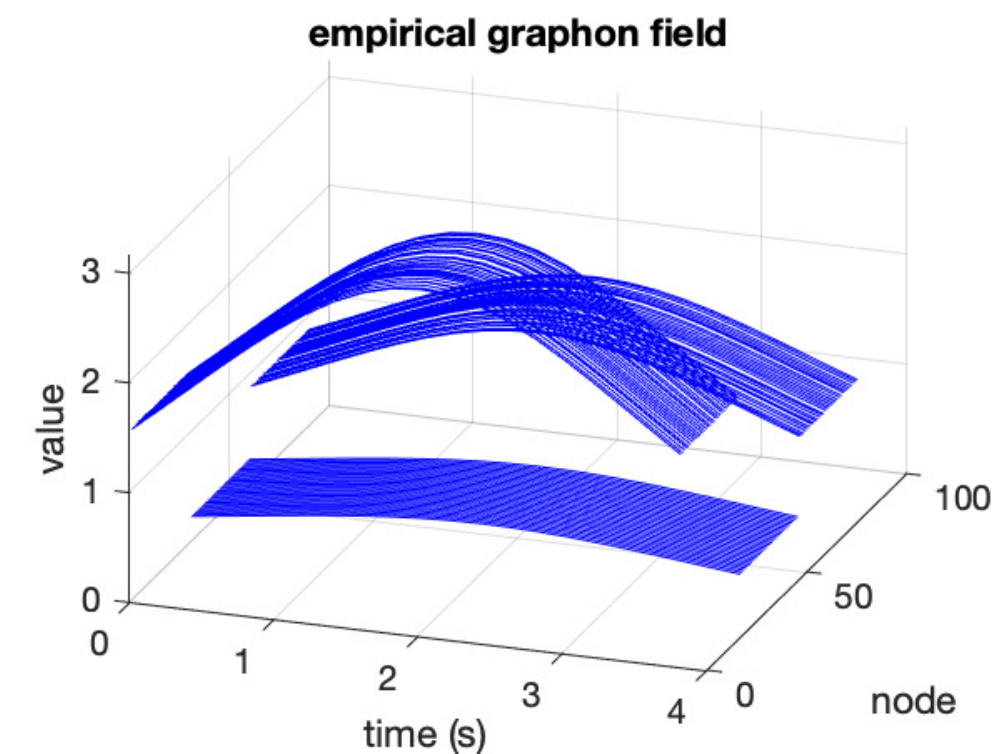}}~~
    \subfloat[Graphon field estimate error]{\includegraphics[width=4cm,trim = {0.1cm 0 0.1cm 0}, clip]{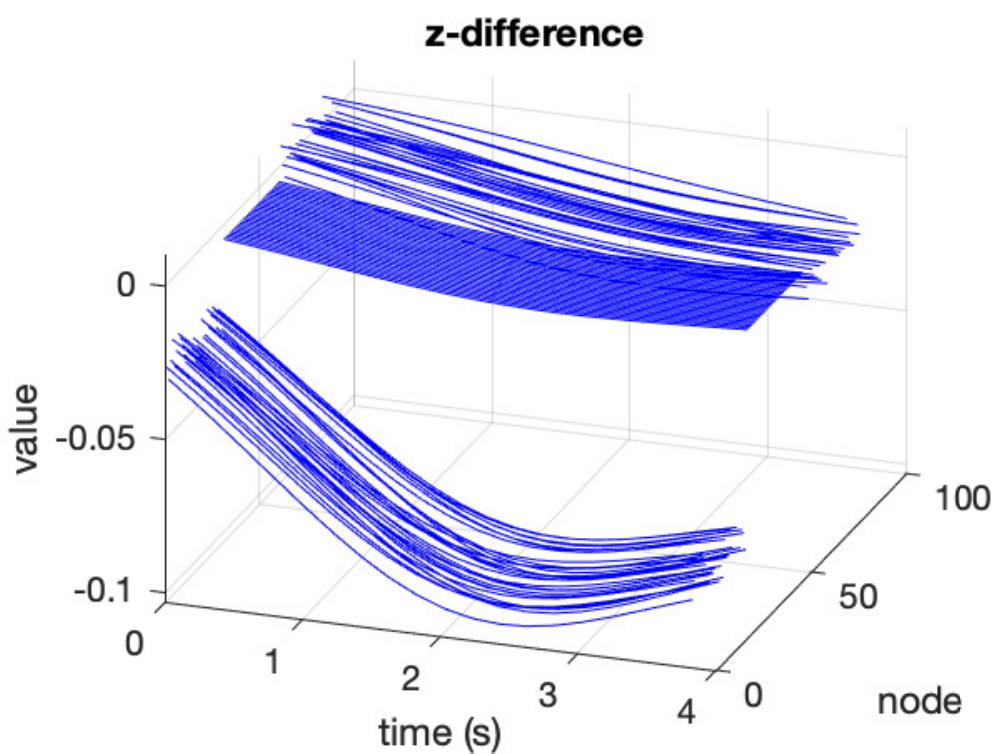}}\\
     \subfloat[Graph structure and its pixel representation]{\includegraphics[width=8.5cm,trim = {1cm 0 1cm 0}, clip]{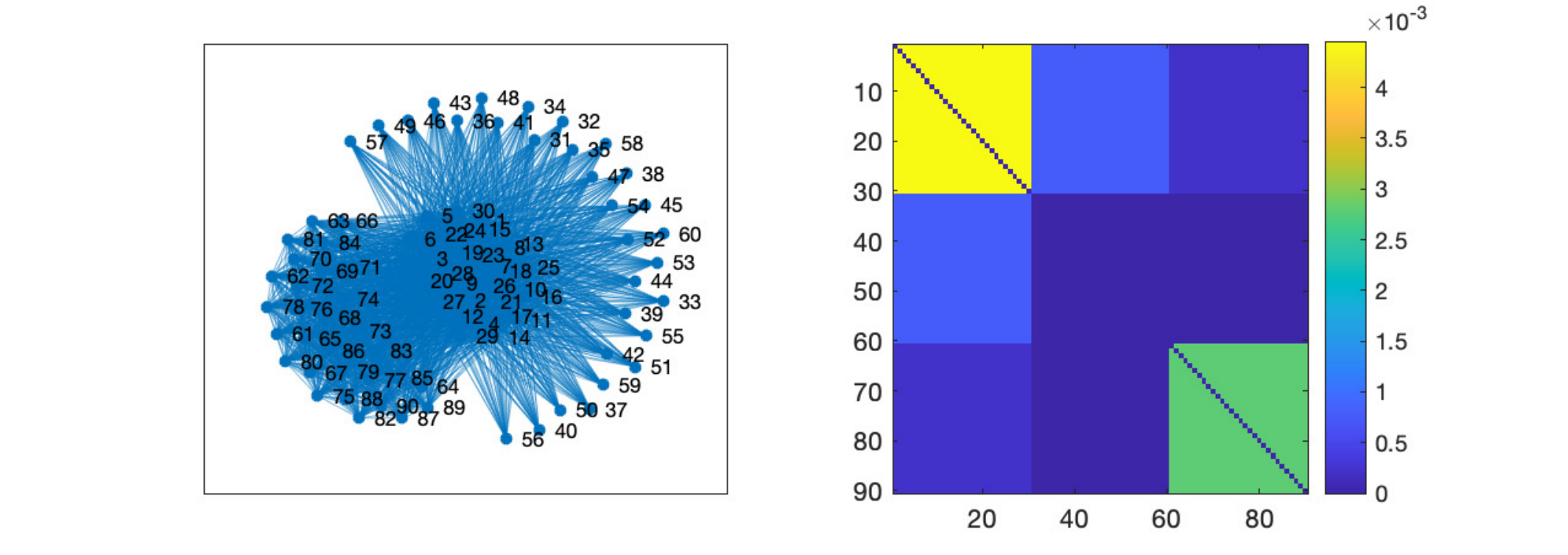}} ~~
    \subfloat[Individual cost]{\includegraphics[width=4cm,trim = {0cm 0 0cm 0}, clip]{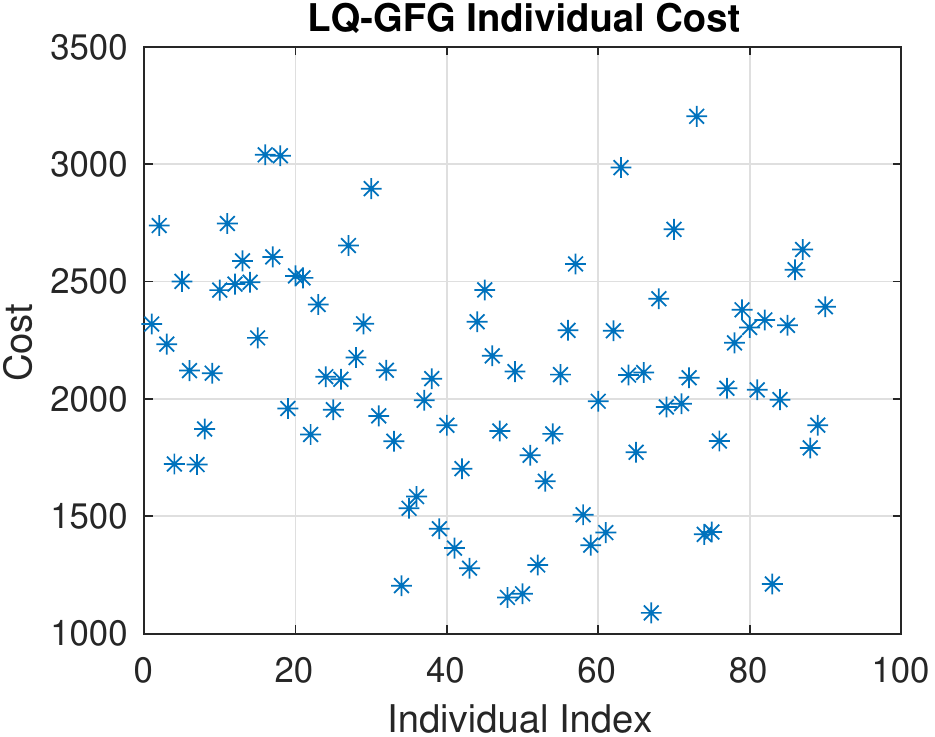}}
         \caption{Graphon field game simulation for systems on weighted multipartite graphs of size 90 with weights among and within the communities given by \eqref{eq:multiP}, where each community contains 30 nodes. We choose a problem with 90 nodes in this example for the convenience of illustrating the network structure, but the method could deal with problems of much large sizes. }\label{fig:fig-MultiP}
\end{figure}

The following values will be adopted for the parameters: 
 $\alpha = -0.5$, $\beta =1$, $\eta= 0.1$, $r =10$, $T=4$. The initial conditions $\{x_0^1,x_0^{2}...,\}$ are chosen as independent Gaussian random variables with variance $1$ and mean $10$. 

 \subsection{Example on Multipartite Graphs}
  We consider multipartite graphs (with no self-loops) where the connection weights are specified by the following matrix
\begin{equation}\label{eq:multiP}
    \MATRIX{0.25& 0& 0.02\\ 0& 0& 0.07\\ 0.02 &  0.07 & 0.40}.
\end{equation}
The sizes of the sequence of graphs are given by  $3n$ where $n \in \{1,2,...\}$ is the number of nodes in each community. 
Since  $n$ may vary, the underlying graph could be of arbitrary size. Clearly Assumptions \ref{ass:limit-assumptions}(a), \ref{ass:a_ii=0} and \ref{ass:net-convegence-revised} are satisfied. Since the  graphon limit has rank $3$, Assumption \ref{ass:spectral-graphon} holds.  Furthermore, the specific values for the parameters in the example allow Assumption \ref{ass:Riccati-Sol-Existence} to hold.  Hence, the result in Theorem \ref{thm:stochasticinit-LQ-GFG} applies. 
A simulation result on a network of size $90$ is shown in Figure \ref{fig:fig-MultiP}.

\subsection{Example on Graphs Generated from a Sinusoidal Graphon}
To generate a graph of size $N$ from the graphon limit $\FA(x,y) = 0.5\cos\pi(x-y)+0.5$ with $x, y \in [0,1]$, we first get the uniform grid in $[0,1]$ with grid points $p_1,...,p_N$ and then connect $i$ and $j$ ($i\neq j$) with weight $\FA(p_i,p_j)$. Clearly this generation procedure ensures Assumptions \ref{ass:limit-assumptions}(a), \ref{ass:a_ii=0} and \ref{ass:net-convegence-revised} are satisfied.

{The normalized $L^2[0,1]$ eigenvectors of the graphon limit $\FA(x,y) = 0.5\cos\pi(x-y)+0.5$ associated with nonzero eigenvalues are  $\mathds{1}_{[0,1]}$, $\sqrt{2}\sin\pi( \cdot)$ and $\sqrt{2}\cos\pi( \cdot)$, and the corresponding nonzero eigenvalue are  $\frac12$, $\frac14$ and $\frac14$ (see e.g., \cite{ShuangPeterCDC19W1}). }
 Hence the  graphon limit $\FA$ has rank $3$ and Assumption \ref{ass:spectral-graphon} holds.  Furthermore, the specific values for the parameters in the example allow Assumption \ref{ass:Riccati-Sol-Existence} to hold.  Hence, the result in Theorem \ref{thm:stochasticinit-LQ-GFG} applies. A simulation result is shown in Figure \ref{fig:fig-S3Equal}.

\begin{figure}[htb] 
\centering
    \subfloat[State]{\includegraphics[width=4cm,trim = {0.1cm 0 0.1cm 0}, clip]{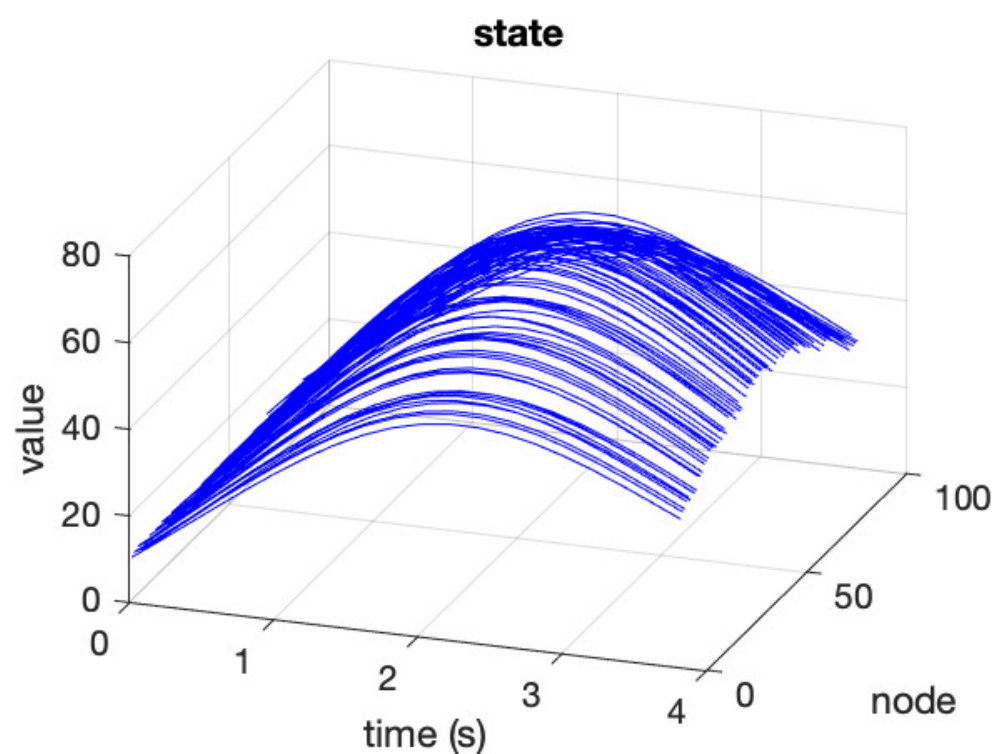}}~~
    \subfloat[GFG best response]{\includegraphics[width=4cm,trim = {0.1cm 0 0.1cm 0}, clip]{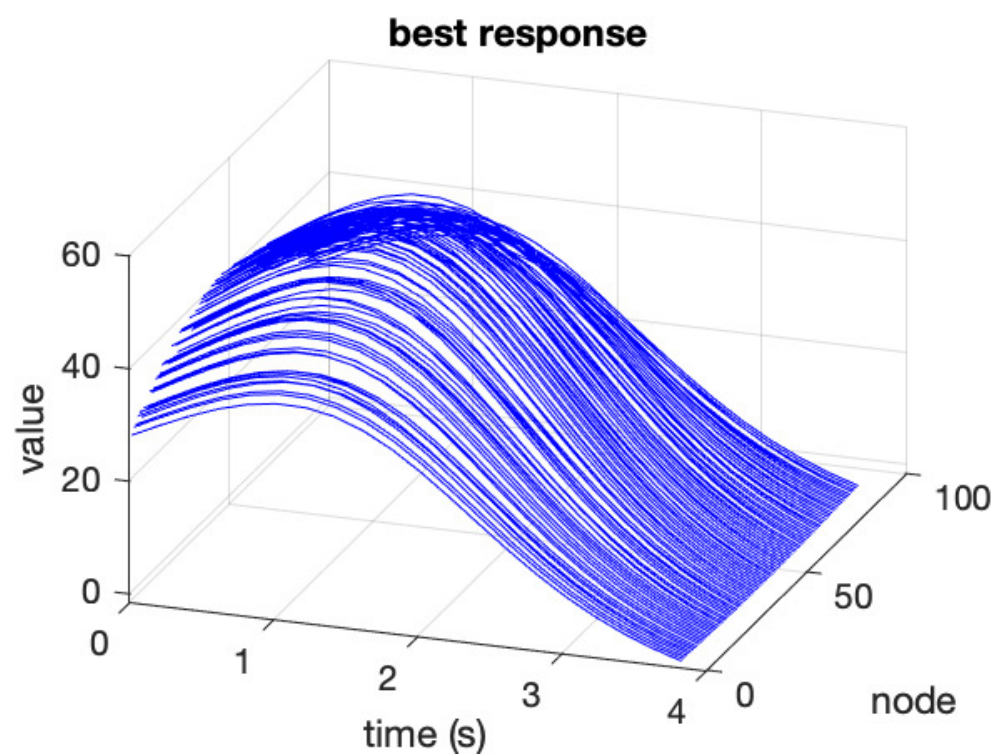}}~~
    \subfloat[Offset process $s$]{\includegraphics[width=4cm,trim = {0.1cm 0 0.1cm 0}, clip]{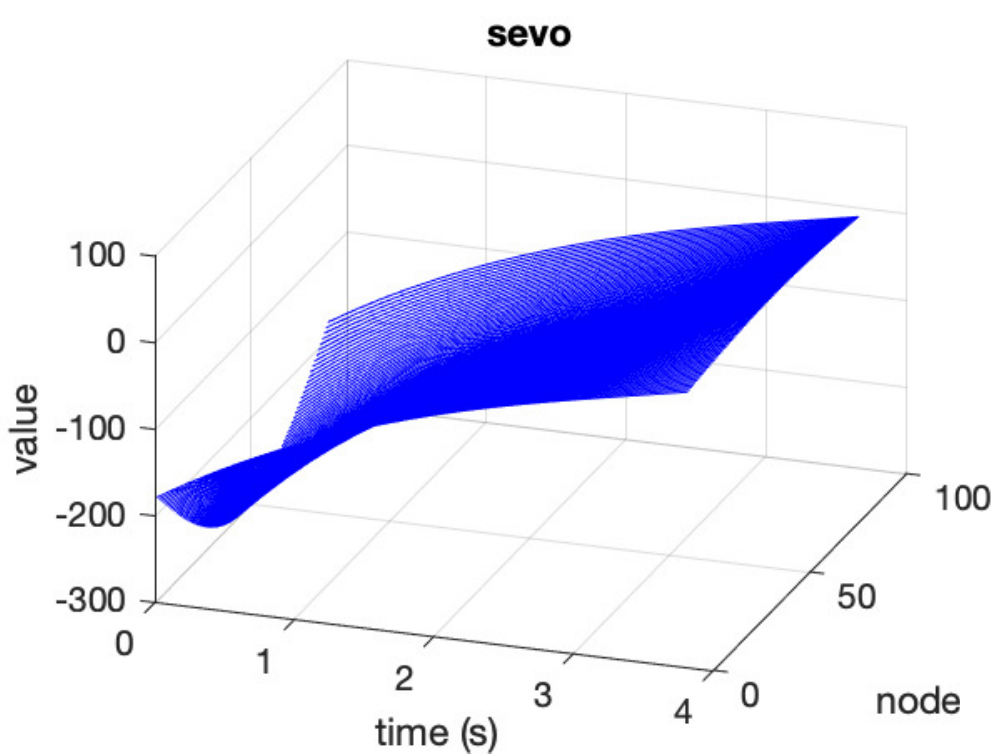}}\\
    \subfloat[Graphon field $z$]{\includegraphics[width=4cm,trim = {0.1cm 0 0.1cm 0}, clip]{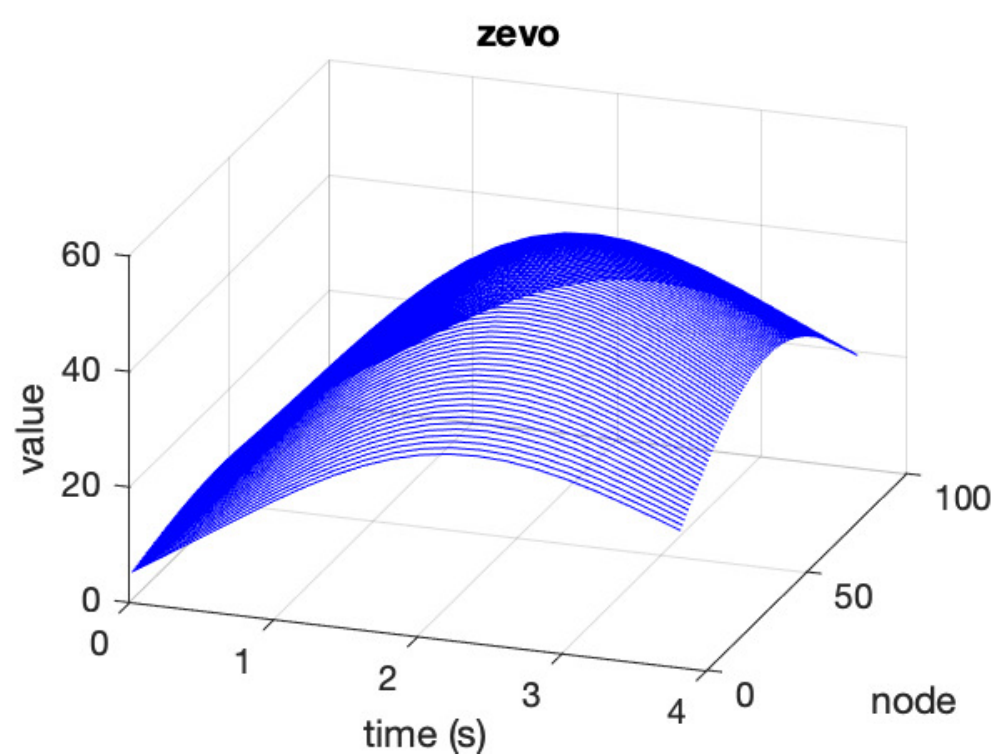}}~~
    \subfloat[Empirical graphon field]{\includegraphics[width=4cm,trim = {0.1cm 0 0.1cm 0}, clip]{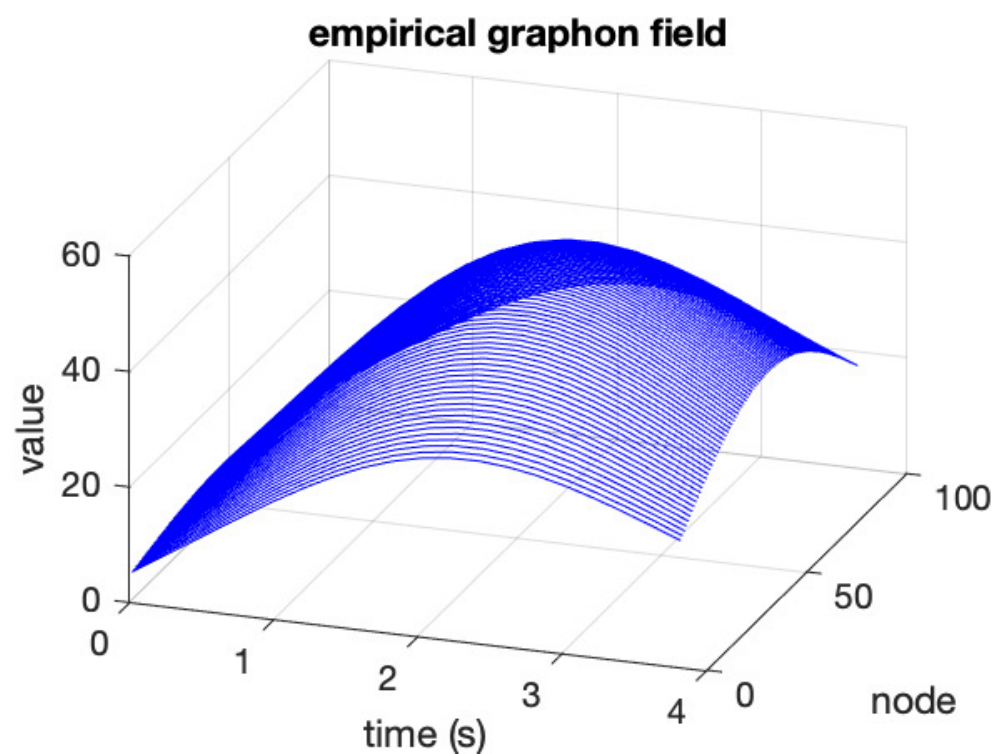}}~~
    \subfloat[Graphon field estimate error]{\includegraphics[width=4cm,trim = {0.1cm 0 0.1cm 0}, clip]{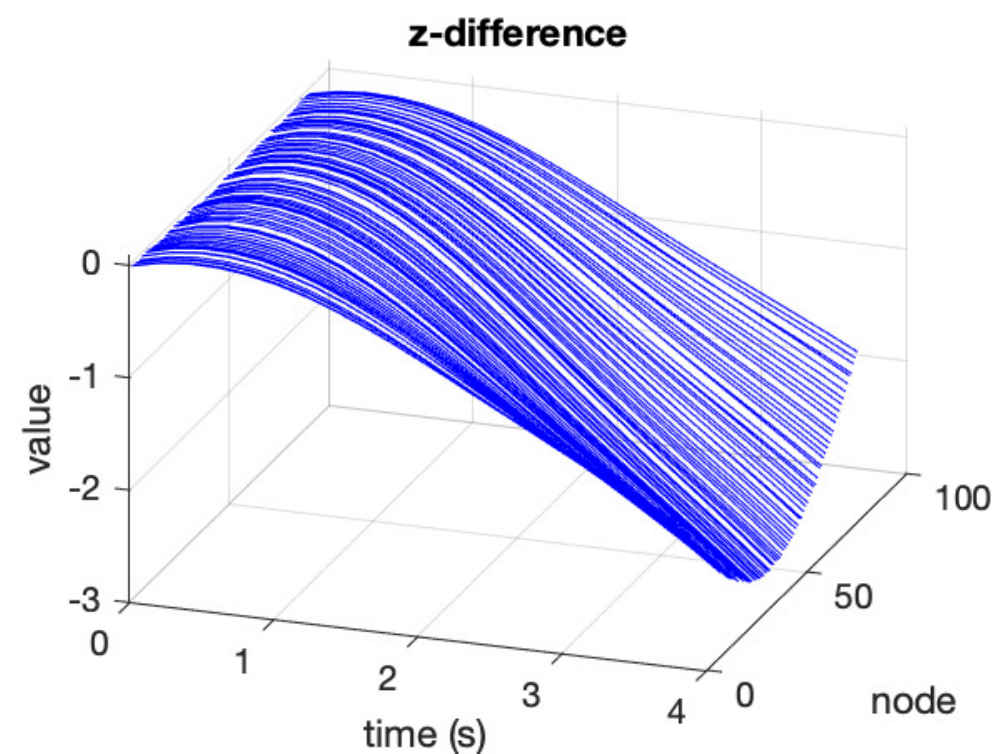}}\\
     \subfloat[Graph structure and its pixel representation]{\includegraphics[width=8.5cm,trim = {1cm 0 1cm 0}, clip]{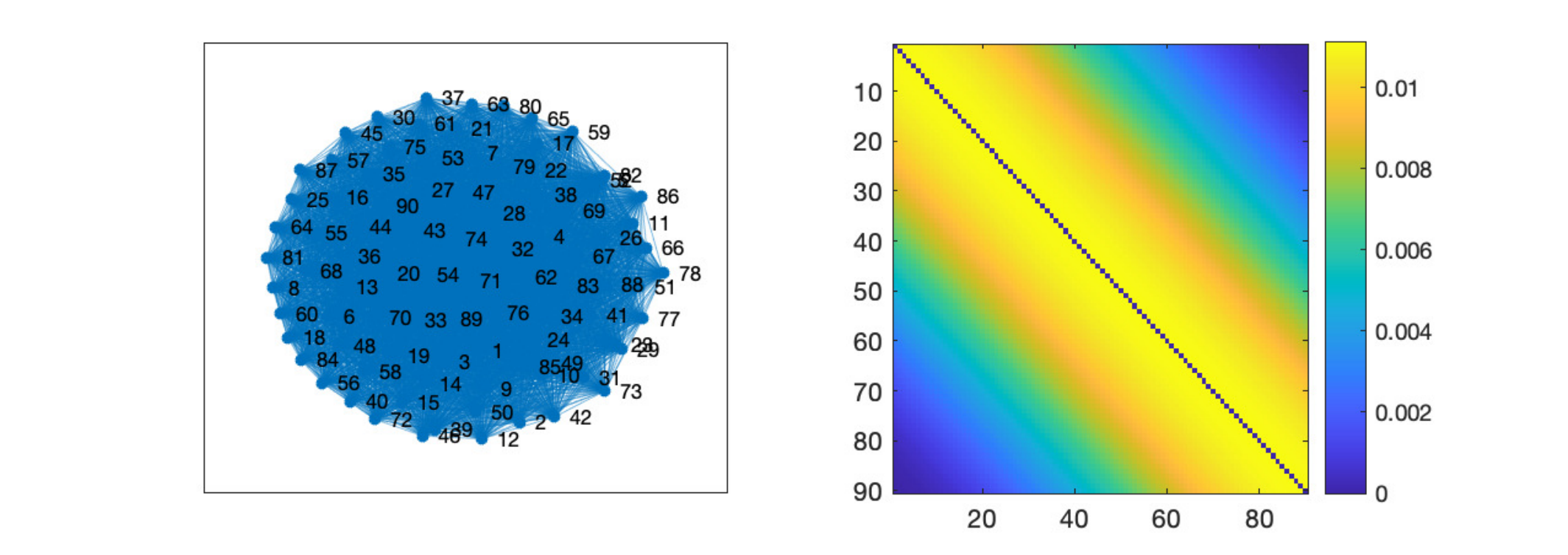}} ~~
    \subfloat[Individual cost]{\includegraphics[width=4cm,trim = {0cm 0 0cm 0}, clip]{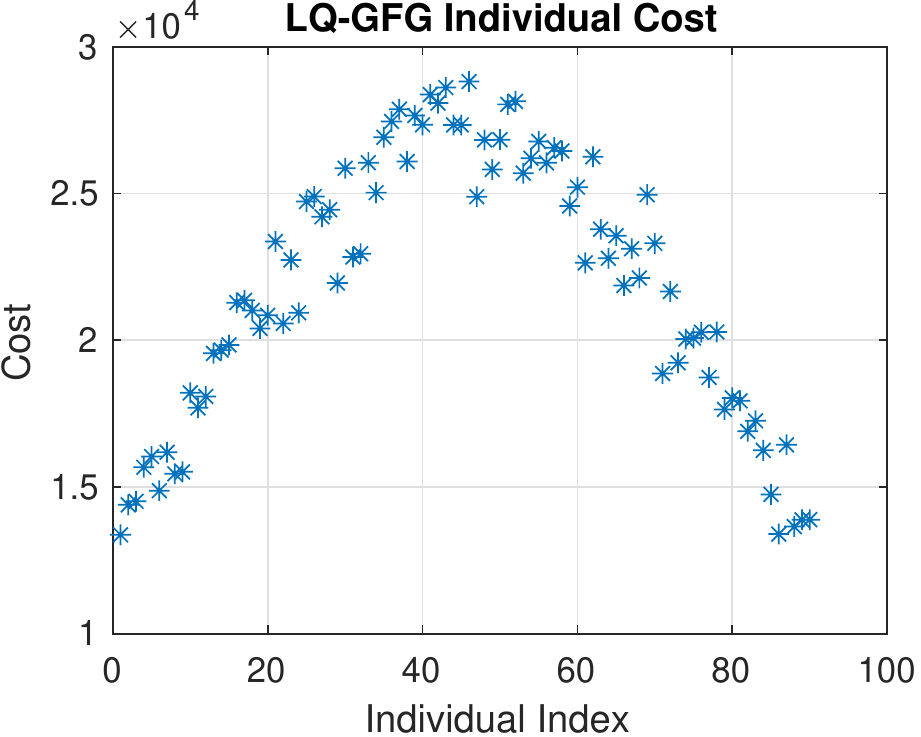}}
         \caption{Graphon field game simulation for systems on weighted graphs of size 90 with weights generated from the graphon $\FA(x,y)= 0.5\cos(\pi (x-y))+0.5$ for all $x, y \in [0,1]$ based on the $90$-uniform partition of $[0,1]$.  }\label{fig:fig-S3Equal}
\end{figure}

\section{Conclusions}
 Future work will be focused on the following  LQ-GFG problem aspects: 
(1) LQ-GFG  problems where the agent dynamics include stochastic disturbances;
(2) LQ-GFG problems based on sampling procedures;
(3) less restrictive conditions for the LQ-GMG $\varepsilon$-Nash property;
(4) an LQ-GFG methodology  for sparse networks; 
(5) convergence properties of finite equilibria to the limit equilibrium; 
{(6) general approximate solution methods to relax the finite-rank  restrictions on graphon limits.}

\bibliographystyle{plain}
\bibliography{mybib}

\begin{thebibliography}{10}

\bibitem{avella2018centrality}
Marco Avella-Medina, Francesca Parise, Michael Schaub, and Santiago Segarra.
\newblock Centrality measures for graphons: Accounting for uncertainty in
  networks.
\newblock {\em IEEE Transactions on Network Science and Engineering}, 2018.

\bibitem{bensoussan2007representation}
Alain Bensoussan, Giuseppe Da~Prato, Michel~C Delfour, and Sanjoy Mitter.
\newblock {\em Representation and Control of Infinite Dimensional Systems}.
\newblock Springer Science \& Business Media, 2 edition, 2007.

\bibitem{bensoussan2016linear}
Alain Bensoussan, KCJ Sung, Sheung Chi~Phillip Yam, and Siu-Pang Yung.
\newblock Linear-quadratic mean field games.
\newblock {\em Journal of Optimization Theory and Applications},
  169(2):496--529, 2016.

\bibitem{borgs2008convergent}
Christian Borgs, Jennifer~T Chayes, L{\'a}szl{\'o} Lov{\'a}sz, Vera~T S{\'o}s,
  and Katalin Vesztergombi.
\newblock Convergent sequences of dense graphs i: Subgraph frequencies, metric
  properties and testing.
\newblock {\em Advances in Mathematics}, 219(6):1801--1851, 2008.

\bibitem{borgs2012convergent}
Christian Borgs, Jennifer~T Chayes, L{\'a}szl{\'o} Lov{\'a}sz, Vera~T S{\'o}s,
  and Katalin Vesztergombi.
\newblock Convergent sequences of dense graphs ii. multiway cuts and
  statistical physics.
\newblock {\em Annals of Mathematics}, 176(1):151--219, 2012.

\bibitem{PeterMinyiCDC18GMFG}
Peter~E. Caines and Minyi Huang.
\newblock Graphon mean field games and the {GMFG} equations.
\newblock In {\em Proceedings of the 57th IEEE Conference on Decision and
  Control (CDC)}, pages 4129--4134, December 2018.

\bibitem{PeterMinyiCDC19GMFG}
Peter~E. Caines and Minyi Huang.
\newblock Graphon mean field games and the {GMFG} equations:
  $\varepsilon$-{N}ash equilibria.
\newblock In {\em Proceedings of the 58th IEEE Conference on Decision and
  Control (CDC)}, pages 286--292, December 2019.

\bibitem{carmona2019stochastic}
Rene Carmona, Daniel Cooney, Christy Graves, and Mathieu Lauriere.
\newblock Stochastic graphon games: I. the static case.
\newblock {\em arXiv preprint arXiv:1911.10664}, 2019.

\bibitem{chiba2019mean}
Hayato Chiba and Georgi~S Medvedev.
\newblock The mean field analysis of the {Kuramoto} model on graphs {I}. the
  mean field equation and transition point formulas.
\newblock {\em Discrete and Continuous Dynamical Systems-Series A},
  39(1):131--155, 2019.

\bibitem{crisan2014conditional}
Dan Crisan, Thomas~G Kurtz, and Yoonjung Lee.
\newblock Conditional distributions, exchangeable particle systems, and
  stochastic partial differential equations.
\newblock In {\em Annales de l'IHP Probabilit{\'e}s et statistiques},
  volume~50, pages 946--974, 2014.

\bibitem{delarue2017mean}
Fran{\c{c}}ois Delarue.
\newblock Mean field games: A toy model on an {E}rd{\"o}s-{R}enyi graph.
\newblock {\em ESAIM. Proceedings and Surveys}, 60, 2017.

\bibitem{FTR-RM-PC}
T.R. Foguen, R.~Malhame, and P.~Caines.
\newblock A quantilized mean field game approach to energy pricing with
  application to fleets of plug-in electric vehicles.
\newblock {\em In Proceedings of the 58th IEEE Conference on Decision and
  Control}, pages 299--304, 2019.

\bibitem{ShuangPeterCDC17}
Shuang Gao and Peter~E. Caines.
\newblock The control of arbitrary size networks of linear systems via graphon
  limits: An initial investigation.
\newblock In {\em Proceedings of the 56th IEEE Conference on Decision and
  Control (CDC)}, pages 1052--1057, Melbourne, Australia, December 2017.

\bibitem{ShuangPeterCDC18}
Shuang Gao and Peter~E. Caines.
\newblock Graphon linear quadratic regulation of large-scale networks of linear
  systems.
\newblock In {\em Proceedings of the 57th IEEE Conference on Decision and
  Control (CDC)}, pages 5892--5897, Miami Beach, FL, USA, December 2018.

\bibitem{ShuangPeterCDC19W2}
Shuang Gao and Peter~E. Caines.
\newblock Optimal and approximate solutions to linear quadratic regulation of a
  class of graphon dynamical systems.
\newblock In {\em Proceedings of the 58th IEEE Conference on Decision and
  Control (CDC)}, pages 8359--8365, Nice, France, December 2019.

\bibitem{ShuangPeterCDC19W1}
Shuang Gao and Peter~E. Caines.
\newblock Spectral representations of graphons in very large network systems
  control.
\newblock In {\em Proceedings of the 58th IEEE Conference on Decision and
  Control (CDC)}, pages 5068--5075, Nice, France, December 2019.

\bibitem{ShuangPeterTAC18}
Shuang Gao and Peter~E Caines.
\newblock Graphon control of large-scale networks of linear systems.
\newblock {\em IEEE Transactions on Automatic Control}, 65(10):4090--4105,
  2020.

\bibitem{gueant2015existence}
Olivier Gu{\'e}ant.
\newblock Existence and uniqueness result for mean field games with congestion
  effect on graphs.
\newblock {\em Applied Mathematics \& Optimization}, 72(2):291--303, 2015.

\bibitem{HCM07}
Minyi Huang, Peter~E Caines, and Roland~P Malham{\'e}.
\newblock Large-population cost-coupled {LQG} problems with nonuniform agents:
  individual-mass behavior and decentralized $\varepsilon$-nash equilibria.
\newblock {\em IEEE Transactions on Automatic Control}, 52(9):1560--1571, 2007.

\bibitem{huang2010nce}
Minyi Huang, Peter~E Caines, and Roland~P Malham{\'e}.
\newblock The {NCE} (mean field) principle with locality dependent cost
  interactions.
\newblock {\em IEEE Transactions on Automatic Control}, 55(12):2799--2805,
  2010.

\bibitem{huang2012social}
Minyi Huang, Peter~E Caines, and Roland~P Malham{\'e}.
\newblock Social optima in mean field {LQG} control: centralized and
  decentralized strategies.
\newblock {\em IEEE Transactions on Automatic Control}, 57(7):1736--1751, 2012.

\bibitem{huang2005nash}
Minyi Huang, Roland~P Malham{\'e}, and Peter~E Caines.
\newblock Nash equilibria for large-population linear stochastic systems of
  weakly coupled agents.
\newblock In {\em Analysis, control and optimization of complex dynamic
  systems}, pages 215--252. Springer, 2005.

\bibitem{HMC06}
Minyi Huang, Roland~P Malham{\'e}, and Peter~E Caines.
\newblock Large population stochastic dynamic games: closed-loop
  {McKean-Vlasov} systems and the {Nash} certainty equivalence principle.
\newblock {\em Communications in Information \& Systems}, 6(3):221--252, 2006.

\bibitem{janson2010graphons}
Svante Janson.
\newblock Graphons, cut norm and distance, couplings and rearrangements.
\newblock {\em arXiv preprint arXiv: 1009.2376}, 2010.

\bibitem{lacker2020case}
Daniel Lacker and Agathe Soret.
\newblock A case study on stochastic games on large graphs in mean field and
  sparse regimes.
\newblock {\em arXiv preprint arXiv:2005.14102}, 2020.

\bibitem{lasry2006jeux1}
Jean-Michel Lasry and Pierre-Louis Lions.
\newblock Jeux {\`a} champ moyen. i--le cas stationnaire.
\newblock {\em Comptes Rendus Math{\'e}matique}, 343(9):619--625, 2006.

\bibitem{lasry2006jeux2}
Jean-Michel Lasry and Pierre-Louis Lions.
\newblock Jeux {\`a} champ moyen. ii--horizon fini et contr{\^o}le optimal.
\newblock {\em Comptes Rendus Math{\'e}matique}, 343(10):679--684, 2006.

\bibitem{lewis2012optimal}
Frank~L Lewis, Draguna Vrabie, and Vassilis~L Syrmos.
\newblock {\em Optimal control}.
\newblock John Wiley \& Sons, 2012.

\bibitem{lovasz2012large}
L{\'a}szl{\'o} Lov{\'a}sz.
\newblock {\em {Large Networks and Graph Limits}}, volume~60.
\newblock American Mathematical Soc., 2012.

\bibitem{medvedev2014nonlinear}
Georgi~S Medvedev.
\newblock The nonlinear heat equation on dense graphs and graph limits.
\newblock {\em SIAM Journal on Mathematical Analysis}, 46(4):2743--2766, 2014.

\bibitem{medvedev2014nonlinear2}
Georgi~S Medvedev.
\newblock The nonlinear heat equation on w-random graphs.
\newblock {\em Archive for Rational Mechanics and Analysis}, 212(3):781--803,
  2014.

\bibitem{parise2018graphon}
Francesca Parise and Asuman Ozdaglar.
\newblock Graphon games.
\newblock In {\em Proceedings of the 2019 ACM Conference on Economics and
  Computation}, pages 457--458, 2019.

\bibitem{petit2019random}
Julien Petit, Renaud Lambiotte, and Timoteo Carletti.
\newblock Random walks on dense graphs and graphons.
\newblock {\em arXiv preprint arXiv:1909.11776}, 2019.

\bibitem{salhab2016collective}
Rabih Salhab, Roland~P Malham{\'e}, and Jerome~Le Ny.
\newblock Collective stochastic discrete choice problems: A min-{LQG} dynamic
  game formulation.
\newblock {\em IEEE Transactions on Automatic Control}, 65(8):3302--3316, 2020.

\bibitem{showalter2013monotone}
Ralph~Edwin Showalter.
\newblock {\em Monotone operators in Banach space and nonlinear partial
  differential equations}, volume~49.
\newblock American Mathematical Soc., 1997.

\bibitem{Tembine}
H.~Tembine.
\newblock Quantile-based mean-field games with common noise.
\newblock {\em arXiv:1708.06035v1}, 2017.

\end{thebibliography}
\address{Department of Electrical and Computer Engineering, McGill University, Montreal, QC, Canada, H3A 0E9\\
\email{sgao@cim.mcgill.ca}}

\address{Department of Electrical and Computer Engineering, McGill University, Montreal, QC, Canada, H3A 0E9\\
\email{fogurine@cim.mcgill.ca}}

\address{Department of Electrical and Computer Engineering, McGill University, Montreal, QC, Canada, H3A 0E9\\
\email{peterc@cim.mcgill.ca}}

\section{Appendix}
\subsection{Proof of Proposition \ref{prop:one}} \label{sec:PropositionProof}
The proof follows immediately from the standard fixed point method and we include it here for convenience of reference.
\begin{proof}

Consider the following transformation: 
\begin{equation}
  \begin{split}
    \tilde{z}_t^\ell &= z_t^\ell \text{exp}\left[{-\int_0^t\Big(\alpha-\frac{\beta^2}{r}\pi_\tau+\eta \lambda_\ell\Big)d\tau}\right]\\
    \tilde{s}_t^\ell &= s_t^\ell \text{exp}\left[{-\int_t^T\Big(\alpha-\frac{\beta^2}{r}\pi_\tau\Big)d\tau}\right]. 
  \end{split}
\end{equation}
From \eqref{eq:fixed-point-eigen}, we obtain
\begin{equation}\label{eq:trans-fixed-point}
  \begin{aligned}
    \dot{\tilde{z}}_t^\ell &= \frac{\beta^2\lambda_\ell}{r} \frac{ 1}{B_\ell(t)} \tilde{s}_t^\ell, \quad \tilde{z}_0^\ell = z_0^\ell \\
    \dot{\tilde{s}}_t^\ell 
                 &= -(1-\eta \pi_t) B_\ell(t) \tilde{z}_t^\ell , \quad \tilde{s}_T^\ell = s_T^\ell   =0  
  \end{aligned}
\end{equation}
where $B_\ell(t)\triangleq\text{exp}\left[{\int_0^t\Big(\alpha-\frac{\beta^2}{r}\pi_\tau+\eta \lambda_\ell\Big)d\tau-\int_t^T\Big(\alpha-\frac{\beta^2}{r}\pi_\tau\Big)d\tau}\right].$
 Note that $B_\ell(t)>0$ holds for all $t \in [0,T]$. 
In a compact form, we have
\begin{equation}
\MATRIX{\dot{\tilde{z}}_t^\ell\\
  \dot{\tilde{s}}_t^\ell } =
\MATRIX{ 0 & \frac{\beta^2}{r} \frac{\lambda_\ell}{B_\ell(t)} \\
-  (1-\eta \pi_t)B_\ell(t) & 0} \MATRIX{\tilde{z}_t^\ell\\
  \tilde{s}_t^\ell },
  ~ 
  \MATRIX{\tilde{z}_0^\ell \\ \tilde{s}_T^\ell 
  } = \MATRIX{z_0^\ell\\s_T^\ell}.
\end{equation}
Let $\BT_g$ and $\BT_b$ be linear mappings from $C([0,T];\BR)$ to $C([0,T];\BR)$ defined as:
\begin{equation}
\begin{aligned}
  \BT_g(x)(t) &= z_0^\ell + \int_0^t \frac{\beta^2}{r}\frac{\lambda_\ell}{B_\ell(\tau)} x(\tau) d\tau,\\
  \BT_b(x)(t) &= 0 + \int_t^T (1-\eta \pi_\tau) B_\ell(\tau)x(\tau) d\tau ,\quad t\in [0,T]
\end{aligned}
\end{equation}
for all $x\in C([0,T];\BR)$. 

Consider the mapping $\BM: C([0,T];\BR) \rightarrow C([0,T];\BR)$  defined by the composition of $\BT_g$ and $\BT_b$ as follows: 
$
 \BM: x \mapsto \BT_g(\BT_b(x)), ~ x \in C([0,T];\BR).
$
We consider the Banach space of continuous functions $C([0,T];\BR)$ endowed with the sup norm 
$\|x\|_\infty = \sup_{t\in [0,T]}|x(t)|$ for any $ x\in C([0,T];\BR)$.
To show the existence of a unique fixed point to \eqref{eq:trans-fixed-point}, we shall identify a condition for which the following holds:
\begin{equation}
\|\BM(x)- \BM(y)\|_{\infty} \leq L\|x-y\|_{\infty},   \quad  L < 1.
\end{equation}
For $t, \tau \in [0,T]$,
\begin{equation}
\begin{aligned}
    |\BM(x)(t)- \BM(y)(t)| &=  \left| \int_0^t \frac{\beta^2}{r} \frac{ \lambda_\ell}{B_\ell(\tau)} \big(\BT_b(x)(\tau)-\BT_b(y)(\tau)\big)d\tau \right| 
    \\
    & \leq  \int_0^t \frac{\beta^2}{r}\frac{|\lambda_\ell|}{B_\ell(\tau)}  \big(\big|\BT_b(x)(\tau)-\BT_b(y)(\tau)\big|\big) d\tau , \\
\end{aligned}
\end{equation}
\begin{equation}
  \begin{aligned}
      \Big|\BT_b(x)(\tau)- \BT_b(y)(\tau)\Big| &= \left| \int_\tau^T (1-\eta \pi_s)B_\ell(s)(x(s)-y(s))ds\right|\\
                 & \leq \int_\tau^T \left|(1-\eta \pi_s)\right|B_\ell(s)ds \|x-y\|_{\infty}.
  \end{aligned}
\end{equation}
Therefore, 
\begin{equation}
\begin{aligned}
  &\|\BM(x)- \BM(y)\|_\infty  \leq \int_0^T \frac{\beta^2}{r}\frac{ |\lambda_\ell|}{B_\ell(\tau)} \left(\int_\tau^T |(1-\eta \pi_s)|B_\ell(s)ds\right) d\tau \|x-y\|_{\infty}.
\end{aligned}
\end{equation}
Hence we obtain the following sufficient condition for a unique fixed point: 
\begin{equation}\label{eq:contraction}
  \int_0^T \frac{\beta^2}{r}\frac{ |\lambda_\ell|}{B_\ell(\tau)} \int_\tau^T |(1-\eta \pi_s)|B_\ell(s)ds  d\tau <1.
\end{equation}

For \eqref{eq:fixed-point}, the fixed point condition needs to be satisfied for all eigendirections. Therefore we obtain the result.
\end{proof}

\subsection{Proof of Theorem \ref{thm:deterministic-LQ-GFG}} \label{sec: proof-deterministic-LQ-GFG}
We define the following functions: for $t, \tau \in [0,T]$,
\begin{align}
&A_c^N(t)\triangleq \Big[(\alpha - \frac{\beta^2}{r} \pi_t) I +\frac{\eta}{N} A_N\Big] \label{eq:AC}\\
  & \Phi(t,\tau)\triangleq\textup{exp}\left({\int_\tau^t A_c^N(s) ds}\right) \label{eq:Phi}\\
  & \Gamma(t,\tau) \triangleq \textup{exp}\left({\int_\tau^t(\alpha-\frac{\beta^2}{r} \pi_s)ds}\right). \label{eq:Gamma}
\end{align}
Clearly $\Phi(t,\tau) = \Gamma(t,\tau) \text{exp}\left(\frac{\eta(t-\tau)}{N} A_N\right),$
where the first part $\Gamma(t,\tau)$ is just a scalar and the second part $\text{exp}\left(\frac{\eta(t-\tau)}{N} A_N\right)$ is a $N\times N$ matrix.
The same definitions are also used in Appendix section \ref{sec:proof-rinit}. 

\begin{proof}
  Let 
$\Fz^\gamma_t\triangleq \sum_{\ell=1}^d z^\ell_t \Ff_\ell(\gamma),~  \gamma \in [0,1], $
be the local graphon field calculated based on \eqref{eq:eigen-mean-state} and $\Fz^\gamma$ denote its trajectory over time $[0,T]$.  Denote the $N$-uniform partition of $[0,1]$ by $\{P_1,...,P_N\}$.
For $P_i \subset [0,1]$, let
$$
\begin{aligned}
   \bar \Fz^{i}_t &\triangleq 
   \frac{1}{\mu(P_i)}\int_{P_i} \Fz^\gamma_t d\gamma  \in \BR
    \quad \text{and} \quad \bar \Fz_t^\gamma \triangleq \bar \Fz_t^i, ~\gamma \in P_i
\end{aligned}
$$
Their trajectories over time $[0,T]$ are respectively denoted by $\bar\Fz^i$ and $\bar \Fz$.  Similarly we define $\bar \Fs_t$, $\bar\Fs_t^i$, $\bar \Fs$ and $\bar\Fs^i$.


%
For agent $i$ corresponding to $P_i \subset [0,1]$, 
the cost induced by following the infinite population and graphon limit Nash law in Strategy \ref{strategy-deterministic}  is given by 
\begin{equation*}
  \begin{aligned}
    &J({u}^{o i},u^{-o i})=  \frac12 \int_0^T \Big((x^{oi}_t -\frac1N\sum_{j=1}^{N}a_{ij}x^{oj}_t)^2 + r(u_t^{oi})^2 \Big)dt\\
    &  = J({u}^{o i},\bar\Fz^i) + \frac12 \int_0^T \Big[(\bar\Fz_t^i-\frac1N\sum_{j=1}^{N}a_{ij}x^{oj}_t)\Big]^2_{({u}^{o i},u^{-o i})}dt\\
    &\qquad  +  \int_0^T \Big[(x^{oi}_t - \bar\Fz_t^i)(\bar\Fz_t^i-\frac1N\sum_{j=1}^{N}a_{ij}x^{oj}_t) \Big]_{({u}^{o i},u^{-o i})}dt \\
    &\triangleq J({u}^{o i},\bar \Fz^i) + \frac12 I_2 + \frac12 I_3,
  \end{aligned}
\end{equation*}
where $x_t^{oi}$ and $u_t^{oi}$ represent respectively the state and the action for agent $i$ at time $t$ under the prescribed $\varepsilon$-Nash law. We use $(u^{oi}, u^{-oi})$ in the subscript to indicate the underlying strategy.
By the Cauchy-Schwarz inequality, we obtain
\begin{equation}\label{eq:I2-I3-relation}
\begin{aligned}
  |I_3| 
  & \leq 2 \sqrt{I_2}\left[\int_0^T (x^{oi}_t - \bar \Fz_t^i)^2 dt\right]^\frac12_{({u}^{oi},u^{-o i})}.
\end{aligned}
\end{equation}
%
Similarly, we obtain
\begin{equation*}
  \begin{aligned}
    &J({u}^{i},u^{-oi}) = \frac12 \int_0^T \Big((x^i_t - \frac1N\sum_{j=1}^{N}a_{ij}x^j_t)^2 + r(u_t^{i})^2 \Big)_{({u}^{i},u^{-oi})}dt\\
        &  = \frac12 \int_0^T \Big[(x^i_t -\bar \Fz_t^i)^2+ r(u_t^{i})^2 \Big]dt+\frac12 \int_0^T \Big[(\bar\Fz_t^i-\frac1N\sum_{j=1}^{N}a_{ij}x^j_t)\Big]^2_{({u}^{i},u^{-o i})}dt\\
    &\qquad  +  \int_0^T \Big[(x^i_t - \bar \Fz_t^i)(\bar \Fz_t^i-\frac1N\sum_{j=1}^{N}a_{ij}x^j_t) \Big]_{({u}^{ i},u^{-o i})}dt \\
    & \triangleq J(u^{i}, \bar\Fz^i) +\frac12 I^\prime_2 + \frac12 I^\prime_3.
  \end{aligned}
\end{equation*}
and 
\begin{equation}
\begin{aligned}
  |I^\prime_3| &\leq 2 \sqrt{I^\prime_2}\left[\int_0^T (x^i_t -\bar \Fz_t^i)^2 dt\right]^\frac12_{({u}^{i},u^{-o i})}.
\end{aligned}
\end{equation}
Therefore, we obtain the following: 
\begin{equation}\label{eq:final-cost-diff}
  \begin{aligned}
    & J(u^{oi}, u^{-oi}) - J(u^{i}, u^{-oi})\\
    &= J(u^{oi}, u^{-oi}) -J(u^{oi}, \bar\Fz^i) + J(u^{oi},\bar \Fz^i) - J(u^{i},\bar \Fz^i)  + J(u^{i}, \bar\Fz^i) - J(u^{i}, u^{-oi})\\
    &{\leq} \frac12 (|I_2| + |I_3|) + |\left(J(u^{oi}, \bar \Fz^i) - J(u^{i}, \bar\Fz^i) \right)|+ \frac12 (|I_2^\prime| + |I_3^\prime|).
  \end{aligned}
\end{equation}
%
{We will establish the following asymptotic estimates for $I_2$, $I_3$, $I_2^\prime$ and $I_3^\prime$: 
for fixed $T> 0$,
\begin{equation}
\begin{aligned}
	  &|I_2| = O \Big(E_N^2 \Big),  \quad |I_3| = \max\left\{O\Big(E_N \Big), O\Big(E_N^2 \Big)\right\} , \\
  &|I_2^\prime|  = O \Big(E_N^2 \Big),\quad |I_3^\prime|  = \max\left\{O\Big(E_N \Big), O\Big(E_N^2 \Big)\right\}, 
\end{aligned}
\end{equation}
where  $E_N\triangleq \max_{1\leq i\leq N} \frac{1}{\mu(P_i)} \left\|  (\FA-\FA^\FN)\mathds{1}_{P_i}\right\|_2.$
These estimates, together with 
Lemma \ref{lem:opt-mfg-cost-comp} and \eqref{eq:final-cost-diff},  imply that
\begin{equation}
\begin{aligned}
  &J(u^{oi}, u^{-oi}) - \inf_{u^i \in \mathcal{U}} J(u^{i}, u^{-oi})  = \max  \left\{O(E_N),O(E_N^2)\right\}.
\end{aligned}
\end{equation}

The rest of the proof will be devoted to establishing the asymptotic estimates for  $I_2$, $I_3$, $I_2^\prime$ and $I_3^\prime$.}

\subsubsection*{{1) Estimate for $I_2$}}
The next step is to show the upper bound for the term $I_2$:
\begin{equation}
  \begin{aligned}
    I_2 & \triangleq \int_0^T \Big[(\bar\Fz_t^i-\frac1N\sum_{j=1}^{N}a_{ij}x^{oj}_t)\Big]^2_{({u}^{o i},u^{-o i})}dt.
  \end{aligned}
\end{equation}
Based on \eqref{eq:fixed-point}, 
we obtain 
\begin{align}
  \dot{\bar \Fz}_t^i&= \Big(\alpha-\frac{\beta^2}{r}\pi_t\Big)\bar\Fz_t^i + \eta \frac{1}{\mu(P_i)}\int_{P_i} [\FA \Fz_t]^\gamma d\gamma  \\
  &~~+ \frac{\beta^2}{r} \frac{1}{\mu(P_i)}  \int_{P_i}[\FA \Fs_t]^\gamma d\gamma, ~ 
     \bar \Fz_0^i = \frac{1}{\mu(P_i)}\int_{P_i}[\FA \Fx_0^\FN]^\gamma d\gamma,\nonumber\\
   \dot{\Fs}_t^i&=-\Big(\alpha-\frac{\beta^2}{r}\pi_t\Big)\bar \Fs_t^i - (1-\eta \pi_t)\bar \Fz_t^i,  \quad  \bar\Fs^i_T = 0.
  \end{align}

The closed loop dynamics for the $i$th agent under Strategy \ref{strategy-deterministic} in  the finite population problem is given by  
  \begin{align}
  \dot{x}_t^{oi} &= \Big(\alpha - \frac{\beta^2}{r} \pi_t\Big){x}^{oi}_t +\eta \frac1N\sum_{j=1}^{N}a_{ij}x^{oj}_t+ \frac{\beta^2}{r} \bar \Fs_t^i.
  \end{align}
Let ${\Fz}_t^{o\FN} \in L^2[0,1]$ denote the step function that corresponds to the vector $[z_t^{o1} \ldots z_t^{oN}]^\TRANS$.  
%
%
Let $\Delta_t^{Ni} \triangleq \bar \Fz_t^i - z_t^{oi}.$ Then
$$
\begin{aligned}
  \dot \Delta_t^{Ni} 
  =& \Big(\alpha - \frac{\beta^2}{r} \pi_t\Big)\Delta_t^{Ni} 
  +\frac{\beta^2}{r}  \frac{1}{\mu(P_i)}\int_{P_i} \Big[\FA\Fs_t-\FA^\FN\bar\Fs_t\Big]^\gamma d\gamma,\\
  & + \eta\frac{1}{\mu(P_i)}\int_{P_i}\big[\FA  \Fz_t - \FA^\FN \Fz_t^{o\FN}\big]^\gamma d\gamma  \\
  %
  =& \Big(\alpha - \frac{\beta^2}{r} \pi_t\Big)\Delta_t^{Ni} 
  +\frac{\beta^2}{r} \frac{1}{\mu(P_i)} \Big\langle  \mathds{1}_{P_i}, (\FA-\FA^\FN)\Fs_t\Big\rangle\\
  &+ \eta\frac{1}{\mu(P_i)}\left\langle \mathds{1}_{P_i},(\FA - \FA^\FN )\Fz_t\right\rangle %
  + \eta\frac1N \sum_{j=1}^N a_{ij} \Delta_t^{Nj}.
\end{aligned}
$$
Therefore, 
\begin{equation}\label{eq:Delta}
  \begin{aligned}
    \dot{\Delta}_t^N =  \Big[(\alpha - \frac{\beta^2}{r} \pi_t) I +\frac{\eta}{N} A_N\Big] {\Delta}_t^N + \frac{\beta^2}{r}D_t^{N\Fs} + \eta D_t^{N\Fz}
  \end{aligned}
\end{equation}
where 
$$
\begin{aligned}
D_t^{N\Fs} &= \left(\frac{1}{\mu(P_i)} \Big\langle  \mathds{1}_{P_i}, (\FA-\FA^\FN)\Fs_t\Big\rangle\right)_{i=1}^N \in \BR^N,\\
D_t^{N\Fz} &= \left(\frac{1}{\mu(P_i)} \Big\langle  \mathds{1}_{P_i},(\FA-\FA^\FN) \Fz_t\Big\rangle\right)_{i=1}^N\in \BR^N.
\end{aligned}
$$
The initial condition is given by $\Delta_0^N =[\Delta_0^{N1},\ldots, \Delta_0^{NN}]^\TRANS$ where
\begin{equation}\label{eq:Delta-initial-cond}
\begin{aligned}
  \Delta_0^{Ni} &= \bar \Fz_0^i - z_0^{oi}= \frac{1}{\mu{(P_i)}} \big\langle \mathds{1}_{P_i}, (\FA- \FA^\FN) \Fx_0^\FN \big \rangle.
\end{aligned}
\end{equation}
We want to ensure that $\|\Delta_t^{N}\|_{\infty}\triangleq \max_{i}\{|\Delta_t^{Ni}|\}$ is bounded and establish the rate of convergence with respect to $N$.
The solution  $\{\Delta_t^N, t \in [0,T]\}$ to \eqref{eq:Delta} with the initial condition specified by \eqref{eq:Delta-initial-cond} is given as follows:
\begin{equation}
  \Delta_t^N  = \Phi(t,0)\Delta_0^N + \int_0^t\Phi(t,\tau) \Big(\frac{\beta^2}{r}D_\tau^{N\Fs}+ \eta D_t^{N\Fz} \Big)d\tau .
\end{equation}
%
%
By the definition of  $\|\cdot\|_\infty$,  we obtain
\begin{equation} \label{eq:Delta-infty}
\begin{aligned}
  \|\Delta_t^N\|_\infty& 
 \leq
  \left\|\Phi(t,0)\right\|_{\infty}\|\Delta_0^N\|_\infty + \int_0^t\left\| \Phi(t,\tau) \Big(\frac{\beta^2}{r}D_\tau^{N\Fs}+ \eta D_\tau^{N\Fz} \Big)\right\|_\infty d\tau  \\
  &\leq 
  \Big\|\Phi(t,0)\Big\|_{\infty} E_N \|\Fx_0^\FN\|_2 + \int_0^t \left\|\Phi(t,\tau)\right\|_\infty E_N \Big(\frac{\beta^2}{r}\|\Fs_\tau\|_2+|\eta|\|\Fz_\tau\|_2\Big)d\tau %
\end{aligned}
\end{equation}
where $E_N\triangleq \max_{1\leq i\leq N} \frac{1}{\mu(P_i)} \left\|  (\FA-\FA^\FN)\mathds{1}_{P_i}\right\|_2.$
Furthermore, $ \left\|\Phi(t,\tau)\right\|_\infty$ with $t\geq \tau$ is uniformly bounded in $N$ since
\begin{equation}
  \begin{aligned}
    \left\|\Phi(t,\tau)\right\|_\infty
      & = \Gamma(t,\tau)\left\|e^{(t-\tau) \frac{\eta}{N} A}\right\|_\infty 
       \leq \Gamma(t,\tau) e^{(t-\tau) |\eta| c}\\
  \end{aligned}
\end{equation}
where $|a_{ij}|\leq c$ as in Assumption \ref{ass:a_ii=0}.
Since $\left\|\Phi(t,\tau)\right\|_\infty$, $\|\Fx_0^\FN\|_2$, $\|\Fs_\tau\|_2$ and $\|\Fz_\tau\|_2$ are all uniformly bounded in $N$, \eqref{eq:Delta-infty} implies  
\begin{equation}\label{eq:Delta-BigO}
   \|\Delta_t^N\|_\infty   = O\Big(E_N\Big).
\end{equation}
Hence, by the definition of $I_2$, for fixed $T> 0$,
\begin{equation}\label{eq:I2-bigO}
  |I_2| = O \Big(E_N^2 \Big).
\end{equation}
\subsubsection*{2) Estimate for $I_3$}
Similarly, letting $\delta_t^{Ni} \triangleq \bar \Fz^i_t- x_t^{oi}$ we obtain 
\begin{equation}
  \begin{aligned}
      \dot \delta_t^{Ni} =& (\alpha - \frac{\beta^2}{r} \pi_t)\delta_t^{Ni} 
  +\frac{\beta^2}{r} \left[ \frac{1}{\mu(P_i)}\int_{P_i}\big[\FA \Fs_t\big]^\gamma d\gamma- \bar \Fs_t^i\right]\\
  & + \eta \left[\frac{1}{\mu(P_i)}\int_{P_i} [ \FA \Fz_t]^\gamma d\gamma -\frac1N\sum_{j=1}^{N}a_{ij}x^{oj}_t\right]\\
 =& (\alpha - \frac{\beta^2}{r} \pi_t)\delta_t^{Ni} 
   +\frac{\beta^2}{r} \frac{1}{\mu(P_i)}
\langle(\FA-\BI)\mathds{1}_{P_i}, \Fs_t \rangle\\
& + \eta \frac{1}{\mu(P_i)} \langle  (\FA- \BI)\mathds{1}_{P_i},\Fz_t\rangle + \eta \Delta_t^{Ni}.
  \end{aligned}
\end{equation}
Therefore, 
\begin{equation*}
  \begin{aligned}
       \delta_t^{Ni} &= 
       \Gamma(t,0)\delta_0^{Ni} + \int_0^t \Gamma(t,\tau) \frac{\beta^2}{r} \frac{1}{\mu(P_i)} \langle(\FA-\BI)\mathds{1}_{P_i}, \Fs_\tau\rangle d\tau \\
       &\quad+ \int_0^t \Gamma(t,\tau) \frac{\eta }{\mu(P_i)} \langle (\FA-\BI) \mathds{1}_{P_i},\Fz_\tau\rangle  d\tau + \int_0^t \Gamma(t,\tau) \eta \Delta_\tau^{Ni}  d\tau \\
         &\leq \Gamma(t,0)\|\FA-\BI\|_\text{op}\|\Fx_0^\FN\|_2 \\
         &\quad+ \int_0^t \Gamma(t,\tau) \|(\FA-\BI)\|_\text{op} \Big(\frac{\beta^2}{r}\|\Fs_\tau\|_2 +\eta \|\Fz_\tau\|_2 \Big)  d\tau + \int_0^t \Gamma(t,\tau) \eta \Delta_\tau^{Ni}  d\tau \\
    \end{aligned}
\end{equation*}
with the initial condition
$$
\begin{aligned}
\delta_0^{Ni} & = \bar \Fz^i_0- x_0^{oi} 
 =  \frac{1}{\mu(P_i)} \langle (\FA - \BI)\mathds{1}_{P_i} , \Fx_0^\FN  \rangle.  \\
\end{aligned}
$$
Furthermore, since $\Fx_0^\FN$, $\Fz_t$, $\Fs_t$, $\FA$, $\|\FA-\BI\|_{\text{op}}$ are uniformly bounded in $N$ and $\Delta_t^{Ni}$ is of the order $E_N$ for all $i$, we obtain
  $|\delta_t^{Ni}| = \max\left\{O (1), O(E_N)\right\}.$  This together with \eqref{eq:I2-I3-relation} and \eqref{eq:I2-bigO} implies
\begin{equation}\label{eq:I-3}
  |I_3| = \max\left\{O\Big(E_N \Big), O\Big(E_N^2 \Big)\right\} .
\end{equation}

\subsubsection*{3) Estimates for $I_2^\prime$ and $I_3^\prime$}
By Assumption \ref{ass:a_ii=0},  $I_2^\prime = I_2$.
Thus 
\begin{equation}\label{eq:I-2prime}
  |I_2^\prime| =  O(E_N^2) 
\end{equation}
Similar to \eqref{eq:I-3}, we obtain 
\begin{equation}\label{eq:I-3prime}
 | I_3^\prime| = \max  \left\{O(E_N),O(E_N^2) 
  \right\}. 
\end{equation}

\end{proof}                                             
\subsubsection*{Lemma for  Theorem \ref{thm:deterministic-LQ-GFG}}~\\~

Let $x^{oi}, i\in\{1,...,N\}$, denotes the state trajectory of agent $j$ when all agents are following Strategy \ref{strategy-deterministic}.
Assume the following trajectories are given to an arbitrary agent indexed by $i$:
\begin{itemize}
	\item the reference trajectory $\bar \Fz^i_t\triangleq 
   \frac{1}{\mu(P_i)}\int_{P_i} \Fz^\gamma_t d\gamma$ for all $t\in[0,T]$ with $\Fz$ as the graphon filed of the corresponding limit graphon field game problem satisfying \eqref{eq:fixed-point} 
	\item the dynamic offset $z_t^{oi} \triangleq \frac{1}{N}\sum_{j\in \mathcal{N}_i} a_{ij}x_t^{oj}$ for all $t\in [0,T]$ where $\mathcal{N}_i$ denotes the set of neighours for agent $i$ excluding itself.
\end{itemize}
Then consider the following linear quadratic tracking problem for agent $i$:
\begin{equation}\label{eq:complete-info-lqt}
\begin{aligned}
    &\dot x_t^{i} = \alpha x_t^{i} + \beta u_t^{i}+\eta \frac{1}{N}\sum_{j\in \mathcal{N}_i} a_{ij}x_t^{oj}\\
  &  J(u^{i}, \bar \Fz^i ) = \frac12 \int_0^T\left[ (x^{i}_t-\bar \Fz_t^i)^2 + r(u_t^{i})^2 \right]dt .
\end{aligned}
\end{equation}

\begin{lemma}\label{lem:opt-mfg-cost-comp}
  Under 
  Assumptions \ref{ass:limit-assumptions}, \ref{ass:spectral-graphon}, \ref{ass:Riccati-Sol-Existence}
and \ref{ass:a_ii=0}, the following estimate for the difference between costs based on different control laws for the problem in \eqref{eq:complete-info-lqt} holds
  \begin{equation}\label{eq:cost-opt-track-ave}
  J(u^{oi}, \bar \Fz^i) - \inf_{u^i \in \mathcal{U}}J(u^{i}, \bar \Fz^i)  = O \Big(E_N\Big),
\end{equation}
where $u^{oi}$ is generated based on Strategy \ref{strategy-deterministic}, 
$\mathcal{U}\triangleq L^2([0,T];\BR)$ and  $E_N\triangleq \max_{1\leq i\leq N} \frac{1}{\mu(P_i)} \left\|  (\FA-\FA^\FN)\mathds{1}_{P_i}\right\|_2$.

\end{lemma}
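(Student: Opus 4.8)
The plan is to compare $u^{oi}$ against the genuine optimizer of the tracking problem \eqref{eq:complete-info-lqt} and show their costs differ by $O(E_N)$. Since $u^{oi}$ and the optimizer share the same Riccati feedback gain $-\frac{\beta}{r}\pi_t$, the comparison reduces to estimating a difference of affine offset terms and of state trajectories, both of which I expect to be $O(E_N)$. First I would identify the optimal control of \eqref{eq:complete-info-lqt} by standard linear quadratic tracking theory (see e.g. \cite{lewis2012optimal}): treating $w_t:=\eta z_t^{oi}$ as a fixed disturbance and $\bar\Fz_t^i$ as the reference, and noting that the state and control weights of \eqref{eq:complete-info-lqt} are exactly $1$ and $r$, the minimizer is $u_t^{i,\ast}=-\frac{\beta}{r}\pi_t\bar x_t^i+\frac{\beta}{r}\hat s_t^i$, where $\pi$ solves \eqref{eq:riccati-1} (which has a bounded nonnegative solution on all of $[0,T]$, so a genuine minimizer exists), $\hat s^i$ solves the backward ODE $-\dot{\hat s}_t^i=(\alpha-\frac{\beta^2}{r}\pi_t)\hat s_t^i+\bar\Fz_t^i-\eta\pi_t z_t^{oi}$ with $\hat s_T^i=0$, and $\bar x^i$ is the resulting closed loop. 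A key bookkeeping observation is that, because $a_{ii}=0$ (Assumption \ref{ass:a_ii=0}), the disturbance $\eta\frac1N\sum_{j\in\mathcal N_i}a_{ij}x_t^{oj}$ in \eqref{eq:complete-info-lqt} coincides with the coupling term in agent $i$'s game dynamics; hence applying the feedback law $u^{oi}$ inside \eqref{eq:complete-info-lqt} reproduces exactly the game trajectory $x^{oi}$, so $J(u^{oi},\bar\Fz^i)$ is precisely the quantity appearing in \eqref{eq:final-cost-diff}.

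Second I would estimate the discrepancies. Averaging the backward $\Fs$-equation of \eqref{eq:fixed-point} over $P_i$ shows $\bar\Fs^i$ obeys the same backward ODE as $\hat s^i$ but with $z_t^{oi}$ replaced by $\bar\Fz_t^i$ in the last term; hence $\hat s^i-\bar\Fs^i$ solves a linear backward ODE with bounded coefficients, zero terminal value, and forcing $\eta\pi_t(\bar\Fz_t^i-z_t^{oi})=\eta\pi_t\Delta_t^{Ni}$. Since $\|\Delta^N\|_\infty=O(E_N)$ by \eqref{eq:Delta-BigO} in the proof of Theorem \ref{thm:deterministic-LQ-GFG} and $\pi$ is bounded, Gr\"onwall gives $|\hat s_t^i-\bar\Fs_t^i|=O(E_N)$ uniformly in $t$ and $N$. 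Substituting into the closed-loop state equations, which differ only through the affine offsets $\frac{\beta^2}{r}\bar\Fs^i$ versus $\frac{\beta^2}{r}\hat s^i$ while sharing the initial condition $x_0^i$ and disturbance $\eta z^{oi}$, yields $|x_t^{oi}-\bar x_t^i|=O(E_N)$, and then $u_t^{oi}-u_t^{i,\ast}=-\frac{\beta}{r}\pi_t(x_t^{oi}-\bar x_t^i)+\frac{\beta}{r}(\bar\Fs_t^i-\hat s_t^i)=O(E_N)$, again uniformly.

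Finally I would expand the cost difference. Writing $a^2-b^2=(a-b)(a+b)$ for the running-cost and control-cost terms,
\[
J(u^{oi},\bar\Fz^i)-J(u^{i,\ast},\bar\Fz^i)=\frac12\int_0^T\big[(x_t^{oi}-\bar x_t^i)(x_t^{oi}+\bar x_t^i-2\bar\Fz_t^i)+r(u_t^{oi}-u_t^{i,\ast})(u_t^{oi}+u_t^{i,\ast})\big]\,dt,
\]
where each ``difference'' factor is $O(E_N)$ and each ``sum'' factor is $O(1)$ uniformly in $N$; integrating over the fixed interval $[0,T]$ gives $O(E_N)$. Since $J(u^{i,\ast},\bar\Fz^i)=\inf_{u^i\in\mathcal U}J(u^i,\bar\Fz^i)$ and $J(u^{oi},\bar\Fz^i)\ge$ this infimum, the claimed estimate follows. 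The main technical point is the uniform-in-$N$ boundedness of all trajectories involved: for the game states $x^{oi}$ this is a Gr\"onwall estimate on the $N$-agent closed loop using $\|\frac{\eta}{N}A_N\|_\infty\le|\eta|c$ (Assumption \ref{ass:a_ii=0}) and boundedness of the initial data, while for the averaged field one must note that $|\bar\Fz_t^i|=\frac{1}{\mu(P_i)}|\langle\FA\mathds{1}_{P_i},\Fx_t\rangle|\le c\|\Fx_t\|_2$ by the pointwise bound $|\FA|\le c$, which is what prevents the averaging $\frac{1}{\mu(P_i)}\int_{P_i}(\cdot)$ from introducing a spurious $\sqrt N$ factor.
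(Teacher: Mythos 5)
Your proposal is correct and follows essentially the same route as the paper's own proof: identify the optimal tracking control $u^{*i}$ sharing the Riccati gain $\pi$, show the offset difference $\bar\Fs^i - s^{*i}$ obeys a linear ODE forced by $\eta\pi_t\Delta_t^{Ni}$ so that it is $O(E_N)$ via \eqref{eq:Delta-BigO} (with Assumption \ref{ass:a_ii=0} ensuring the game coupling equals the fixed disturbance), then propagate to the state and control differences and to the cost. Your added remarks on the cost expansion and uniform-in-$N$ boundedness are just a more explicit rendering of steps the paper leaves implicit.
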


\begin{proof}
The optimal control law for the problem \eqref{eq:complete-info-lqt} is given by
  \begin{align}
      u_t^{*i} &= -\frac{\beta}{r} \pi_t x_t^{*i}+ \frac{\beta}{r} s_t^{*i},\\
      -\dot{\pi}_t &= 2\alpha \pi_t -\frac{\beta^2}{r}\pi_t^2 + 1, \qquad  \pi_T = 0, \label{eq:riccati}\\
      -\dot{s}_t^{*i} & = \Big(\alpha - \frac{\beta^2}{r} \pi_t \Big) s_t^{*i} + \bar\Fz_t^i-\eta \pi_t z_t^{oi},\quad
        s_T^{*i} = 0, \label{eq:s-star}
  \end{align}
   The dynamics and cost under the optimal control  are respectively given by
\begin{align}
    &\dot x_t^{*i} = \alpha x_t^{*i} + \beta u_t^{*i}+\eta \frac{1}{N}\sum_{j\in \mathcal{N}_i} a_{ij}x_t^{oj}\\
  &  J(u^{*i}, \bar \Fz^i ) = \frac12 \int_0^T\left[ (x^{*i}_t-\bar \Fz_t^i)^2 + r(u_t^{*i})^2 \right]dt .
\end{align}
On the other hand, the control following  Strategy \ref{strategy-deterministic} is given by 
\begin{equation}
	      {u}^{oi}_t   =-\frac{\beta}{r} \pi_t x_t^{oi}+ \frac{\beta}{r} \bar\Fs_t^i 
\end{equation}
with $\bar \Fs^i$ defined as in \eqref{eq:Nash-Control-Case12} and $\pi_{(\cdot)}$ given by \eqref{eq:riccati}. Assumptions \ref{ass:limit-assumptions}, \ref{ass:spectral-graphon} and \ref{ass:Riccati-Sol-Existence} ensure that $\bar \Fs^i$ always exists. 
The associated dynamics and cost for agent $i$ are then respectively given by 
\begin{align}
   &\dot x_t^{oi} = \alpha x_t^{oi} + \beta u_t^{oi}+\eta \frac{1}{N}\sum_{j\in \mathcal{N}_i} a_{ij}x_t^{oj}\\
	&   J(u^{oi}, \bar \Fz^i ) = \frac12 \int_0^T\left[ (x^{oi}_t-\bar \Fz_t^i)^2 + r(u_t^{oi})^2 \right]dt.
\end{align}
Based on the definition of $\bar\Fs_t^i$ in \eqref{eq:Nash-Control-Case12} and that of $\bar\Fz_t^i$, it is obvious that 
\begin{equation}
\begin{aligned}
    \dot{\bar \Fs}_t^i&=-\Big(\alpha-\frac{\beta^2}{r}\pi_t\Big)\bar \Fs_t^i - (1-\eta \pi_t)\bar \Fz_t^i,  \quad  \bar\Fs^i_T = 0.
\end{aligned}
\end{equation}
This together with \eqref{eq:s-star} yields
\begin{equation}
  \frac{d(\bar \Fs_t^i - s_t^{*i})}{dt} = -\Big(\alpha-\frac{\beta^2}{r}\pi_t\Big) (\bar \Fs_t^i - s_t^{*i}) +\eta \pi_t(\bar \Fz_t^i- z_t^{oi}).
\end{equation}
Let $\Delta_t^{Ni} \triangleq \bar \Fz_t^i - z_t^{oi}.$ Then
$
  (\bar \Fs_t^i - s_t^{*i}) = \int_T^t\Gamma(t,\tau)\eta \pi_\tau \Delta_\tau^{Ni} d\tau
$
and 
\begin{equation}\label{eq:s-diff}
  (\bar \Fs_t^i - s_t^{*i}) = O \Big(E_N\Big), \quad i\in\{1,...,N\}.
\end{equation}
By comparing the following closed-loop dynamics under these two different control laws,
we obtain
\begin{equation*}
\begin{aligned}
    \frac{d(x_t^{oi} - x_t^{*i})}{dt} & = \Big(\alpha -\frac{\beta^2}{r} \pi_t \Big)(x_t^{oi}- x_t^{*i}) + \frac{\beta^2}{r}(\bar \Fs_t^i - s^{*i}_t),~
    x_0^{oi} - x_0^{*i}  = 0.
\end{aligned}
\end{equation*}
Therefore, the  difference is explicitly obtained as
\begin{equation}
\begin{aligned}
  x_t^{oi} - x_t^{*i} 
  & = \frac{\eta\beta^2}{r}\int_0^t \Gamma(t,\tau) \left(\int_T^\tau \Gamma(\tau,q)\eta \pi_q \Delta_q^{Ni} dq\right)d\tau.
\end{aligned}
\end{equation}
Under Assumption \ref{ass:a_ii=0}, $\sum_{j=1}^N a_{ij}x^{oj} = \sum_{j\in \mathcal{N}_i} a_{ij}x^{oj}$ for all $i\in \{1,\ldots,N\}$. Then the result from \eqref{eq:Delta-BigO} applies here, that is, 
$\forall t\in[0,T], ~  |\Delta_t^{Ni}| = O \Big(E_N \Big).$
Hence for any $t\in[0,T]$,
$  |(x_t^{oi} - x_t^{*i})| = O \Big(E_N\Big).$
This, together with \eqref{eq:s-diff}, implies 
$
  |(u_t^{oi} - u_t^{*i})|= O \Big(E_N\Big).$
Hence 
  $\left|J(u^{oi}, \bar \Fz^i) - J(u^{*i}, \bar \Fz^i) \right| = O \Big(E_N \Big),$
 that is, we obtain \eqref{eq:cost-opt-track-ave}.

\end{proof}
%
\section{Appendix II}
\subsection{Proof of Theorem \ref{thm:stochasticinit-LQ-GFG}} \label{sec:proof-rinit} 
\begin{proof}

  Let 
$\Fz^\gamma_t\triangleq \sum_{\ell=1}^d z^\ell_t \Ff_\ell(\gamma),~  \gamma \in [\underline\gamma,\overline\gamma] \subset [0,1], $
be the Local Graphon  Field calculated based on Strategy \ref{str:randinit} and $\Fz^\gamma$ denote its trajectory over time $[0,T]$.  
For $P_i \subset [0,1]$, let
$
   \bar \Fz^{i}_t \triangleq 
   \frac{1}{\mu(P_i)}\int_{P_i} \Fz^\gamma_t d\gamma 
     $ \text{and} $ \bar \Fz_t^\gamma \triangleq \bar \Fz_t^i, ~\gamma \in P_i.
$
Let  $\bar\Fz^i$ and $\bar \Fz$ respectively denote their trajectories over time $[0,T]$.  Similarly we define $\bar \Fs_t$, $\bar\Fs_t^i$, $\bar \Fs$ and $\bar\Fs^i$.

For agent $i$ that corresponds to $P_i \subset [0,1]$, the cost induced by following the limit control law is given by 
\begin{equation*}
  \begin{aligned}
    &J({u}^{o i},u^{-o i})= \frac12  \mathds{E} \int_0^T \Big((x^{oi}_t -\frac1N\sum_{j=1}^{N}a_{ij}x^{oj}_t)^2 + r(u_t^{oi})^2 \Big)dt\\
    &  = J({u}^{o i},\bar\Fz^i) + \frac12 \mathds{E}\int_0^T \Big[(\bar\Fz_t^i-\frac1N\sum_{j=1}^{N}a_{ij}x^{oj}_t)\Big]^2_{({u}^{o i},u^{-o i})}dt\\
    &\qquad  +  \mathds{E}\int_0^T \Big[(x^{oi}_t - \bar\Fz_t^i)(\bar\Fz_t^i-\frac1N\sum_{j=1}^{N}a_{ij}x^{oj}_t) \Big]_{({u}^{o i},u^{-o i})}dt \\
    &\triangleq J({u}^{o i},\bar \Fz^i) + \frac12 I_2 +\frac12  I_3,
  \end{aligned}
\end{equation*}
where $x_t^{oi}$ and $u_t^{oi}$ represent respectively the state and control for agent $i$ at time $t$ under the prescribed $\varepsilon$-Nash law. 
By the Cauchy-Schwarz inequality, we obtain
\begin{equation}
\begin{aligned}
  I_3 
  & \leq 2 \sqrt{I_2}\left[\mathds{E}\int_0^T (x^{oi}_t - \bar \Fz_t^i)^2 dt\right]^\frac12_{({u}^{oi},u^{-o i})}.
\end{aligned}
\end{equation}
%
Let $J({u}^{i},u^{-oi})$ denote the cost for agent $i$ by unilaterally deviating from the prescribed $\varepsilon$-Nash law where all other agents following the prescribed $\varepsilon$-Nash law. Then
\begin{equation*}
  \begin{aligned}
    J({u}^{i},u^{-oi}) &= \frac12 \mathds{E} \int_0^T \Big((x^i_t - \frac1N\sum_{j=1}^{N}a_{ij}x^j_t)^2 + r(u_t^{i})^2 \Big)_{({u}^{i},u^{-oi})}dt\\
        &  =J(u^{i}, \bar\Fz^i)   +\frac12 \mathds{E}\int_0^T \Big[(\bar\Fz_t^i-\frac1N\sum_{j=1}^{N}a_{ij}x^j_t)\Big]^2_{({u}^{i},u^{-o i})}dt\\
    &\qquad  + \mathds{E} \int_0^T \Big[(x^i_t - \bar \Fz_t^i)(\bar \Fz_t^i-\frac1N\sum_{j=1}^{N}a_{ij}x^j_t) \Big]_{({u}^{ i},u^{-o i})}dt \\
    & \triangleq J(u^{i}, \bar\Fz^i) + \frac12 I^\prime_2 + \frac12 I^\prime_3.
  \end{aligned}
\end{equation*}
Similarly, 
\begin{equation}
\begin{aligned}
 | I^\prime_3| &\leq 2 \sqrt{I^\prime_2}\left[\mathds{E}\int_0^T (x^i_t -\bar \Fz_t^i)^2 dt\right]^\frac12_{({u}^{i},u^{-o i})}.\label{eq:I2-I3-relation-rinit}
\end{aligned}
\end{equation}
Therefore, we obtain the following: 
\begin{equation}\label{eq:final-cost-diff-randinit}
  \begin{aligned}
    & J(u^{oi}, u^{-oi}) - J(u^{i}, u^{-oi})\\
    &= J(u^{oi}, u^{-oi}) -J(u^{oi}, \bar\Fz^i) + J(u^{oi},\bar \Fz^i) - J(u^{i},\bar \Fz^i)   + J(u^{i}, \bar\Fz^i) - J(u^{i}, u^{-oi})\\
    & {\leq} \frac12 (|I_2| + |I_3|) + |\left(J(u^{oi}, \bar \Fz^i) - J(u^{i}, \bar\Fz^i) \right)|+ \frac12 (|I_2^\prime| + |I_3^\prime|).
  \end{aligned}
\end{equation}

{We will establish the following asymptotic estimates for $I_2$, $I_3$, $I_2^\prime$ and $I_3^\prime$: 
for fixed $T> 0$, for any $i\in \{1,...,N\}$,
\begin{equation}
\begin{aligned}
	  & |I_2| = \max\Big\{O\Big(\frac{1}{N}\Big), O(E_N^2)\Big\},  \quad  | I_3 |= \max\left\{O\Big(\frac{1}{\sqrt{N}} \Big), O\Big(E_N \Big), O\Big(E_N^2 \Big)\right\}, \\
  &|I_2^\prime| =  \max\Big\{O\Big(\frac{1}{N}\Big), O(E_N^2)\Big\},\quad |I_3^\prime| = \max\left\{O\Big(\frac{1}{\sqrt{N}} \Big), O\Big(E_N \Big), O\Big(E_N^2 \Big)\right\},
\end{aligned}
\end{equation}
where  $E_N\triangleq \max_{1\leq i\leq N} \frac{1}{\mu(P_i)} \left\|  (\FA-\FA^\FN)\mathds{1}_{P_i}\right\|_2.$
These estimates, 
together with Lemma \ref{lem:opt-mfg-cost-comp-randinit} and \eqref{eq:final-cost-diff-randinit},  imply that
\begin{equation*}
\begin{aligned}
  &J(u^{oi}, u^{-oi}) -  \inf_{u^i \in \mathcal{U}}J(u^{i}, u^{-oi}) = \max\left\{O\Big(\frac{1}{\sqrt{N}} \Big), O\Big(E_N \Big), O\Big(E_N^2 \Big)\right\}. 
\end{aligned}
\end{equation*}

The rest of the proof will be devoted to establishing the asymptotic estimates for  $I_2$, $I_3$, $I_2^\prime$ and $I_3^\prime$.}

\subsubsection*{i) Estimate for $I_2$}
The next step is to show the upper bound for the term $I_2$:
\begin{equation}
  \begin{aligned}
    I_2 & \triangleq \mathds{E} \int_0^T \Big[(\bar\Fz_t^i-\frac1N\sum_{j=1}^{N}a_{ij}x^{oj}_t)\Big]^2_{({u}^{o i},u^{-o i})}dt.
  \end{aligned}
\end{equation}
Similar to the deterministic case, we obtain 
\begin{align}
  \dot{\bar \Fz}_t^i&= \Big(\alpha-\frac{\beta^2}{r}\pi_t\Big)\bar\Fz_t^i + \eta \frac{1}{\mu(P_i)}\int_{P_i} [\FA \Fz_t]^\gamma d\gamma  + \frac{\beta^2}{r} \frac{1}{\mu(P_i)}  \int_{P_i}[\FA \Fs_t]^\gamma d\gamma, ~ \\
  \bar \Fz_0^i & = \frac{1}{\mu(P_i)}\int_{P_i}[\FA \mu \mathds{1}]^\gamma d\gamma,\nonumber\\
   \dot{\bar \Fs}_t^i&=-\Big(\alpha-\frac{\beta^2}{r}\pi_t\Big)\bar \Fs_t^i - (1-\eta \pi_t)\bar \Fz_t^i,  \quad  \bar\Fs^i_T = 0.
  \end{align}

The closed loop dynamics for the finite population problem is given by  
  \begin{align}
  \dot{x}_t^{oi} &= \Big(\alpha - \frac{\beta^2}{r} \pi_t\Big){x}^{oi}_t +\eta \frac1N\sum_{j=1}^{N}a_{ij}x^{oj}_t+ \frac{\beta^2}{r} \bar \Fs_t^i,
  \end{align}
Let ${\Fz}_t^{o\FN} \in L^2[0,1]$ denote the step function that corresponds to the vector $[z_t^{o1} \ldots z_t^{oN}]^\TRANS$ via $N$-uniform partition.  Then
\begin{equation}
    \begin{aligned}
       \dot{\Fz}_t^{o\FN} = \Big(\alpha - \frac{\beta^2}{r} \pi_t\Big){\Fz}^{o\FN}_t +\eta \FA^\FN \Fz^{o\FN}_t+ \frac{\beta^2}{r}\FA^\FN \bar \Fs_t^i,
       \quad \Fz_0^{o\FN}  = \FA^\FN \mu \mathds{1}.
\end{aligned} 
\end{equation}
Let $\Delta_t^{Ni} \triangleq \bar \Fz_t^i - z_t^{oi}.$
$$
\begin{aligned}
  \dot \Delta_t^{Ni} 
  =& \Big(\alpha - \frac{\beta^2}{r} \pi_t\Big)\Delta_t^{Ni} 
  +\frac{\beta^2}{r} \frac{1}{\mu(P_i)} \Big\langle  \mathds{1}_{P_i}, (\FA-\FA^\FN)\Fs_t\Big\rangle,\\
  &+ \eta\frac{1}{\mu(P_i)}\left\langle \mathds{1}_{P_i},(\FA - \FA^\FN )\Fz_t\right\rangle %
  + \eta\frac1N \sum_{j=1}^N a_{ij} \Delta_t^{Nj}.
\end{aligned}
$$
Therefore, 
\begin{equation}
  \begin{aligned}
    \dot{\Delta}_t^N =  \Big[(\alpha - \frac{\beta^2}{r} \pi_t) I +\frac{\eta}{N} A\Big] {\Delta}_t^N + \frac{\beta^2}{r}D_t^{N\Fs} + \eta D_t^{N\Fz}
  \end{aligned}
\end{equation}
where 
$$
\begin{aligned}
D_t^{N\Fs} &= \left(\frac{1}{\mu(P_i)} \Big\langle  \mathds{1}_{P_i}, (\FA-\FA^\FN)\Fs_t\Big\rangle\right)_{i=1}^N \in \BR^N,\\
D_t^{N\Fz} &= \left(\frac{1}{\mu(P_i)} \Big\langle  \mathds{1}_{P_i},(\FA-\FA^\FN) \Fz_t\Big\rangle\right)_{i=1}^N\in \BR^N.
\end{aligned}
$$
By the Cauchy-Schwartz inequality,  
\begin{equation}\label{eq:Dz-Ds-inequalities}
    \begin{aligned}
        D^{Ni}_\Fs (t) &\triangleq \frac{1}{\mu(P_i)} \Big\langle  \mathds{1}_{P_i}, (\FA-\FA^\FN)\Fs_t\Big\rangle \leq E_N \|\Fs_t\|_2,\\
        D^{Ni}_\Fz (t) &\triangleq \frac{1}{\mu(P_i)} \Big\langle  \mathds{1}_{P_i}, (\FA-\FA^\FN)\Fz_t\Big\rangle \leq E_N \|\Fz_t\|_2.
    \end{aligned}
\end{equation}
The initial condition is given by $\Delta_0^N =[\Delta_0^{N1},\ldots, \Delta_0^{NN}]^\TRANS$ where
\begin{equation}\label{eq:Delta-rand}
\begin{aligned}
 \Delta_0^{Ni}  & = \bar \Fz_0^i - z_0^{oi}= \frac{1}{\mu{(P_i)}}\langle \mathds{1}_{P_i}, \FA \mu \mathds{1}- \FA^\FN \Fx_0^\FN \rangle\\
  & = \frac{1}{\mu{(P_i)}}\langle \mathds{1}_{P_i}, \FA(\mathds{1}\mu-\Fx_0^\FN)\rangle + \frac{1}{\mu{(P_i)}}\langle \mathds{1}_{P_i}, (\FA-\FA^\FN) \Fx_0^\FN \rangle.\\
\end{aligned}
\end{equation}
%
The solution  $\{\Delta_t^N, t \in [0,T]\}$ to \eqref{eq:Delta-rand} is given by
\begin{equation}\label{eq:Delta-Evo}
\begin{aligned}
     \Delta_t^N  
  & = \Phi(t,0)\Delta_0^N + \int_0^t \Phi(t,\tau)\Big(\frac{\beta^2}{r}D_\tau^{N\Fs}+ \eta D_t^{N\Fz} \Big)d\tau.
\end{aligned}
\end{equation}
%
Taking the expectation of \eqref{eq:Delta-rand} yields
$$
\begin{aligned}
\mathds{E}\Delta_0^{Ni} 
&= \frac{1}{\mu{(P_i)}}\langle \mathds{1}_{P_i}, (\FA - \FA^\FN) \mu \mathds{1} \rangle=\frac{1}{\mu{(P_i)}}\langle (\FA - \FA^\FN) \mathds{1}_{P_i}, \mu \mathds{1} \rangle,
\end{aligned}
$$
\text{Hence by the Cauchy-Schwarz inequality }
\begin{equation}\label{eq:Delta-Expectation-Ni}
\begin{aligned}
&|\mathds{E}\Delta_0^{Ni}| \leq \frac1{\mu(P_i)} \|(\FA-\FA^\FN)\mathds{1}_{P_i}\|_2|\mu|
 =|\mu| E_N.
\end{aligned}
\end{equation} 
Taking the expectation of \eqref{eq:Delta-Evo} yields
\begin{equation}
    \begin{aligned}
        \mathds{E}\Delta_t^N  
        =& \Phi(t,0)\mathds{E}\Delta_0^N 
        + \int_0^t \Phi(t,\tau) \Big(\frac{\beta^2}{r}D_\tau^{N\Fs}+ \eta D_t^{N\Fz} \Big)d\tau .
    \end{aligned}
\end{equation}
Following a similar argument in the deterministic case  in \eqref{eq:Delta-infty},
\begin{equation} \label{eq:expectation-Delta-randinit}
	\|\mathds{E}\Delta_t^N \|_\infty =\max_i \mathds{E}\Delta_t^{Ni}= O(E_N).
\end{equation}
The variance of $\Delta_0^{Ni}$, $i\in \{1,..., N\}$, satisfies the following
$$
\begin{aligned}
\text{var}(\Delta_0^{Ni}) &= \mathds{E}\left[\frac{1}{\mu{(P_i)}}\langle \mathds{1}_{P_i}, \FA^\FN(\mu \mathds{1}- \Fx_0^\FN) \rangle\right]^2\\
& = \mathds{E} \left[\frac1N \sum_{j=1}^Na_{ij}(x_0^j-\mu)\right]^2  \leq \frac1N \sigma^2 c^2
 \end{aligned}
$$
where $c=\max_{i,j}|a_{ij}|$. Furthermore,
$$
\mathds{E} (\Delta_0^{Ni}-\mathds{E}\Delta_0^{Ni}) (\Delta_0^{Nj}-\mathds{E}\Delta_0^{Nj}) \leq \frac1N \sigma^2 c^2, \quad \forall i, j \in \{1,\ldots,N\}.
$$
Note that
\begin{equation}
    \text{var}(\Delta_0^N)= \frac{A_N}{N}\text{var}(x_0^N) \frac{A_N}{N}^\TRANS= \frac{A_N}{N} \text{diag}(\sigma^2, \ldots, \sigma^2) \frac{A_N}{N}^\TRANS 
\end{equation}
and
\begin{equation} \label{eq:Delta-Init-2ndM}
\mathds{E}\big[\Delta_0^N \Delta_0^N\strut^\TRANS \big] =   \text{var}(\Delta_0^N) + \mathds{E}\Delta_0^N \mathds{E}\Delta_0^N\strut^\TRANS.
\end{equation}
From \eqref{eq:Delta-Evo} and \eqref{eq:Delta-Init-2ndM}, we obtain
\begin{equation}\label{eq:DeltaSqError}
\begin{aligned}
     \mathds{E}&\big[\Delta_t^N \Delta_t^N\strut^\TRANS \big]
    = \Phi(t,0) \left(\text{var}(\Delta_0^N)\right) \Phi(t,0)^\TRANS+ \Phi(t,0) \left(\mathds{E}\Delta_0^N \mathds{E}\Delta_0^N\strut^\TRANS\right) \Phi(t,0)^\TRANS \\
    &+ \mathds{E}\Delta_0^N \cdot \left[\int_0^t \Phi(t,\tau) \Big(\frac{\beta^2}{r}D_\tau^{N\Fs}+ \eta D_t^{N\Fz} \Big)d\tau \right]^\TRANS\\
    & + \left[\int_0^t \Phi(t,\tau) \Big(\frac{\beta^2}{r}D_\tau^{N\Fs}+ \eta D_t^{N\Fz} \Big)d\tau \right]\cdot\mathds{E}\Delta_0^N\strut^\TRANS\\
    & + \left[\int_0^t \Phi(t,\tau) \Big(\frac{\beta^2}{r}D_\tau^{N\Fs}+ \eta D_t^{N\Fz} \Big)d\tau \right] \left[\int_0^t \Phi(t,\tau) \Big(\frac{\beta^2}{r}D_\tau^{N\Fs}+ \eta D_t^{N\Fz} \Big)d\tau \right]^\TRANS\\
    & \triangleq Y_1 + Y_2+ Y_3 + Y_4+ Y_5.
\end{aligned}
\end{equation}
%
The first part of \eqref{eq:DeltaSqError} sastisfies:
\begin{equation}
\begin{aligned}
    \left|\left[Y_1\right]_{ii}\right|  &=\left| \left[\Phi(t,0) \frac{A_N}{N} \text{diag}(\sigma^2, \ldots, \sigma^2) \frac{A_N}{N}^\TRANS  \Phi(t,0)^\TRANS\right]_{ii}\right|\\
    &= \sigma^2\Big[\Gamma(t,0)\Big]^2\left|\left[e^{\left(\frac{\eta t}{N} A_N\right)} \frac{A_N}{N}  {\frac{A_N}{N}}^\TRANS  e^{\left(\frac{\eta t}{N} A_N\right)}\strut^\TRANS\right]_{ii}\right|\\
    & \leq \sigma^2\Big[\Gamma(t,0)\Big]^2\sum_{k=1}^N \Big(\frac{c}{N}e^{c|\eta|t}\Big)^2 \quad  \text{(by Lemma \ref{eq:point-bound-expA})}\\
    & = \frac{\sigma^2}{N}\Big[\Gamma(t,0)\Big]^2 c^2e^{2c|\eta|t}.
\end{aligned}
\end{equation}
The second part of \eqref{eq:DeltaSqError} sastisfies:
$$
\begin{aligned}
    &\left|\left[Y_2\right]_{ii}\right| = \Big[\Gamma(t,0)\Big]^2\left|\left[e^{\left(\frac{\eta t}{N} A_N\right)} \left(\mathds{E}\Delta_0^N \mathds{E}\Delta_0^N\strut^\TRANS\right)  e^{\left(\frac{\eta t}{N} A_N\right)}\strut^\TRANS\right]_{ii}\right|,
\end{aligned}
$$
where
$$
\begin{aligned}
    &\left|\left[e^{\left(\frac{\eta t}{N} A_N\right)} \left(\mathds{E}\Delta_0^N \mathds{E}\Delta_0^N\strut^\TRANS\right)  e^{\left(\frac{\eta t}{N} A_N\right)}\strut^\TRANS\right]_{ii}\right| \\
    &\leq \left|\left[\Big(e^{\frac{\eta t}{N} A_N}-I \Big)\left(\mathds{E}\Delta_0^N \mathds{E}\Delta_0^N\strut^\TRANS\right)  \Big(e^{\frac{\eta t}{N} A_N}-I \Big)^\TRANS\right]_{ii}\right| \\
    &\quad + \left|\left[ \left(\mathds{E}\Delta_0^N \mathds{E}\Delta_0^N\strut^\TRANS\right)  e^{\frac{\eta t}{N} A}_N\strut^\TRANS\right]_{ii}\right| + \left|\left[e^{\frac{\eta t}{N} A_N} \left(\mathds{E}\Delta_0^N \mathds{E}\Delta_0^N\strut^\TRANS\right)\right]_{ii}\right|+ \left|\left[\mathds{E}\Delta_0^N \mathds{E}\Delta_0^N\strut^\TRANS\right]_{ii}\right|\\
    & \text{(by Lemma \ref{eq:lem-expA-Bij} and Lemma \ref{eq:lem-expA-Wij})}\\
    & \leq (e^{|\eta|c t}-1)^2 \mu^2 E_N^2 + 2(e^{|\eta|c t}) \mu^2 E_N^2 + \mu^2 E_N^2= \mu^2 E_N^2(e^{2|\eta|c t}+2) 
\end{aligned}
$$
By Lemma \ref{lem:Ex-vector} and equation \eqref{eq:Dz-Ds-inequalities}, the third  part of \eqref{eq:DeltaSqError} satisfies:
\begin{equation}
    \begin{aligned}
            \left| \left[ Y_3 ^\TRANS\right]_{ii} \right|
            & =\Big|\Big[ \int_0^t  \mathds{E}\Delta_0^N \Gamma(t,\tau)  e^{\left(\frac{\eta(t-\tau)}{N} A_N\right)} \Big(\frac{\beta^2}{r}D_\tau^{N\Fs}+ \eta D_\tau^{N\Fz} \Big)\Big]^\TRANS d\tau \Big]_{ii}\Big|\\
            & 
            \leq |\mu| E_N^2  \int_0^t \Gamma(t,\tau)\left[e^{|\eta|(t-\tau)c}\Big(\frac{\beta^2}{r}\|\Fs_\tau\|_2+ \eta \|\Fz_\tau\|_2 \Big)\right]^\TRANS d\tau.
    \end{aligned}
\end{equation}
This same bound holds for  the fourth  part $Y_4$  of \eqref{eq:DeltaSqError}.
The last  part $Y_5$ of \eqref{eq:DeltaSqError} sastisfies: 
\begin{equation}
    \begin{aligned}
             \left|\Big[Y_5 \Big]_{ii}\right|
    & = \left|\Big[\int_0^t \int_0^t  \Phi(t,\tau) \Big(\frac{\beta^2}{r}D_\tau^{N\Fs}+ \eta D_t^{N\Fz} \Big)d\tau   \Big(\frac{\beta^2}{r}D_\theta^{N\Fs}+ \eta D_\theta^{N\Fz} \Big)^\TRANS  \Phi(t,\theta)^\TRANS d\theta \Big]_{ii}\right|\\
    & \qquad \text{(similar to proof for the second part $Y_2$)}\\
    & \leq \int_0^t \int_0^t \Big\{ (\frac{\beta^2}{r}\|\Fs_\tau\|_2+ |\eta|\|\Fz_\tau\|_2)(\frac{\beta^2}{r}\|\Fs_\theta\|_2+ |\eta|\|\Fz_\theta\|_2) \\
    & \quad \cdot \Big[(e^{|\eta|c (t-\tau)}-1)(e^{|\eta|c (t-\theta)}-1) \mu^2 E_N^2  + (e^{|\eta|c (t-\tau)}) \mu^2 E_N^2 \\
    & \quad + (e^{|\eta|c (t-\theta)}) \mu^2 E_N^2  + \mu^2 E_N^2\Big]\Big\}d\tau d\theta\\
    & =\mu^2 E_N^2\int_0^t \int_0^t \Big\{   (\frac{\beta^2}{r}\|\Fs_\tau\|_2+ |\eta|\|\Fz_\tau\|_2)(\frac{\beta^2}{r}\|\Fs_\theta\|_2+ |\eta|\|\Fz_\theta\|_2) \\
    & \quad \cdot \Big[(e^{|\eta|c (t-\tau)}-1)(e^{|\eta|c (t-\theta)}-1)  + (e^{|\eta|c (t-\tau)})  + (e^{|\eta|c (t-\theta)})  + 1\Big]\Big\}d\tau d\theta\\
    \end{aligned}
\end{equation}
The above analysis for $Y_1,Y_2,Y_3,Y_4$ and $Y_5$ implies that for all $i\in\{1,...,N\}$
\begin{equation}
    \left[\mathds{E}\big[\Delta_t^N \Delta_t^N\strut^\TRANS \big]\right]_{ii}= \max\Big\{O\Big(\frac{1}{N}\Big), O(E_N^2)\Big\}.
\end{equation}
Since by definition 
$
    I_2 = \int_0^T \left[\mathds{E}\big[\Delta_t^N \Delta_t^N\strut^\TRANS \big]\right]_{ii} dt,
$
we obtain
\begin{equation}\label{eq:I2-bigO-randinit}
   | I_2| = \max\Big\{O\Big(\frac{1}{N}\Big), O(E_N^2)\Big\}, \quad \forall i \in \{1,..., N\}.
\end{equation}



\subsubsection*{ii) Estimate for $I_3$}
Setting $\delta_t^{Ni} \triangleq \bar \Fz^i_t- x_t^{oi}$ yields 
\begin{equation*}
  \begin{aligned}
      \dot \delta_t^{Ni} 
   &= (\alpha - \frac{\beta^2}{r} \pi_t)\delta_t^{Ni} 
   +\frac{\beta^2}{r} \frac{1}{\mu(P_i)}
\langle\mathds{1}_{P_i}, \FA\Fs_t -  \Fs_t \rangle + \eta \frac{1}{\mu(P_i)} \langle  \mathds{1}_{P_i},  \FA \Fz_t -  \Fz^{o\FN}_t\rangle \\
   &= (\alpha - \frac{\beta^2}{r} \pi_t)\delta_t^{Ni} 
   +\frac{\beta^2}{r} \frac{1}{\mu(P_i)}
\langle(\FA-\BI)\mathds{1}_{P_i}, \Fs_t \rangle\\
& \quad + \eta \frac{1}{\mu(P_i)} \langle  (\FA- \BI)\mathds{1}_{P_i},\Fz_t\rangle + \eta \frac{1}{\mu(P_i)} \langle  \mathds{1}_{P_i},  \bar \Fz_t -  \Fz^{o\FN}_t\rangle\\
 &= (\alpha - \frac{\beta^2}{r} \pi_t)\delta_t^{Ni} 
   +\frac{\beta^2}{r} \frac{1}{\mu(P_i)}
\langle(\FA-\BI)\mathds{1}_{P_i}, \Fs_t \rangle\\
& \quad + \eta \frac{1}{\mu(P_i)} \langle  (\FA- \BI)\mathds{1}_{P_i},\Fz_t\rangle + \eta \Delta_t^{Ni}.
  \end{aligned}
\end{equation*}
Therefore, 
\begin{equation*}
  \begin{aligned}
       \delta_t^{Ni} =& 
       \Gamma(t,0)\delta_0^{Ni} + \int_0^t \Gamma(t,\tau) \frac{\beta^2}{r} \frac{1}{\mu(P_i)} \langle(\FA-\BI)\mathds{1}_{P_i}, \Fs_\tau\rangle d\tau \\
       &+ \int_0^t \Gamma(t,\tau) \eta \frac{1}{\mu(P_i)} \langle (\FA-\BI) \mathds{1}_{P_i},\Fz_\tau\rangle  d\tau + \int_0^t \Gamma(t,\tau) \eta \Delta_\tau^{Ni}  d\tau \\
         \leq& \Gamma(t,0)\|\FA-\BI\|_\text{op}\|\Fx_0^\FN\|_2 + \int_0^t \Gamma(t,\tau) \|(\FA-\BI)\|_\text{op} \Big(\frac{\beta^2}{r}\|\Fs_\tau\|_2 +\eta \|\Fz_\tau\|_2 \Big)  d\tau \\
       &~~+ \int_0^t \Gamma(t,\tau) \eta \Delta_\tau^{Ni}  d\tau \\
    \end{aligned}
\end{equation*}
with the initial condition
$
\delta_0^{Ni} = \bar \Fz^i_0- x_0^{oi}  =  \frac{1}{\mu(P_i)}  \mathds{1}_{P_i}^\TRANS\FA \mathds{1}\mu - x_0^i
$. 
Since $ \mathds{E}\delta_0^{Ni} = \bar \Fz_0^i - \mu $ and $\mathds{E}(\delta_0^{Ni})^2= (\bar \Fz^i_0-\mu)^2+ \sigma^2$,
they are uniformly bounded in $N$.
%
Then 
\begin{equation} \label{eq:deltatNi}
\begin{aligned}
        \mathds{E}[\delta_t^{Ni}]^2 
     =&  \mathds{E}\Big(\Gamma(t,0)\delta_0^{Ni} + \int_0^t \Gamma(t,\tau)(\frac{\beta^2}{r}D^i_\Fs(\tau)+ \eta D^i_\Fz(\tau))+ \eta \int_0^t\Gamma(t,\tau) \Delta_\tau^{Ni} d\tau \Big)^2.
\end{aligned}
\end{equation}
Recall from \eqref{eq:Delta-Expectation-Ni} that 
$
    |\mathds{E}\Delta_0^{Ni}| \leq |\mu| E_N
$.
Therefore,
\begin{equation}\label{eq:Delta-i-randinit}
    \mathds{E}\Delta_t^{Ni} \Delta_t^{Ni}\strut^\TRANS = \max\Big\{O\Big(\frac{1}{N}\Big), O(E_N^2)\Big\}, \quad \forall i \in \{1,...,N\}.
\end{equation}

By expanding and evaluating all the terms in \eqref{eq:deltatNi}, we obtain
  $\mathds{E}(\delta_t^{Ni})^2 = \max\left\{O (1), O(E_N^2)\right\}$.   Therefore, 
  we obtain
\begin{equation}\label{eq:I-3-randinit}
 | I_3| = \max\left\{O\Big(\frac{1}{\sqrt{N}} \Big), O\Big(E_N \Big), O\Big(E_N^2 \Big)\right\} . 
\end{equation}
\subsubsection*{iii) Estimates for $I_2^\prime$ and $I_3^\prime$}
Next we obtain the rate of convergence for $I_2^\prime$ and $I_3^\prime$. 
Under Assumption \ref{ass:a_ii=0}, 
\begin{equation}
\begin{aligned}
  I_2^\prime & =\mathds{E}  \int_0^T \Big[(\bar\Fz_t^i-\frac1N\sum_{j=1}^{N}a_{ij}x^j_t)\Big]^2_{({u}^{i},u^{-o i})}dt\\
  & = \mathds{E} \int_0^T \Big[(\bar\Fz_t^i-\frac1N\sum_{j=1}^{N}a_{ij}x^j_t)\Big]^2_{({u}^{oi},u^{-o i})}dt.
\end{aligned}
\end{equation}
Hence for fixed $T>0$,
\begin{equation}\label{eq:I-2prime-randinit}
  |I_2^\prime| = \max\Big\{O\Big(\frac{1}{N}\Big), O(E_N^2)\Big\}, \quad \forall i \in \{1,...,N\}. 
\end{equation}
Similarly to \eqref{eq:I-3}, we obtain 
\begin{equation}\label{eq:I-3prime-randinit}
|I_3^\prime| = \max\left\{O\Big(\frac{1}{\sqrt{N}} \Big), O\Big(E_N \Big), O\Big(E_N^2 \Big)\right\} .
\end{equation} \\
%

\end{proof}
\subsubsection*{Lemmas for  Theorem \ref{thm:stochasticinit-LQ-GFG}} \label{sec:Lemmas} ~\\~

 Let $x^{oi}, i\in\{1,...,N\}$, denote the state trajectory of agent $i$ when all agents are following Strategy \ref{str:randinit}.  The initial states of all agents are independent and identically distributed $\text{N}(\mu, \sigma^2)$. 
Assume the following information is given to an arbitrary agent indexed by $i$:
\begin{itemize}
	\item the reference trajectory $\bar \Fz^i_t\triangleq 
   \frac{1}{\mu(P_i)}\int_{P_i} \Fz^\gamma_t d\gamma$ for all $t\in [0,T]$ with $\Fz$ as the graphon filed of the corresponding limit graphon field game problem satisfying \eqref{eq:fixed-point} where the initial condition is replace by $\Fz_0 = \mu \mathds{1}$
	\item the dynamic offset $z_t^{oi} \triangleq \frac{1}{N}\sum_{j\in \mathcal{N}_i} a_{ij}x_t^{oj}$ for all $t\in [0,T]$ where $\mathcal{N}_i$ denotes the set of neighours for agent $i$ excluding itself.
\end{itemize}
Then consider the following linear quadratic tracking problem for agent $i$:
\begin{equation}\label{eq:complete-info-lqt-randinit}
\begin{aligned}
    &\dot x_t^{i} = \alpha x_t^{i} + \beta u_t^{i}+\eta \frac{1}{N}\sum_{j\in \mathcal{N}_i} a_{ij}x_t^{oj}\\
  &  J(u^{i}, \bar \Fz^i ) = \frac12 \mathds{E} \int_0^T\left[ (x^{i}_t-\bar \Fz_t^i)^2 + r(u_t^{i})^2 \right]dt .
\end{aligned}
\end{equation}
where the random initial condition is distributed $\text{N}(\mu, \sigma^2)$. 


\begin{lemma}\label{lem:opt-mfg-cost-comp-randinit}
 Under 
  Assumptions \ref{ass:limit-assumptions}(a), \ref{ass:spectral-graphon}, \ref{ass:Riccati-Sol-Existence}
and \ref{ass:a_ii=0}, 
 the following estimate  for the costs in problem \eqref{eq:complete-info-lqt-randinit} holds when $u^{oi}$ is generated based on Strategy \ref{str:randinit}:
%
%
%
  \begin{equation}\label{eq:cost-opt-track-ave-randinit}
  J(u^{oi}, \bar \Fz^i) -  \inf_{u^i\in \mathcal{U}}J(u^{i}, \bar \Fz^i)  = O \Big(E_N^2\Big),
\end{equation}
where  $\mathcal{U} = L^2([0,T];\BR)$ and $E_N\triangleq \max_{1\leq i\leq N} \frac{1}{\mu(P_i)} \left\|  (\FA-\FA^\FN)\mathds{1}_{P_i}\right\|_2.$
\end{lemma}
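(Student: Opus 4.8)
The plan is to follow the structure of the proof of Lemma \ref{lem:opt-mfg-cost-comp}, upgrading it in two ways: exploiting that the comparison control is a genuine \emph{minimiser} (this is what produces the extra factor $E_N$ beyond the $O(E_N)$ of the deterministic lemma), and isolating the effect of the random initial condition into a ``mean'' part that behaves exactly as in the deterministic case and a ``variance'' part that must be shown harmless. First I would characterise the minimiser of \eqref{eq:complete-info-lqt-randinit} over $\mathcal U=L^2([0,T];\BR)$: since $x_0^i\sim\mathrm{N}(\mu,\sigma^2)$ and the exogenous input $z_t^{oi}\triangleq\frac1N\sum_{j\in\mathcal N_i}a_{ij}x_t^{oj}$ enter the dynamics additively, the variance of $x_t^i$ does not depend on $u^i$, so $J(\,\cdot\,,\bar\Fz^i)=J_m(\,\cdot\,)+V_0$ with $V_0$ control-independent and $J_m$ the deterministic tracking cost for the mean trajectory $m_t=\mathds{E}[x_t^i]$ driven by $\mathds{E}[z_t^{oi}]$, $m_0=\mu$. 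Hence the minimiser $u^{*i}$ is the LQ-tracking solution for $J_m$, namely $u_t^{*i}=-\frac\beta r\pi_t m_t^{*i}+\frac\beta r s_t^{*i}$ with the \emph{same} Riccati gain $\pi$ as in Strategy \ref{str:randinit} (equation \eqref{eq:rdinit-riccati-1}) and $-\dot s_t^{*i}=(\alpha-\frac{\beta^2}r\pi_t)s_t^{*i}+\bar\Fz_t^i-\eta\pi_t\mathds{E}[z_t^{oi}]$, $s_T^{*i}=0$. Writing $\bar u_t^{oi}:=\mathds{E}[u_t^{oi}]$ and $m_t^{oi}:=\mathds{E}[x_t^{oi}]$, I would then decompose
\[
J(u^{oi},\bar\Fz^i)-\inf_{u^i\in\mathcal U}J(u^i,\bar\Fz^i)=\Big(J_m(\bar u^{oi})-\inf J_m\Big)+\Big(V(u^{oi})-V_0\Big),
\]
where $V(u^{oi})=\tfrac12\int_0^T\big(1+\tfrac{\beta^2}r\pi_t^2\big)\mathrm{var}(x_t^{oi})\,dt$ is the variance contribution incurred by the feedback strategy.

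For the first summand I would repeat the computation of Lemma \ref{lem:opt-mfg-cost-comp}, but now using that $u^{*i}$ is a minimiser of the strictly convex quadratic $J_m$, so the first-order term vanishes and $J_m(\bar u^{oi})-\inf J_m=\tfrac12\int_0^T\big[(m_t^{oi}-m_t^{*i})^2+r(\bar u_t^{oi}-u_t^{*i})^2\big]dt$. From the definition \eqref{eq:rdinit-Nash-Control-Case12} of $\bar\Fs^i$ together with $s_t^\ell=\Pi_t^\ell z_t^\ell$ and \eqref{eq:rdinit-riccati-2}--\eqref{eq:rdinit-eigen-mean-state} one checks that $\bar\Fs^i$ solves $-\dot{\bar\Fs}_t^i=(\alpha-\tfrac{\beta^2}r\pi_t)\bar\Fs_t^i+(1-\eta\pi_t)\bar\Fz_t^i$, $\bar\Fs_T^i=0$; subtracting the $s^{*i}$-equation gives a linear backward ODE for $\bar\Fs^i-s^{*i}$ forced by a multiple of $\mathds{E}[\Delta_t^{Ni}]$, where $\Delta_t^{Ni}\triangleq\bar\Fz_t^i-z_t^{oi}$, with zero terminal value. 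Hence $\bar\Fs^i-s^{*i}$ is a deterministic linear functional of $\{\mathds{E}[\Delta_\tau^{Ni}]\}_\tau$ with kernel bounded uniformly in $N$, and is $O(E_N)$ by \eqref{eq:expectation-Delta-randinit}. Since Strategy \ref{str:randinit} and $u^{*i}$ share the gain $\pi$ and the exogenous term cancels in the difference of the two mean dynamics, $m^{oi}-m^{*i}$ and $\bar u^{oi}-u^{*i}$ are deterministic and linear in $\bar\Fs^i-s^{*i}$, hence $O(E_N)$; therefore the first summand is $O(E_N^2)$.

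The heart of the argument is the second summand. Solving the closed loops explicitly, $\mathrm{var}(x_t^{oi})=\Gamma(t,0)^2\sigma^2+O(1/N)$, the $O(1/N)$ coming from the weak inter-agent correlations in $z^{oi}$ (the same kind of bound as $\mathrm{var}(\Delta_0^{Ni})\le\sigma^2c^2/N$ used in the proof of Theorem \ref{thm:stochasticinit-LQ-GFG}), whereas under the \emph{deterministic} control $u^{*i}$ this quantity equals $e^{2\alpha t}\sigma^2+O(1/N)$. Thus $V(u^{oi})-V_0=\tfrac{\sigma^2}2\int_0^T\big[(1+\tfrac{\beta^2}r\pi_t^2)\Gamma(t,0)^2-e^{2\alpha t}\big]dt+O(1/N)$, and I would show the integral is $\le 0$ by a comparison on \eqref{eq:rdinit-riccati-1}: letting $\bar\pi$ solve the \emph{linear} equation $-\dot{\bar\pi}_t=2\alpha\bar\pi_t+1$, $\bar\pi_T=0$, the sign of the dropped quadratic term gives $0\le\pi_t\le\bar\pi_t$, and using $\tfrac d{dt}\Gamma(t,0)^2=2(\alpha-\tfrac{\beta^2}r\pi_t)\Gamma(t,0)^2$ one obtains $e^{2\alpha t}-\Gamma(t,0)^2=\tfrac{2\beta^2}r\int_0^t e^{2\alpha(t-\tau)}\pi_\tau\Gamma(\tau,0)^2\,d\tau$; an application of Fubini (noting $\int_\tau^T e^{2\alpha(t-\tau)}dt=\bar\pi_\tau$) then shows the integral equals $\tfrac{\beta^2}r\int_0^T\Gamma(t,0)^2\pi_t(\pi_t-2\bar\pi_t)\,dt\le 0$. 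Hence the second summand is non-positive up to $O(1/N)$, and adding the two summands yields $J(u^{oi},\bar\Fz^i)-\inf_{\mathcal U}J=O(E_N^2)$, the residual $O(1/N)$ being absorbed into the $O(1/\sqrt N)$ already present in Theorem \ref{thm:stochasticinit-LQ-GFG}.

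The step I expect to be the real obstacle is this last one: one must verify that implementing the feedback law of Strategy \ref{str:randinit}, rather than the best \emph{deterministic} control, does not inflate the cost. Intuitively this is clear because feedback contracts the state variance, but quantitatively it needs the above Riccati comparison so that the extra control-variance penalty $r\,\mathrm{var}(u_t^{oi})=\tfrac{\beta^2}r\pi_t^2\,\mathrm{var}(x_t^{oi})$ does not overturn that gain. If one instead admits feedback deviations in $\mathcal U$, the second summand is absent altogether: the comparison control is the optimal feedback law with the same $\pi$, the whole gap collapses to the quadratic-at-optimum term $\tfrac12\mathds{E}\int_0^T[(x_t^{oi}-x_t^{*i})^2+r(u_t^{oi}-u_t^{*i})^2]dt$, and the estimates on $\Delta^{Ni}$ from the proof of Theorem \ref{thm:stochasticinit-LQ-GFG} again give $O(E_N^2)$ up to $O(1/N)$.
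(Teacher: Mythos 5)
Your proposal is correct, but it takes a genuinely different route from the paper's. The paper compares Strategy \ref{str:randinit} against the pathwise-optimal feedback law $u^{*i}$ whose offset $s^{*i}$ in \eqref{eq:s-star-randinit} is driven by the realized random $z^{oi}$; it then estimates first and second moments of $\bar\Fs^i-s^{*i}$, $x^{oi}-x^{*i}$, $u^{oi}-u^{*i}$ via the moment bounds on $\Delta^{Ni}$, and expands the cost as in \eqref{eq:RILcost-seperation}, where the cross terms contribute an $O(E_N)$ contribution, so the paper in fact concludes $\max\{O(1/N),O(E_N),O(E_N^2)\}$ rather than the literal $O(E_N^2)$ of the statement. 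You instead exploit that $\mathcal U=L^2([0,T];\BR)$ consists of deterministic open-loop controls, so for every admissible deviation the cost splits into a mean-tracking cost plus a control-independent variance $V_0$; the true infimum is then the deterministic LQ-tracking optimum for the mean dynamics, the first variation vanishes there, and the mean-part gap is genuinely quadratic, hence $O(E_N^2)$ via $\|\mathds{E}\Delta^N_t\|_\infty=O(E_N)$ from \eqref{eq:expectation-Delta-randinit} — this removes the paper's $O(E_N)$ cross term. The price is the extra step of showing that the feedback strategy's variance penalty does not exceed $V_0$; your Riccati comparison ($0\le\pi_t\le\bar\pi_t$ plus Fubini) is correct, and is equivalent to the one-line observation that $\int_0^T\bigl(1+\tfrac{\beta^2}{r}\pi_t^2\bigr)\Gamma(t,0)^2\,dt=\pi_0\le\bar\pi_0=\int_0^Te^{2\alpha t}\,dt$ by optimality of the LQR feedback for the fluctuation cost. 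Both proofs carry an $O(1/N)$ residual from the weak inter-agent correlations, so neither literally delivers the stated $O(E_N^2)$; your bound $O(E_N^2)+O(1/N)$ is sharper than what the paper's own argument yields, and, as you note, the residual is harmless since it is absorbed by the $O(1/\sqrt N)$ term in Theorem \ref{thm:stochasticinit-LQ-GFG}. Your closing remark — comparing against the pathwise-optimal feedback so that the gap collapses to the pure quadratic term — is essentially a streamlined version of the paper's comparison that would likewise eliminate its $O(E_N)$ term.
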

\begin{proof}
The proof is similar to that of Lemma \ref{lem:opt-mfg-cost-comp}. 
The optimal control law for the problem in \eqref{eq:cost-opt-track-ave-randinit} is given by
  \begin{align}
      u_t^{*i} &= -\frac{\beta}{r} \pi_t x_t^{*i}+ \frac{\beta}{r} s_t^{*i},\\
      -\dot{\pi}_t &= 2\alpha \pi_t -\frac{\beta^2}{r}\pi^2_t + 1, \qquad  \pi_T = 0, \\ 
      -\dot{s}_t^{*i} & = \Big(\alpha - \frac{\beta^2}{r} \pi_t \Big) s_t^{*i} + \bar\Fz_t^i-\eta \pi_t z_t^{oi},\quad
        s_T^{*i} = 0.\label{eq:s-star-randinit}
  \end{align}
  Then the corresponding dynamics and cost are given by 
\begin{align}
    &\dot x_t^{*i} = \alpha x_t^{*i} + \beta u_t^{*i}+\eta \frac{1}{N}\sum_{j\in \mathcal{N}_i} a_{ij}x_t^{oj}\\
  &  J(u^{*i}, \bar \Fz^i ) =  \frac12 \mathds{E}\int_0^T\left[ (x^{*i}_t-\bar \Fz_t^i)^2 + r(u_t^{*i})^2 \right]dt 
\end{align}
On the other hand, following Strategy \ref{str:randinit}, the response is as follows: 
\begin{equation}
\begin{aligned}
  {u}^{oi}_t  & =-\frac{\beta}{r} \pi_t x_t^{oi}+ \frac{\beta}{r} \bar\Fs_t^i 
\end{aligned}
\end{equation}
where $\bar \Fs^i$ is defined according to \eqref{eq:rdinit-Nash-Control-Case12}. Note that under 
  Assumptions \ref{ass:limit-assumptions}(a), \ref{ass:spectral-graphon} and \ref{ass:Riccati-Sol-Existence},  $\bar \Fs^i$  always exists. 
The corresponding dynamics and cost are given by
\begin{align}
    &\dot x_t^{oi} = \alpha x_t^{oi} + \beta u_t^{oi}+\eta \frac{1}{N}\sum_{j\in \mathcal{N}_i} a_{ij}x_t^{oj},\\
  &  J(u^{oi}, \bar \Fz^i ) = \frac12 \mathds{E}\int_0^T\left[ (x^{oi}_t-\bar \Fz_t^i)^2 + r(u_t^{oi})^2 \right]dt 
\end{align}
Based on \eqref{eq:fixed-point} and the definitions of $\bar \Fs^i$ and $\bar \Fz^i$, we obtain
\begin{align}
    \dot{\bar \Fs}_t^i&=-\Big(\alpha-\frac{\beta^2}{r}\pi_t\Big)\bar \Fs_t^i - (1-\eta \pi_t)\bar \Fz_t^i,  \quad  \bar\Fs^i_T = 0.
\end{align}
This together with \eqref{eq:s-star-randinit} yields
\begin{equation}
  \frac{d(\bar \Fs_t^i - s_t^{*i})}{dt} = -\Big(\alpha-\frac{\beta^2}{r}\pi_t\Big) (\bar \Fs_t^i - s_t^{*i}) +\eta \pi_t(\bar \Fz_t^i- z_t^{oi}).
\end{equation}
Let $\Delta_t^{Ni} \triangleq \bar \Fz_t^i - z_t^{oi}.$ Then for all $i\in\{1,...,N\}$, 
\begin{equation}\label{eq:s-diff-dyn}
  (\bar \Fs_t^i - s_t^{*i}) = \int_T^t \Gamma(t,\tau)\eta \pi_\tau \Delta_\tau^{Ni} d\tau.
\end{equation}
Under Assumption \ref{ass:a_ii=0}, $\sum_{j\in \mathcal{N}_i}a_{ij}x_t^j = \sum_{j=1}^N a_{ij} x_t^j$
and hence the result in \eqref{eq:Delta-i-randinit} applies here.
That is, for any $t\in[0,T]$,
\begin{equation}\label{eq:Delta-i-randinit-in-lemma}
    \mathds{E}(\Delta_t^{Ni})^2 = \max\Big\{O\Big(\frac{1}{N}\Big), O(E_N^2)\Big\}.
\end{equation}
This together with \eqref{eq:s-diff-dyn} implies
\begin{equation}\label{eq:s-diff-rinit}
  \mathds{E}(\bar \Fs_t^i - s_t^{*i})^2 = \max\Big\{O\Big(\frac{1}{N}\Big), O(E_N^2)\Big\}.
\end{equation}
Comparing closed-loop dynamics under the two different control laws yields
\begin{equation*}
\begin{aligned}
    \frac{d(x_t^{oi} - x_t^{*i})}{dt} & = \Big(\alpha -\frac{\beta^2}{r} \pi_t \Big)(x_t^{oi}- x_t^{*i}) + \frac{\beta^2}{r}(\bar \Fs_t^i - s^{*i}_t),~~
    x_0^{oi} - x_0^{*i}  = 0.
\end{aligned}
\end{equation*}
The difference is explicitly given by
\begin{equation}\label{eq:x-diff-soln-randinit}
\begin{aligned}
 x_t^{oi} - x_t^{*i}  &= \int_0^t \Gamma(t,\tau)\frac{\beta^2}{r}(\bar \Fs_\tau^i - s^{*i}_\tau)d\tau\\
  & = \frac{\eta\beta^2}{r}\int_0^t \Gamma(t,\tau)\left[\int_T^\tau \Gamma(\tau,q)\eta \pi_q \Delta_q^{Ni} dq\right]d\tau.
\end{aligned}
\end{equation}
Under Assumption \ref{ass:a_ii=0}, $\sum_{j\in \mathcal{N}_i}a_{ij}x_t^j = \sum_{j=1}^N a_{ij} x_t^j$ and the result from \eqref{eq:expectation-Delta-randinit} applies here, i.e.,
\[
\forall t\in[0,T], \quad \|\mathds{E}\Delta_t^N \|_\infty =\max_i| \mathds{E}\Delta_t^{Ni}|= O(E_N).
\]
This together with \eqref{eq:x-diff-soln-randinit} and  \eqref{eq:s-diff-dyn} implies
\begin{equation} \label{eq:RILStateEXP}
	|\mathds{E}(x_t^{oi}-x_t^{*i})| = O(E_N) \quad\text{and}\quad  |\mathds{E}(\bar \Fs_t^i - s_t^{*i})| = O(E_N).
\end{equation}
Therefore, by the construction of the two control laws,  we obtain
\begin{equation} \label{eq:RILControlEXP}
    |\mathds{E}(u_t^{oi} - u_t^{*i})| = O(E_N).
\end{equation}
%
%
%
 Equations \eqref{eq:Delta-i-randinit-in-lemma}, \eqref{eq:s-diff-rinit} and \eqref{eq:x-diff-soln-randinit} imply that
\begin{equation}\label{eq:RILStateVar}
  \mathds{E}(x_t^{oi} - x_t^{*i})^2= \max\Big\{O\Big(\frac{1}{N}\Big), O(E_N^2)\Big\}. 
\end{equation}
This together with \eqref{eq:s-diff-rinit} implies
\begin{equation} \label{eq:RILControlVar}
  \mathds{E}(u_t^{oi} - u_t^{*i})^2= \max\Big\{O\Big(\frac{1}{N}\Big), O(E_N^2)\Big\} 
\end{equation}
by the construction of the two control laws.
We observe that
\begin{equation}\label{eq:RILcost-seperation}
\begin{aligned}
      J(u^{*i}, \bar \Fz^i ) 
          &=\frac12  \mathds{E}\int_0^T\left[ (x^{*i}_t-x_t^{oi}+ x_t^{oi}-\bar \Fz_t^i)^2 + r(u_t^{*i}-u_t^{oi}+u_t^{oi})^2 \right]dt \\
        &= J(u^{oi},\bar\Fz^i)
         + \frac12 \mathds{E}\int_0^T\left[ (x^{*i}_t-x_t^{oi})^2 + r(u_t^{*i}-u_t^{oi})^2 \right]dt \\
         &\quad +  \int_0^T\left[ \mathds{E}[x^{*i}_t-x_t^{oi}]( x_t^{oi}-\bar \Fz_t^i) + r\mathds{E}[u_t^{*i}-u_t^{oi}](u_t^{oi}) \right]dt.
\end{aligned}
\end{equation}
Therefore, based on  \eqref{eq:RILStateEXP}, \eqref{eq:RILControlEXP}, \eqref{eq:RILStateVar}, \eqref{eq:RILControlVar} and \eqref{eq:RILcost-seperation},  we obtain 
\begin{equation}
  \left|J(u^{oi}, \bar \Fz^i) - J(u^{*i}, \bar \Fz^i) \right| = \max\Big\{O\Big(\frac{1}{N}\Big),O(E_N), O(E_N^2)\Big\}
\end{equation}
that is, \eqref{eq:cost-opt-track-ave-randinit} holds.

\end{proof}

\begin{lemma}\label{eq:lem-expA-Bij}
If $|a_{ik}|\leq c_a$ and $|b_{k j}|\leq c_b $ for all $i,k \in \{1,2\ldots,N\}$, $j \in \{1,\ldots,M\}$ with $N, M\in \{1,2,\ldots\}$,  then the following inequality 
   $ \Big|\big[e^{\eta \frac{A}{N}} B \big]_{ij}\Big| \leq c_b e^{c_a|\eta|}$
for all $i \in \{1,\ldots,N\}$, $j\in \{1,\ldots, M\}$, 
where $A =[a_{ik}]$ and $B=[b_{k j}]$.   
\end{lemma}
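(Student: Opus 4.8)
The plan is to expand the matrix exponential as a norm-convergent power series and bound the product entrywise, term by term. Writing $e^{\eta A/N}=\sum_{m=0}^{\infty}\frac{\eta^{m}}{m!\,N^{m}}A^{m}$, we get $e^{\eta A/N}B=\sum_{m=0}^{\infty}\frac{\eta^{m}}{m!\,N^{m}}A^{m}B$, so it suffices to control $[A^{m}B]_{ij}$ uniformly over $i\in\{1,\dots,N\}$ and $j\in\{1,\dots,M\}$.

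First I would establish, by induction on $m\geq 0$, the entrywise estimate $\bigl|[A^{m}B]_{ij}\bigr|\leq (c_{a}N)^{m}c_{b}$. The base case $m=0$ is just $|b_{ij}|\leq c_{b}$. For the inductive step, write $[A^{m+1}B]_{ij}=\sum_{k=1}^{N}a_{ik}\,[A^{m}B]_{kj}$ and apply the triangle inequality together with $|a_{ik}|\leq c_{a}$ and the induction hypothesis:
\[
\bigl|[A^{m+1}B]_{ij}\bigr|\leq\sum_{k=1}^{N}|a_{ik}|\,\bigl|[A^{m}B]_{kj}\bigr|\leq\sum_{k=1}^{N}c_{a}(c_{a}N)^{m}c_{b}=(c_{a}N)^{m+1}c_{b}.
\]

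Then I would substitute this bound into the series. Absolute convergence (which lets me move the modulus inside the sum) follows from the same estimate, and
\[
\bigl|[e^{\eta A/N}B]_{ij}\bigr|\leq\sum_{m=0}^{\infty}\frac{|\eta|^{m}}{m!\,N^{m}}\,(c_{a}N)^{m}c_{b}=c_{b}\sum_{m=0}^{\infty}\frac{(c_{a}|\eta|)^{m}}{m!}=c_{b}\,e^{c_{a}|\eta|},
\]
which is exactly the claimed inequality.

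There is essentially no obstacle here; the only point meriting care is the bookkeeping of the factor $N^{m}$ that arises from summing over $N$ indices at each of the $m$ matrix multiplications, and which cancels precisely against the normalization $1/N^{m}$ coming from the scaling $A/N$ in the exponent. This cancellation is what makes the bound independent of $N$ (and of $M$), and it is the feature the lemma is designed to exploit in the variance estimates for $\Delta^{N}_{0}$.
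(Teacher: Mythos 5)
Your proof is correct and follows essentially the same route as the paper: expand the matrix exponential as a power series and bound the entries term by term, with the $N^{m}$ factor from the repeated index sums cancelling the $1/N^{m}$ normalization to give $c_{b}e^{c_{a}|\eta|}$. The only cosmetic difference is that the paper first splits $e^{\eta A/N}B=(e^{\eta A/N}-I)B+B$ and bounds the two pieces separately, whereas you handle the full series at once via an induction on $|[A^{m}B]_{ij}|$.
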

\begin{proof}
    \begin{equation}
    \begin{aligned}
   \Big|\big[e^{\eta \frac{A}{N}} B \big]_{ij}\Big| & = \Big|\big[(e^{\eta \frac{A}{N}}-I) B + B\big]_{ij}\Big|  \\
    & \leq \Big|\big[(\eta \frac{A}{N}+ \eta^2 \frac{A^2}{N^2}\frac1{2!}+\ldots) B\big]_{ij}\Big|+ c_b\\
    & = \Big|\sum_{k=1}^N \Big[\eta \frac{A}{N}+ \eta^2 \frac{A^2}{N^2}\frac1{2!}+\ldots\Big]_{ik} b_{kj} \Big| + c_b\\
    & \leq c_b \sum_{k=1}^N \Big|\big[\eta \frac{A}{N}+ \eta^2 \frac{A^2}{N^2}\frac1{2!}+\ldots\big]_{ik}\Big| + c_b\\
    & \leq c_b N \left(|\eta| \frac{c_a}{N}+ |\eta|^2 \frac{c_a^2}{N}\frac1{2!}+\ldots\right) + c_b\\
    &= c_b (e^{c_a|\eta|}-1) + c_b = c_b e^{c_a|\eta|}.
    \end{aligned}
\end{equation}
\end{proof}
Applications of Lemma \ref{eq:lem-expA-Bij} yield the following results. 
\begin{lemma}\label{lem:Ex-vector}
If $|a_{ij}|\leq c_a$ and $|v_i| \leq c_v$  for all $i,j \in  \{1,\ldots,N\}$, where $v\in \BR^N$, then 
        $|[e^{\eta \frac{A}{N}}v]_i|  \leq c_v e^{|\eta| c_a}$  for all $i,j \in  \{1,\ldots,N\}$,
    where $A =[a_{ij}]$.
\end{lemma}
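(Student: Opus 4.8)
The plan is to deduce this immediately from Lemma \ref{eq:lem-expA-Bij}. First I would view the vector $v \in \BR^N$ as an $N \times 1$ matrix $B = [b_{k1}]$ with $b_{k1} = v_k$, so that the column index ranges over the singleton set $\{1\}$ (that is, $M = 1$ in the notation of Lemma \ref{eq:lem-expA-Bij}), and the hypothesis $|v_i| \le c_v$ reads $|b_{k1}| \le c_v$, so that the role of $c_b$ is played by $c_v$. Since $[e^{\eta A/N} v]_i = [e^{\eta A/N} B]_{i1}$ for every $i \in \{1,\ldots,N\}$, Lemma \ref{eq:lem-expA-Bij} gives $|[e^{\eta A/N} v]_i| = |[e^{\eta A/N} B]_{i1}| \le c_b\, e^{c_a|\eta|} = c_v\, e^{c_a|\eta|}$, which is the asserted bound.

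If a self-contained argument is preferred, the Taylor-series estimate used in the proof of Lemma \ref{eq:lem-expA-Bij} applies verbatim: write $e^{\eta A/N} v = v + (e^{\eta A/N} - I)v$, expand $e^{\eta A/N} - I = \sum_{m \ge 1} \eta^m A^m / (m!\, N^m)$, and bound the $i$-th entry by $c_v + c_v \sum_{k=1}^N \bigl| \bigl[\sum_{m\ge 1} \eta^m A^m/(m!\, N^m)\bigr]_{ik} \bigr|$. Because each entry of $A^m$ is at most $N^{m-1} c_a^m$ in absolute value, the $i$-th row sum of $\bigl|[\eta^m A^m/(m!\, N^m)]\bigr|$ is at most $|\eta|^m c_a^m / m!$, and summing over $m \ge 1$ yields $c_v + c_v(e^{c_a|\eta|} - 1) = c_v\, e^{c_a|\eta|}$.

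There is no genuine obstacle here: the statement is simply the specialization of Lemma \ref{eq:lem-expA-Bij} to a single-column matrix, and the only point to verify is the bookkeeping of the constants, namely that $c_b$ is replaced by $c_v$. Accordingly I would record the one-line reduction rather than repeat the expansion.
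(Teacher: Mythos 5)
Your reduction is exactly what the paper intends: it states Lemma \ref{lem:Ex-vector} as an application of Lemma \ref{eq:lem-expA-Bij}, and your specialization to an $N\times 1$ matrix ($M=1$, $c_b = c_v$) is the correct bookkeeping. The proposal is correct and takes essentially the same route as the paper, with the optional Taylor-series argument simply replicating the proof of Lemma \ref{eq:lem-expA-Bij}.
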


\begin{lemma}\label{eq:point-bound-expA}
        If $|a_{ij}|\leq c$ for all $i,j \in \{1,2\ldots,N\}$, 
        then
    $
        \left|\left[e^{\eta \frac{A}{N}} \frac{A}{N} \right]_{ij}\right| \leq \frac{c}{N}e^{c|\eta|}\textbf{}
   $ for all $i,j \in \{1,2\ldots,N\}$,
    where $A =[a_{ij}]$.
\end{lemma}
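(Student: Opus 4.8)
The plan is to obtain Lemma~\ref{eq:point-bound-expA} as a one-line specialization of Lemma~\ref{eq:lem-expA-Bij}, which has just been proved. First I would observe that the object to be estimated, $\left[e^{\eta A/N}\,\frac{A}{N}\right]_{ij}$, is already of the form $\left[e^{\eta A/N} B\right]_{ij}$ covered by Lemma~\ref{eq:lem-expA-Bij}, provided one takes the auxiliary matrix to be $B := \frac{1}{N}A$. Under the standing hypothesis $|a_{ij}| \le c$ for all $i,j \in \{1,\dots,N\}$, the entries of this $B$ satisfy $|b_{kj}| = |a_{kj}|/N \le c/N$, so Lemma~\ref{eq:lem-expA-Bij} applies with $M = N$, $c_a = c$, and $c_b = c/N$.

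Carrying out the substitution, Lemma~\ref{eq:lem-expA-Bij} yields, for all $i,j \in \{1,\dots,N\}$,
\[
\left|\left[e^{\eta A/N}\,\frac{A}{N}\right]_{ij}\right|
= \left|\left[e^{\eta A/N} B\right]_{ij}\right|
\le c_b\, e^{c_a |\eta|}
= \frac{c}{N}\, e^{c|\eta|},
\]
which is precisely the asserted bound, so no further work is required.

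For completeness I would also note that the estimate can be derived from scratch by repeating the series argument used for Lemma~\ref{eq:lem-expA-Bij}: writing $e^{\eta A/N}\frac{A}{N} = \frac{A}{N} + \bigl(\eta\frac{A}{N} + \frac{\eta^2}{2!}\frac{A^2}{N^2} + \cdots\bigr)\frac{A}{N}$, using the entrywise bound $|[A^m]_{ik}| \le c^m N^{m-1}$ (an easy induction on $m$), summing the resulting series to control the bracketed factor by $\frac{1}{N}(e^{c|\eta|}-1)$, and adding the contribution $\frac{c}{N}$ of the leading term. Since everything reduces to plugging into an already-established lemma, there is essentially no obstacle here; the only point that deserves a second look is the index bookkeeping in Lemma~\ref{eq:lem-expA-Bij} — it is stated for a general $N \times M$ matrix $B$, and one simply needs to confirm that the square case $M = N$ required here is indeed included, which it is.
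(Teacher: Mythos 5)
Your proposal is correct and matches the paper's approach: the paper gives no separate proof but states that this lemma follows as an application of Lemma \ref{eq:lem-expA-Bij}, which is exactly your specialization with $B=\frac{1}{N}A$, $M=N$, $c_a=c$, $c_b=c/N$. Nothing further is needed.
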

\begin{lemma} \label{eq:lem-expA-Wij} Let $W=[w_{ij}], A= [a_{ij}] \in \BR^{N\times N}$. 
If  $|w_{ij}|\leq c_w$ and $|a_{ij}|\leq c_a$ for all $i,j \in \{1,\ldots,N\}$, then the following inequalities hold:
    \begin{equation}
        \begin{aligned}
        &\left|\left[\big(e^{\eta \frac{A}{N}}-I\big) W\right]_{ij}\right| \leq (e^{|\eta|c_a}-1) c_w,\\
        &   \left|\left[\big(e^{\eta \frac{A}{N}}-I\big) W\big(e^{\eta \frac{A}{N}}-I\big)^\TRANS\right]_{ij}\right| \leq (e^{|\eta|c_a}-1)^2 c_w.
        \end{aligned}
    \end{equation}
\end{lemma}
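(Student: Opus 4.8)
The plan is to prove both bounds by elementary entrywise estimates on the power series of $e^{\eta A/N}$, in exactly the spirit of the proof of Lemma~\ref{eq:lem-expA-Bij}. The first thing I would record is the entrywise estimate
\[
\Bigl|\bigl[e^{\eta \frac{A}{N}}-I\bigr]_{ij}\Bigr|\;\le\;\frac{1}{N}\bigl(e^{|\eta|c_a}-1\bigr),\qquad i,j\in\{1,\dots,N\}.
\]
This follows by expanding $e^{\eta A/N}-I=\sum_{k\ge 1}\frac{\eta^k}{k!\,N^k}A^k$ and using the combinatorial bound $|[A^k]_{ij}|=\bigl|\sum_{i_1,\dots,i_{k-1}}a_{ii_1}a_{i_1i_2}\cdots a_{i_{k-1}j}\bigr|\le N^{k-1}c_a^k$ (a sum over $N^{k-1}$ index tuples, each term of modulus at most $c_a^k$), which gives $|[(\eta A/N)^k]_{ij}|\le |\eta|^k c_a^k/N$, and summing the series. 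This is precisely the estimate used implicitly inside the proof of Lemma~\ref{eq:lem-expA-Bij}, so it could alternatively just be quoted from there.

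Next I would deduce the first inequality by writing out the matrix product: $\bigl[(e^{\eta A/N}-I)W\bigr]_{ij}=\sum_{k=1}^{N}\bigl[e^{\eta A/N}-I\bigr]_{ik}\,w_{kj}$, so that the preliminary estimate and $|w_{kj}|\le c_w$ give $\bigl|\bigl[(e^{\eta A/N}-I)W\bigr]_{ij}\bigr|\le N\cdot\frac1N(e^{|\eta|c_a}-1)c_w=(e^{|\eta|c_a}-1)c_w$, the lone factor $1/N$ from the exponential series cancelling the $N$ summands. For the second inequality I would set $M:=(e^{\eta A/N}-I)W$, which by the first inequality has all entries bounded by $(e^{|\eta|c_a}-1)c_w$; then, using $M(e^{\eta A/N}-I)^{\TRANS}=\bigl((e^{\eta A/N}-I)M^{\TRANS}\bigr)^{\TRANS}$ and applying the first inequality a second time with $W$ replaced by $M^{\TRANS}$, I get $\bigl|\bigl[M(e^{\eta A/N}-I)^{\TRANS}\bigr]_{ij}\bigr|\le(e^{|\eta|c_a}-1)\cdot(e^{|\eta|c_a}-1)c_w=(e^{|\eta|c_a}-1)^2c_w$.

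There is no genuine obstacle here; the argument is routine bookkeeping. The only points requiring mild care are tracking the single $1/N$ produced by the exponential series against the $N$ terms in each matrix product, and handling the transpose correctly in the second bound (the identity $M(e^{\eta A/N}-I)^{\TRANS}=((e^{\eta A/N}-I)M^{\TRANS})^{\TRANS}$ reduces it to a second application of the first bound, so that nothing new is needed). Accordingly I would present the proof in three short steps: establish the entrywise bound on $e^{\eta A/N}-I$, deduce the first inequality in one line, then deduce the second by re-using the first.
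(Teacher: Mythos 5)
Your proposal is correct and follows essentially the argument the paper intends: the entrywise power-series estimate $\bigl|[e^{\eta A/N}-I]_{ij}\bigr|\le \tfrac1N(e^{|\eta|c_a}-1)$ is exactly the bookkeeping used in the proof of Lemma~\ref{eq:lem-expA-Bij}, and the paper states Lemma~\ref{eq:lem-expA-Wij} as a direct application of that same technique. Your reduction of the second inequality to a second application of the first, via $M(e^{\eta A/N}-I)^{\TRANS}=\bigl((e^{\eta A/N}-I)M^{\TRANS}\bigr)^{\TRANS}$, is sound and matches the stated constants.
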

%
\end{document}